\newtheorem{theorem}{Theorem}
\newtheorem{lemma}{Lemma}
\theoremstyle{definition}
\theoremstyle{remark}
\newtheorem{assumption}{Assumption}
\newcommand{\etal}{\emph{et al.}}
\begin{document}

\title{Quantum-Enhanced Neural Exchange-Correlation Functionals}

\date{\today}

\begin{abstract}

Kohn-Sham Density Functional Theory (KS-DFT) provides the exact ground state energy and electron density of a molecule, contingent on the as-yet-unknown universal exchange-correlation (XC) functional. 
Recent research has demonstrated that neural networks can efficiently learn to represent approximations to that functional, offering accurate generalizations to molecules not present during the training process. 
With the latest advancements in quantum-enhanced machine learning (ML), evidence is growing that Quantum Neural Network (QNN) models may offer advantages in ML applications. 
In this work, we explore the use of QNNs for representing XC functionals, enhancing and comparing them to classical ML techniques.
We present QNNs based on differentiable quantum circuits (DQCs) as quantum (hybrid) models for XC in KS-DFT, implemented across various architectures. 
We assess their performance on 1D and 3D systems.
To that end, we expand existing differentiable KS-DFT frameworks and propose strategies for efficient training of such functionals, highlighting the importance of fractional orbital occupation for accurate results. 
We prove that, under specific assumptions, noise-induced errors do not accumulate across SCF iterations, and we confirm this result through numerical experiments.
Our best QNN-based XC functional yields energy profiles of the H$_2$ and planar H$_4$ molecules that deviate by no more than 1 mHa from the reference DMRG and FCI/6-31G results, respectively.
Moreover, they reach chemical precision on a system, H$_2$H$_2$, not present in the training dataset, using only a few variational parameters.
This work lays the foundation for the integration of quantum models in KS-DFT, thereby opening new avenues for expressing XC functionals in a differentiable way and facilitating computations of various properties. 

\end{abstract}

\author{Igor O. Sokolov}
\email{sokolov.igor.ch@gmail.com}
\affiliation{Pasqal, 2 av. Augustin Fresnel, Palaiseau 91120, France}

\author{Gert-Jan Both}
\affiliation{Pasqal, 2 av. Augustin Fresnel, Palaiseau 91120, France}

\author{Art D. Bochevarov}
\affiliation{Schr\"odinger Inc., 1540 Broadway, 24th Floor, New York, NY 10036, United States}

\author{Pavel A. Dub}
\affiliation{Schr\"odinger Inc., 9868 Scranton Rd Suite 3200, San Diego, CA 92121, United States}

\author{Daniel S. Levine}
\affiliation{Schr\"odinger Inc., 1540 Broadway, 24th Floor, New York, NY 10036, United States}

\author{Christopher T. Brown}
\affiliation{Schr\"odinger Inc., 1 Main St 11th Floor, Cambridge, MA 02142, United States}

\author{Shaheen Acheche}
\affiliation{Pasqal, 2 av. Augustin Fresnel, Palaiseau 91120, France}

\author{Panagiotis Kl. Barkoutsos}
\affiliation{Pasqal, 2 av. Augustin Fresnel, Palaiseau 91120, France}

\author{Vincent E. Elfving}
\affiliation{Pasqal, 2 av. Augustin Fresnel, Palaiseau 91120, France}

\maketitle

\section{\label{sec:intro} Introduction}

Density Functional Theory (DFT) has emerged as a major direction of research in modern quantum chemistry, enabling an efficient prediction of the electronic structure and properties of molecules and materials.~\cite{Parr-Yang-DFT, Koch-Holthausen-DFT, Engel-Dreizler-DFT} 
It has become the dominant method for solving the Schr\"{o}dinger equation in physics-based 
atomistic simulations.~\cite{Becke-DFT-JCP-2014, Jones-DFT-RMP-2015} 
DFT's effectiveness largely stems from the ability to express the ground state density of interacting electrons via a system of non-interacting electrons, the so-called Kohn-Sham system.
However, DFT can yield exact ground state energy and electron density only if the exact exchange-correlation (XC) functional is provided.
Despite the widespread applicability of the technique, the development of accurate and versatile DFT XC functionals remains a challenge.~\cite{Cohen-CR-2012, Medvedev-Science-2017, Kepp-Science-2017}  
Consequently, DFT faces problems such as underestimation of reaction barriers, band gaps~\cite{cohen2008insights}, and poor handling of degenerate states potentially due to delocalization and static correlation errors.~\cite{cohen2008fractional, becke2014perspective} 
The universal functional capable of solving the aforementioned problems and capturing the diverse electronic properties of molecules, solids and surfaces remains unknown. 
In its pursuit, hundreds of approximate XC functionals have been devised.~\cite{Mardirossian-MP-2017, Goerigk-zoo-PCCP-2017} 
Many of these XC functionals are tailored to certain types of chemical systems and properties thereof, and often require prior knowledge of the targeted data to ensure the success of the method.~\cite{Caldeweyher-D4-JCP-2019, Verma-TC-2020, Rohrdanz-LR-JCP-2009} 
An additional challenge when constructing such functionals lies in achieving a good balance between accuracy and computational efficiency.~\cite{Mohr-PCCP-2015, Bursch-ACIE-2022}

A number of studies~\cite{tozer1996exchange, tozer1997calculation, Brockherde-NatCom-2017, Nagai-npj-2020, Dick-NatCom-2020, Kalita-ACR-2021, Kirkpatrick-Sci-2021, ksr, Cuierrier-JCP-2022, Riemelmoser-JCTC-2023} propose the use of compelling capabilities of machine learning (ML) techniques as a solution for the efficient design of XC functionals that can be used for generic DFT calculations. 
ML algorithms, ranging from deep neural networks to sophisticated regression models, have demonstrated their potential to unlock hidden patterns in vast datasets of chemical systems, allowing for the estimation of a universal XC functional. 
This combination of quantum chemistry and artificial intelligence heralds a promising era for XC functional design with many demonstrations for different systems~\cite{kasim2021learning, nagai2022machine} indicating the increase of the reach of the DFT methods.
The majority of these ML-based techniques employ differentiable programming as a widely used paradigm in deep learning (DL), which involves optimizing parameters using gradient-based methods during the training process.~\cite{wu2023construct}
In particular, existing chemistry codes are being refactored to incorporate automatic differentiation.~\cite{zhang2022differentiable, casares2023graddft}
Training neural XC functionals within a differentiable DFT procedure was shown to provide superior generalization capabilities in 1D~\cite{ksr, kalita2022well} compared to the previous approach.~\cite{schmidt2019machine}
Extensions to 3D with variation of model inputs (e.g., density and its gradients), locality~\cite{kasim2021learning, casares2023graddft} and novel genetic approaches~\cite{ma2022evolving} followed, to name just a few.
Nevertheless, a universal neural XC functional has not been trained yet.~\cite{wu2023construct}
In an effort to meet this challenge, we propose a novel method for learning representations of XC functionals.

Recently, a novel DL framework within the broader scope of quantum machine learning (QML) has been introduced.~\cite{kyriienko2021solving} 
This framework utilizes differentiable quantum circuits (DQCs), capitalizing on the unique methodologies inherent in quantum computation.%
This category of circuits facilitates the construction of quantum neural networks (QNNs), enabling efficient function representation~\cite{abbas2021power} and derivative computation through analytic differentiation rules.~\cite{mitarai2018quantum, schuld2019evaluating, kyriienko2021generalized}
Various QNN models, which utilize different data encoding methods and distinct quantum circuit designs, have emerged.~\cite{cerezo2021variational}
A relevant quantum advantage in QML has been shown for situations where purely quantum data (quantum state) is used as the model input~\cite{cong2019quantum} and when the quantum model is learning from quantum experiments.~\cite{huang2022quantum}
Given the predominance of classical data in ML, the appeal of QNNs lies in their high effective dimension and rich expressivity,~\cite{abbas2021power} which are determined by the unitary mappings of data to quantum states.~\cite{schuld2021effect} 
However, caution is warranted, as commonly used metrics for analyzing QNNs -- such as expressivity and entangling capability -- may not directly correlate with a model's ability to perform well on a specific classical dataset.
Despite sharing many issues with classical neural networks, such as problems with barren plateaus,~\cite{larocca2022diagnosing} as well as unique issues such as the absence of efficient backpropagation,~\cite{abbas2023quantum} advances have been made in addressing these through specific techniques~\cite{beer2020training, sack2022avoiding} and efficient output extraction methods.~\cite{huang2020predicting, miller2022hardware} 
Moreover, systematic design principles for QNN architectures that are simultaneously trainable and provably non-classically simulable have been proposed~\cite{gil2024relation}.

Many studies have explored the combination of DFT with variational quantum algorithms through auxiliary Hamiltonians,~\cite{senjean2023toward} embedding schemes,~\cite{rossmannek2021quantum} or reduced density matrix methods,~\cite{schade2022parallel} among others. 
However, to the best of our knowledge, the direct application of QML for the functional representation of XC has not been explored. 
Recent developments have further reinforced the connection between DFT and QML.
Gaitan and Nori~\cite{gaitan2009density} established a mapping between qubit systems and fermionic lattice models, providing a foundation for employing ground-state and time-dependent DFT to study properties of large-scale quantum systems, including the determination of spectral gaps.
On the other hand, Baker and Poulin~\cite{baker2020density} caution against overly optimistic expectations for QML, demonstrating that no QML algorithm can discover the universal functional in polynomial time unless QMA-complete problems reduce to the complexity class BQP.
Within the probably approximately correct (PAC) framework of machine learning, QML can at best provide a polynomial reduction in oracle queries (training points) when learning the XC functional.
We take an optimistic perspective, analogous to the situation in classical ML where theoretical arguments once predicted that training deep-learning models (e.g., due to numerous local minima) would not succeed.
Nonetheless, the remarkable success of large language models has demonstrated that practical performance can exceed pessimistic theoretical limitations.
Therefore, in this work, we investigate the synergy between QML and DFT for designing XC functionals.~\cite{patent}
As suggested in ref.~\cite{schuld2022quantum}, our methodology incorporates quantum models as part of larger classical ML pipelines.
The goal is to use techniques arising from the QML domain to process empirical data from DFT (or general quantum chemistry) calculations within the supervised ML approach to create more robust and versatile XC functionals that can be applied to a wider range of problems in the quantum chemistry domain.

Our main contributions are:
\begin{itemize}
\item[(\textbf{i})] Expanding of KS-DFT frameworks for quantum XC functional training and implementing in an open-source package QEX~\cite{qex2025pasqal} written in JAX~\cite{jax2018github}.
\item[(\textbf{ii})] Demonstrating the importance of fractional occupation in 3D KS-DFT for improving accuracy.
\item[(\textbf{iii})] Introducing QNNs models for XC representation, showcasing efficiency in 1D and 3D systems.
\item[(\textbf{iv})] Showcasing that quantum models are at least on par with classical models in terms of accuracy and parameter-efficiency on the tested molecular systems.
\item[(\textbf{v})] Providing error bounds for the KS-DFT self-consistent loop under reasonable assumptions in noisy settings, e.g., when using noisy QNN-based XC functionals.
\end{itemize}
This work is organized as follows: 
In Sec.~\ref{sec:theory}, we give the theoretical background of QML for KS-DFT and define our models for XC functionals.
In Sec.~\ref{sec:methods}, we provide details of our framework implementation.
In Sec.~\ref{sec:results}, we present our empirical study of the proposed architectures, analyse the effect of quantum and classical layers within the same backbone architecture, discuss the required quantum resources and error bounds when considering typical hardware noise.
Finally, in Sec.~\ref{sec:conclusion}, we discuss our conclusions and give an outlook on the development of useful quantum-enhanced XC functionals.

\section{\label{sec:theory} Theory}

In this section, we introduce the KS-DFT approach, which serves as the foundation for training and evaluating our quantum machine-learned XC functionals.
We then delve into the utilization of quantum ML for training XC functionals.
Subsequently, we discuss how neural XC functionals are embedded in KS-DFT and explore their architectural details.

\subsection{Kohn-Sham Density Functional Theory}

Contemporary KS-DFT implementations~\cite{Koch-Holthausen-DFT} achieve the electronic ground state through a solution of the KS equations
\begin{equation}
\hat H[n](\mathbf{r}) \phi_i(\mathbf{r}) =\epsilon_i \phi_i(\mathbf{r}),
\label{eq:ks-eq}
\end{equation}
to self-consistency with respect to the electronic density $n(\mathbf{r})$.
The Hamiltonian operator is given by $\hat H[n](\mathbf{r})=-\nabla^2 / 2+v_\mathrm{KS}[n](\mathbf{r})
$
as a function of the electron density $n(\mathbf{r})$.
The latter is defined as the sum of probability densities over all $N_{\mathrm{occ}}$ occupied orbitals, 
$n(\mathbf{r}) =\sum^{N_{\mathrm{occ}}}_{i=1} \left|\phi_i(\mathbf{r})\right|^2,
$
where $\phi_i$ is a one-electron KS orbital with corresponding energy $\epsilon_i$.
The KS potential is given by $ v_{\mathrm{KS}}[n](\mathbf{r}) = v(\mathbf{r})+v_\mathrm{H}[n](\mathbf{r})+v_{\mathrm{XC}}[n](\mathbf{r}),$
which contains the external ionic Coulomb potentials $v(\mathbf{r})$, the Hartree potential $v_{\mathrm{H}}[n](\mathbf{r})$ , and the XC potential $v_{\mathrm{XC}}[n](\mathbf{r})$. The XC potential is the functional derivative of the XC energy, $E_{\mathrm{XC}}$, with respect to the electron density, i.e. $v_{\mathrm{XC}}[n](\mathbf{r})=$ $\delta E_{\mathrm{XC}}[n] / \delta n(\mathbf{r})$.
The XC energy $E_{\mathrm{XC}}$ is defined as 
\begin{equation}
E_{\mathrm{XC}}[n]=\int \mathrm{d} \mathbf{r} \, \epsilon_{\mathrm{XC}}[n](\mathbf{r}) n(\mathbf{r}),
\label{eq:e-xc}
\end{equation}
where $\epsilon_{\mathrm{XC}}[n](\mathbf{r})$ is the XC energy per electron. 
Finally, the total energy is obtained as the sum, 
$E[n]=T_{\mathrm{S}}[n]+V[n]+E_{\mathrm{H}}[n]+E_{\mathrm{XC}}[n]$,
of the non-interacting kinetic energy $T_{\mathrm{S}}[n]$, the external potential energy $V[n]$, the Hartree energy $E_{\mathrm{H}}[n]$ and the XC energy $E_{\mathrm{XC}}[n]$.

In practice, given a finite basis set $\{\xi_i\}_{i=1}^{N_b}$, a linear combination of $N_b$ basis functions, $\phi_i(\mathbf{r})= \sum^{N_b}_{j=1} p^{j}_{i} \xi_j(\mathbf{r})$, is optimized to produce suitable orbitals $\phi$.
This involves the solution of Roothan's equation defined as 
\begin{equation}
\mathcal{F}[n] \mathbf{p}_i=\epsilon_i S \mathbf{p}_i,
\label{eq:roothan}
\end{equation}
where $\mathbf{p}_i = (p^{1}_i,...,p^{N_b}_i)^{T}$, $S_{i j}=\int \mathrm{d} \mathbf{r} \xi_i(\mathbf{r}) \xi_j^*(\mathbf{r})$ is the overlap matrix and $\mathcal{F}[n]$ is the Fock matrix as a functional of the electron density profile $n$, with elements $\mathcal{F}_{i j}=\int \mathrm{d} \mathbf{r}\, \xi_i(\mathbf{r}) \hat{H}[n](\mathbf{r}) \xi_j^*(\mathbf{r}).$
Eq.~\eqref{eq:roothan} is then solved, e.g., numerically by an eigendecomposition of the Fock matrix to obtain the orbital coefficients $\{ \mathbf{p}_i\}$ that correspond to some density profile $n$. 
Self-consistency is achieved when the electron density no longer changes within a specified tolerance over consecutive $N_{\mathrm{KS}}$ KS iterations (\textit{i.e.}, solutions of Eq.\eqref{eq:roothan}).
The resulting energy $E$ can then be compared to validated experimental data, with the quality of the result dependent on the choice of the (neural) functional used to represent $E_{\mathrm{XC}}[n]$.
If the exact XC functional was known, Hohenberg-Kohn theorems~\cite{ks-theorem} guarantee that the self-consistent solution yields the exact ground-state energy and density.

\subsection{Embedding of neural XC functionals within KS-DFT \label{sec:inc-xc}}

In Fig.~\ref{fig:locality}, we summarize the approaches to incorporate our neural XC models of different localities in KS-DFT.
In particular, within the established Kohn-Sham framework, neural XC models are integrated using Eq.~\eqref{eq:e-xc}.
These models are represented by differentiable functions $f_{\theta}: (n, \mathbf{r}) \longrightarrow \epsilon_{\mathrm{XC}}(\mathbf{r})$ with parameters denoted by $\theta$.
Consequently, the machine-learned XC energy can be expressed in general as
$E_{\mathrm{XC}}[n] \approx E^{\theta}_{\mathrm{XC}}[n]=\int d \mathbf{r} f_{\theta}(n, \mathbf{r}) n(\mathbf{r}),$
where $E^{\theta}_{\mathrm{XC}}[n]$ approximates the exact $E_{\mathrm{XC}}[n]$.
In practice, an integration grid is used to compute the integral on a finite educated choice of $N_{\mathrm{grid}}$ grid points.
The XC energy is integrated as
\begin{equation}
E_{\mathrm{XC}}[n] \approx E^{\theta, 1}_{\mathrm{XC}}[n]=\sum^{N_{\mathrm{grid}}}_{i=1} w_i f_{\theta}(n, \mathbf{r}_i) n(\mathbf{r}_i),
\label{eq:ksr-xc}
\end{equation}
where the grid weights $\{w_i\}^{N_{\mathrm{grid}}}_{i=1}$ are computed for each grid type and chemical system. The values $\{\mathbf{r}_i\}^{N_{\mathrm{grid}}}_{i=1}$ denote the Cartesian coordinates associated with grid points.
We call this scheme the \textit{local energy embedding} since local XC energy density $\epsilon_{\mathrm{XC}}(\mathbf{r})$ is required (see Fig.~\ref{fig:framework-detailed}(a)).

\begin{figure}[ht!]
    \centering
    \includegraphics[width=1\columnwidth]{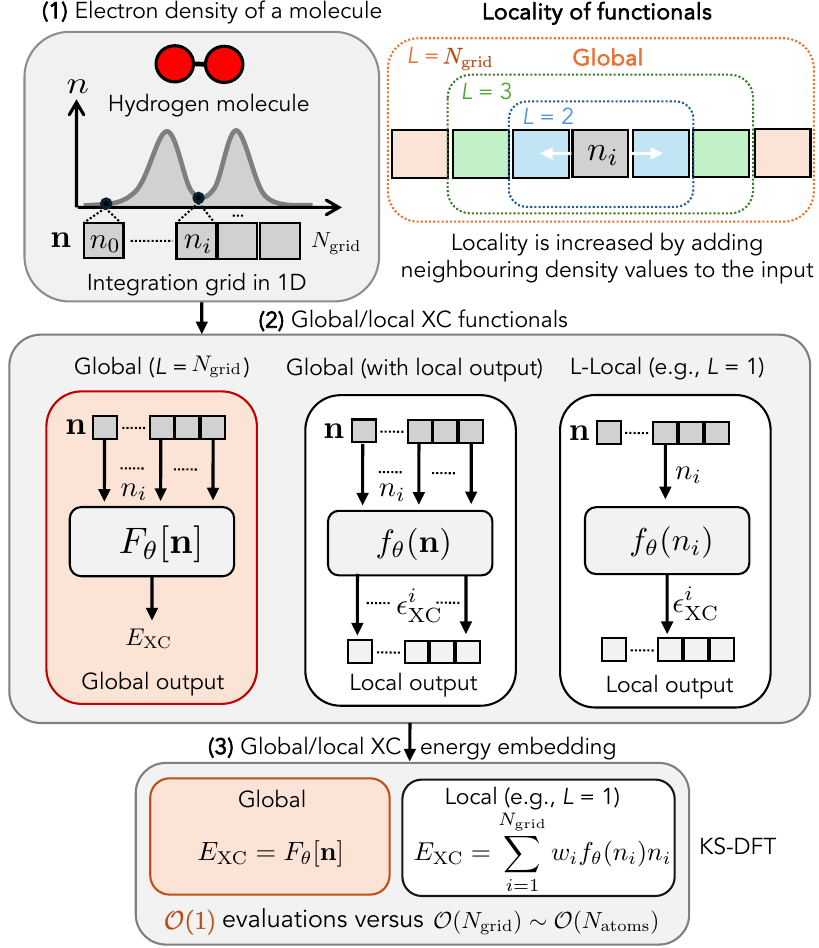}
    \caption{
    (1) Electron density of a molecule is expressed on an integration grid. 
    (2) The neural XC functional takes $L$ neighbouring values of density as input.
    (3) The output of such functional can be interpreted as local or total XC energy, with respective embeddings in KS-DFT (Eqs.~\eqref{eq:ksr-xc} and~\eqref{eq:gl-xc}).
    With the global approach, only $\mathcal{O}(1)$ calls to the functional is necessary to obtain XC energy instead of $\mathcal{O}(N_{\mathrm{grid}})$.
    }
    \label{fig:locality}
\end{figure}

Another approach is to represent the total XC directly with a neural functional $F_{\theta}[n]$, bypassing the integration as
\begin{equation}
E_{\mathrm{XC}}[n] \approx E^{\theta, 2}_{\mathrm{XC}}[n]= F_{\theta}[n],
\label{eq:gl-xc}
\end{equation}
where $F_{\theta}[n]$ learns implicitly the relation between the XC energy and the grid during training through
$\{n_i\}^{N_{\mathrm{grid}}}_{i=1}$ input density values.
Both equations, Eqs.~\eqref{eq:ksr-xc} and \eqref{eq:gl-xc}, provide distinct ways of embedding neural networks as XC functionals in KS-DFT frameworks.
We call this approach the \textit{global energy embedding} since the model requires the whole density profile to infer the total XC energy (see Fig.~\ref{fig:framework-detailed}(a)).
Note that despite avoiding grid integration in the global energy embedding, both techniques require the computation of the local XC potential $v_{\mathrm{XC}}[n](\mathbf{r})$ at every grid point.

Neural XC functionals are also characterized by various degrees of \textit{locality} that represents the number of density points provided as inputs.~\cite{xc-mlp-1d}
Locality was shown to greatly impact the ability to learn accurate XC functionals.~\cite{ksr}

In particular, \textit{$L$-local} ($L\in \llbracket 1, N_{grid}\rrbracket$) models are represented by the functions 
\begin{equation}
 f^{\mathrm{\mathrm{local}}}_{\theta}:\mathbb{R}^{L} \to \mathbb{R},\ \{n\}^{L} \rightarrow \epsilon^{\theta}_{\mathrm{XC, i}},
 \label{eq:loc}
\end{equation}
where the model function has access only to some $L$ neighboring density points (kernel size) at a time and is applied in succession on, e.g., $\lceil N_{\mathrm{grid}} / L \rceil$ points.~\cite{xc-mlp-1d} In the limit where $L=1$, the $1$-local model is represented by the functions $f^{local}_\theta: n_i \to \epsilon^\theta_{\text{XC}, i}$. Local models are embedded in DFT via Eq.~\eqref{eq:ksr-xc}.

\textit{Global} models are defined as a functional of the whole density,
\begin{equation}
 F^{\mathrm{global}}_{\theta}:\mathbb{R}^{N_{grid}} \to \mathbb{R},\ \{n\}_{i=1}^{N_{\mathrm{grid}}} \rightarrow E^{\theta}_{\mathrm{XC}},
 \label{eq:glob1}
\end{equation}
where the NN represents the whole integrated XC energy and is embedded in DFT via Eq.~\eqref{eq:gl-xc}.
Alternatively, global models can also be embedded in DFT via Eq.~\eqref{eq:ksr-xc} as
\begin{equation}
 f^{\mathrm{global}}_{\theta}:\mathbb{R}^{N_{grid}} \to \mathbb{R}^{N_{grid}},\  \{n\}_{i=1}^{N_{\mathrm{grid}}} \rightarrow \{ \epsilon^{\theta}_{\mathrm{XC, i}} \}_{i=1}^{N_{\mathrm{grid}}},
 \label{eq:glob2}
\end{equation}
where local XC energy density depends on the whole electron density.
Next, we present the concrete realisation of our XC models in classical and quantum settings.

\subsection{Architectures of neural XC functionals \label{sec:models}}

In this section, we propose a blueprint of basic model architectures that can couple classical and quantum layers.
Neural XC functionals are typically represented as compositions of differentiable functions, implemented through stacks of neural network (NN) layers.
\begin{equation}
    f_{\theta}(n, \mathbf{r}) = (f_{\theta_{N_l}}^{N_l} \circ f_{\theta_{N_l-1}}^{N_l-1}\circ ... \circ f^{1}_{\theta_{1}})(n, \mathbf{r}),
    \label{eq:layers}
\end{equation}
with parameters $\theta = (\theta_{N_l}, ..., \theta_{1})$ and $N_l$ layers.

\subsubsection{Classical architectures}

The most general neural network is a Multilayer Perceptron (MLP) which can approximate any function given enough layers~\cite{hornik1989multilayer}.
Since we are exploring the most general quantum models, MLPs are suitable candidates for comparative studies.

For our purpose, we implement MLPs with $N_{i}$ inputs, $N_l$ layers, and $N_{\mathrm{n}}$ neurons per layer. We introduce two MLP variants: Local MLP (LMLP) uses local density information at $N_{\mathrm{grid}}$ points to map:
\begin{equation}
f^{\mathrm{LMLP}}_{\theta}: n_i \rightarrow \epsilon^{\theta}_{\mathrm{XC}, i}\label{eq:lmlp}
\end{equation}
while Global MLP (GMLP) is a functional of the whole density, mapping:
\begin{equation}
     F^{\mathrm{GMLP}}_{\theta}: \{n\}_{i=1}^{N_{\mathrm{grid}}} \rightarrow E^{\theta}_{\mathrm{XC}}.
\end{equation}

Finally, another notable neural network is the Global Kohn-Sham Regularizer (GKSR) model, which combines local and global convolutional layers. GKSR incorporates inductive biases such as a self-interaction gate and a negativity transform, addressing common issues like self-interaction errors and imposing the negativity condition on the XC energy.~\cite{ksr}
In our implementation, the GKSR model is represented by the function:
\begin{equation}
 f^{\mathrm{GKSR}}_{\theta}: \{n\}_{i=1}^{N_{\mathrm{grid}}} \rightarrow \{ \epsilon^{\theta}_{\mathrm{XC, i}} \}_{i=1}^{N_{\mathrm{grid}}}
 \label{eq:}
\end{equation}.

In the following section, we discuss how XC functionals can be represented as quantum circuits.

\subsubsection{Quantum architectures}

A widely adopted strategy for QML models involves the utilization of a QNN, typically constructed using differentiable quantum circuits.~\cite{kyriienko2021solving}
In this work, our quantum XC functionals are based on such QNNs, which consist of three blocks:  a feature map ${{U}}_{f}$, an ansatz ${{U}}_{a}$ and a measurement of an operator ${\hat{O}}$.
Typically, a  $N_\mathrm{q}$-qubit quantum computer is initialized at some initial state $| 0 \rangle$. 
Here we consider the zero computational basis state on all qubits.
A feature map is used to embed the data (e.g., electron density $n$) in the Hilbert space, ${{U}}_{f}(n)|0\rangle$ in a differentiable manner for the computation of XC potential $v_{\mathrm{XC}}$.
The discretized density $\{n_i\}_{i=1}^{N_{\mathrm{grid}}}$ values are embedded in the feature map as gate parameters.
Then, an ansatz circuit ${{U}}_{a}({\theta_{a}})$ is used to act on the embedded data with local qubit operations and entangling gates that mediate interactions to represent the most general action on the data. 
Finally, the output of the model is extracted via projective measurements (shots) of an arbitrary cost operator $\hat{C} = \sum_{i=1}^{N_\mathrm{o}} \theta^{i}_c \hat{O}_i$, where $\theta_{c}$ are fixed or trainable parameters associated with Pauli operators $\hat{O} \in \{I, \hat X, \hat Y, \hat Z \}^{\otimes N_\mathrm{q}}$.
We assume the parameters $\theta_{c}$ are non-trainable, unless specified otherwise.
The cost operator is chosen to be the total magnetization, composed of local observables,
    $\hat{C} = \sum_{i=1}^{N_\mathrm{q}} \hat{Z_i},$
which was shown to exhibit, at worst, polynomially vanishing gradients if the depth scales as $\mathcal{O}(\text{log}(N_q))$, thereby mitigating cost-function-induced barren plateaus~\cite{cerezo2021cost}.
The representation of such model is given by a function 
\begin{equation}
f^{\mathrm{q}}_{\theta}(n)=\langle 0 |U^{\dagger}(n, {\theta}) \hat{C} U(n, {\theta})|0 \rangle,
\label{eq:qmodel}
\end{equation}
where 
    $U(n, {\theta}) = \prod^{N_l}_{l=1} \left( \prod^{N_d}_{d=1} {{U}}^{l,d}_{a}({\theta^{l,d}_{a}}){{U}}^{l}_{f}(n)\right)$
is the general representation of the model with $N_d$ ansatz layers and $N_l$ total repetition of ansatz and feature map, which implements the data re-uploading strategy.~\cite{perez2020data}
The repetitions of layers are akin to the number of layers of classical NNs and augment the expressivity of the model.
Given enough layers, classical and quantum models can, in principle, approximate a given function.~\cite{hornik1989multilayer, perez2020data, perez2021one}
In Appendix~\ref{app:qnn}, we give details of typical building blocks of QNNs.

To parallel the local MLP, we formulate a fully Local QNN (LQNN) designed to take a single density point $n_i$, from density vector $\mathbf{n}$, as input and yield the expectation of an observable:
\begin{equation}
 f^{\mathrm{LQNN}}_{\theta}: n_i \rightarrow \epsilon^{\theta}_{\mathrm{XC, i}}= \langle 0 |U^{\dagger}(n_i, {\theta}) \hat{C} U(n_i, {\theta})|0 \rangle,
 \label{eq:lqnn}
\end{equation}
defined through Eq.~\eqref{eq:loc} and integrated into DFT via Eq.~\eqref{eq:ksr-xc}.
We explore local QNNs with both product (Eq.~\eqref{eq:fm-prod}) and Chebyshev (Eq.~\eqref{eq:fm-cheb}) feature maps, named Local PrQNN and Local ChQNN, respectively.
The ansatz adheres exactly to the structure specified in Eq.~\eqref{eq:state}.
Note that for this model it is possible to consider $L$-local case,
$f^{\mathrm{LQNN}}_{\theta}: \{n_i\}^{L} \rightarrow \epsilon^{\theta}_{\mathrm{XC, i}}= \langle 0 |U^{\dagger}(n, {\theta}) \hat{C} U(n, {\theta})|0 \rangle,$ where a collection of $L$ neighboring densities is embedded in a single feature map. 
However, as shown in ref.~\onlinecite{ksr}, only global functional (e.g., $L=N_{\mathrm{grid}}$) is capable of best approximating exact XC functionals.
Hence, in the next section, we develop model architectures that embed the whole density.

\subsubsection{Feature maps for quantum models}\label{sec:feature_maps}

The choice of feature map is crucial for the expressiveness of quantum XC functionals. We consider encoding data in circuit parameters and amplitudes using general angle embeddings as specific cases of feature maps:
\begin{equation}
U_{f}(\mathbf{n}) = \bigotimes_{k=1}^{N_q} e^{-\frac{i}{2} \varphi(n_k) \hat{G}_k}\label{eq:featuremap},
\end{equation}
where $\hat{G}$ are generators (possibly multi-qubit operators), and $\varphi: \mathbb{R} \rightarrow \mathbb{R}$ is a data transformation function. We explore three established strategies for embedding data in quantum circuits. 
The choice of those strategies can be encouraged by viewing the feature maps selected for this work as Fourier series, which allows for a deeper understanding of their structure and behavior. 
According to Schuld~\etal,~\cite{schuld2021effect} such feature maps admit a Fourier series representation with a (potentially) exponential number of modes in the number of encoding gates or layers. 
This means that QNNs have an expressive power that may grow exponentially. 
It remains an open question whether the greater expressivity of QNNs provides a tangible advantage in representing XC functionals, even when such models cannot be reliably approximated classically for large $N_q$~\cite{landman2022classically}.
However, despite decades of research towards a universal XC functional, no single functional has proven accurate across all molecular systems.
In this context, such new model types could offer an intriguing alternative.

\subsubsection*{Product features map}

We explore the $1$-local product features map that embeds the data value $n_k$ in the rotation angles of $R_y$ gates of all qubits:
\begin{equation}
    U^{(1)}_{f,\mathrm{pr}}(n_k) = \bigotimes_{l=1}^{N_q} e^{-\frac{i}{2} n_k \hat{Y}_l}\label{eq:fm-prod}.
\end{equation}
Similarly to $L$-local models discussed earlier, an $L$-local product feature map can encode multiple data points within a single feature map. Nevertheless, this approach was not employed in the present work.

\subsubsection*{$1$-local Chebyshev feature map}
Chebyshev feature map utilizes a polynomial basis\cite{schuld2021effect}:
\begin{equation}
 U^{(1)}_{f,\mathrm{ch}}(\mathbf{n}) = \bigotimes_{k=1}^{N_q} e^{- i \cdot k\cdot\mathrm{arccos}(n_k) \hat{Y}_k}.\label{eq:fm-cheb}
\end{equation}
However, normalization of the density to a unit vecteor, \textit{i.e.} $\mathbf{n} \leftarrow \mathbf{n}/ || \mathbf{n} ||$, is necessary. 
It suppresses electron count information and thus renders this approach unsuitable for density-based models.

\subsubsection*{Amplitude feature map}
Finally, the amplitude feature map maps the density vector to a quantum state as $|\mathbf{n}\rangle = U^{(N_{\text{grid}})}_{f,\mathrm{a}}(\mathbf{n}) |0\rangle$. 
However, this method presents two important drawbacks: the density is also normalized to a unit vector which suppresses the information of the number of electrons and, in general, an efficient implementation algorithm for the unitary $U^{(N_{\text{grid}})}_{f,\mathrm{a}}(\mathbf{n})$ does not exist for arbitrary vectors $\mathbf{n}$~\cite{nakaji2022approximate}.

\subsubsection{Quantum-classical hybrid architectures}

We propose to leverage advantages of classical NNs and combine them with quantum models.
Notably, the number of elements $N_{\mathrm{grid}}$ in the electron density $\mathbf{n}$ scales linearly with the number of atoms.~\cite{dasgupta2017standard}
Even for small diatomic molecules, there are hundreds of grid points for the smallest grid density available with the best integration grids~\cite{dasgupta2017standard}. 
This poses an issue with the feature maps based on angle encoding since the number of qubits $N_{\mathrm{q}}$ required to embed the entire density \(\mathbf{n}\) equals its size $N_{\mathrm{grid}}$, posing severe requirements for potential implementation on near-term quantum computers.
Hence, we devise a model, Global QNN (GQNN), that is composed of two layers:
\begin{equation}
f^{\mathrm{GQNN}}_{\theta}(n) = (f_{\theta_{\mathrm{c}}}^{\mathrm{c}} \circ  f^{\mathrm{q}}_{\theta_{\mathrm{q}}})(n),
\label{eq:globalqnn}
\end{equation}
where $f^{\mathrm{c}}$ and $f^{\mathrm{q}}$ represent the classical and quantum layers with associated parameters $\theta_{\mathrm{c}}$ and $\theta_{\mathrm{q}}$, respectively. 
In principle, any differentiable classical layer can be used to map the outputs of $f^{\mathrm{q}}_{\theta_{\mathrm{q}}}$ to XC energy.
Here, the quantum layer consists of $L$-local QNNs defined as
\begin{equation}
 f^{\mathrm{q}}_{\theta}: \{n_i\}^{L} \rightarrow \epsilon_{\mathrm{XC}, i} = \langle 0 |U^{\dagger}(n, {\theta}) \hat{C} U(n, {\theta})|0 \rangle,
 \label{eq:fq}
\end{equation}
with a product feature map and an alternate ladder ansatz.
We first partition the input $\mathbf{n}$ in $N_{\mathrm{b}} = \lceil N_{\mathrm{grid}/} / L \rceil$ batches and then apply $f^{\mathrm{q}}_{\theta}$ to each batch.
This results in $N_{\mathrm{b}}$ outputs that become the inputs of a classical function $f_{\mathrm{c}}$ defined as
\begin{equation}
 f^{\mathrm{c}}_{\theta_{\mathrm{c}}}: \{\epsilon_{\mathrm{XC}, i}\}_{i=1}^{N_{\mathrm{b}}} \rightarrow E^{\theta}_{\mathrm{XC}},
 \label{eq:fc}
\end{equation}
where, for simplicity, the classical layer is the standard MLP with $N_{\mathrm{b}}$ input neurons and one output neuron (potentially with additional MLP layers). 
In general, any classical model architecture can be used to represent the classical layer.

\subsubsection{Quantum-inspired architectures}

Another possibility for designing XC functionals is to use simulated quantum models as quantum-inspired (Qi) solution.
We introduce the Global QiNN, which incorporates density through amplitude embedding (Eq.~\eqref{eq:featuremap}) and utilizes the standard ansatz outlined in earlier sections. 
It is represented by the function
\begin{equation}
 F^{\mathrm{QiNN}}_{\theta}: (\{n\}_{i=1}^{N_{\mathrm{grid}}}) \rightarrow E^{\theta}_{\mathrm{XC}},
\end{equation}
where amplitude embedding feature map is used.

Finally, we propose Global Quantum-inspired Convolutional NN (GQiCNN) architecture that implements the composition of quantum layers with angle embedding as
\begin{equation}
f^{\mathrm{GQiCNN}}_{\theta}(n) = (f_{\theta_{c}}^{c} \circ f_{\theta_{N_l}}^{q, N_l} \circ ... \circ  f^{q, 1}_{\theta_{q, 1}})(n),
\label{eq:qicqnn}
\end{equation}
where the product feature map (Eq.~\eqref{eq:fm-prod}) is throughout all the layers. 
We do not employ Chebyshev embedding (Eq.~\eqref{eq:fm-cheb}) due to detrimental normalization of the data required, which suppresses the information about the number of electrons in our architectures.
Similarly to the hybrid architecture (GQNN), we consider quantum layers that consist of $L$-local QNNs, defined in Eq.~\eqref{eq:fq},
with a product feature map and alternate ladder ansatz defined in Appendix~\ref{app:qnn}.
The input is partitioned in $\mathbf{n}$ in $N_{\mathrm{b}} = \lceil N_{\mathrm{grid}/} / L \rceil$ batches and then $f^{\mathrm{q}}_{\theta}$ is applied to each batch.
This is repeated $N_l$ times until the output is a single scalar, which represents $E_{\mathrm{XC}}$.
If not (i.e., odd number of inputs), the output of quantum layers is combined with an MLP as in Eq.~\eqref{eq:fc} to output $E_{\mathrm{XC}}$.
Note that, although all models in this work are simulated, we classify this subset (QiNN and QiCNN) as quantum-inspired because the amplitude encoding used in the Global QiNN (i.e., $|\mathbf{n}\rangle = \sum_{j=1}^{N_q} \mathbf{n}_j |j\rangle$, where $\mathbf{n}$ is assumed to be normalized) typically requires an exponential number of gates~\cite{zhang2021low, nakaji2022approximate}, rendering it impractical on current NISQ devices. 
The QiCNN model is implemented as a sequence of QNNs in which the inputs to each layer are the expectation values from the previous layer, making the model more susceptible to error accumulation, unsuited for current noisy architectures.

\section{\label{sec:methods} Methods}

In this section, we discuss the implementation of our approach in differentiable chemistry frameworks and strategies for efficient training of neural XC functionals.

\subsection{Implementation within differentiable chemistry frameworks \label{sec:framework}}

To evaluate the capabilities of quantum computers for machine learning XC functionals, we based our implementation on two differentiable frameworks.

Li \etal~proposed an important method to facilitate the generalization of neural XC functionals: using the KS iterative procedure as a regularizer.~\cite{ksr}
The novel idea is to compute the energy loss at every KS iteration and the electron density loss at the last KS iteration, and then to differentiate through $N_{\mathrm{KS}}$ iterations (solutions of Eq.~\eqref{eq:ks-eq}) which would enable the model to converge faster to the reference, i.e., in fewer KS iterations, while regularizing its parameters.
To allow this feature, a fully differentiable chemistry code in 1D in JAX~\cite{jax2018github} was developed~\cite{ksr} with the aim of demonstrating the regularization capacity of the framework.~\cite{ksr}
Since it reduces the number of dimensions (to 1D) and hence simplifies the problem, this is an ideal setup for investigation of the capacity of QNN-based models to represent XC functionals.  
We adopt this exact framework with minimal changes to hyperparameters for all our preliminary results in 1D (i.e., only linear molecules). 

However, realistic chemistry simulations require a full 3D treatment of molecular systems. 
To address this demand, Kasim \etal~proposed a 3D differentiable chemistry framework implemented entirely in PyTorch~\cite{paszke2017automatic}, which enabled accurate training of functionals for various diatomic systems.~\cite{kasim2021learning}
However, both specialised frameworks of Li \etal~and Kasim \etal~are not used by computational chemists on a regular basis.
Therefore, in this work, one of our goals is to contribute to an existing effort that focuses on making a popular chemistry framework, PySCF~\cite{sun2018pyscf}, differentiable. 
This differentiable framework is known as PySCF with Automatic Differentiation (PySCFAD)~\cite{zhang2022differentiable}. 
It is similar to Li's 1D framework, as it is also based on JAX, but offers the capability for an accurate treatment of chemical systems in 3D, making it suitable for a wide range of chemical simulations.
Working within the scope of the PySCFAD framework, we replaced the calls to the standard XC functionals in differentiable SCF calculations used in KS-DFT (see Sec.~\ref{sec:theory}) by (quantum) NNs.
This entails that we introduce necessary modifications to the already well-tested KS-DFT implementation.

To facilitate the training of neural XC functionals within KS-DFT in 3D, the spin components of the electron density can be constrained to be equal, i.e., $n_{\uparrow}(\mathbf{r}) = n_{\downarrow}(\mathbf{r})$ and $n(\mathbf{r}) = n_{\uparrow}(\mathbf{r}) + n_{\downarrow}(\mathbf{r})$, which defines the so-called spin-restricted case of KS-DFT (RKS-DFT).

In addition, standard RKS-DFT implementations fail to consistently represent the ground state density, preventing the training of accurate XC models.
Although the exact interacting ground-state wave function cannot, in general, be represented by a single-determinant wave function of a non-interacting KS system with binary orbital occupations (0 or 1), this is not a problem for the density itself. 
Still, RKS-DFT can fail numerically for open-shell systems (typically strongly correlated systems near dissociation), because the usual expression for the density as a sum over occupied orbitals becomes ill-defined: KS valence orbitals may be (near-)degenerate, which can lead in practice to convergence issues in (R)KS-DFT. This, in turn, necessitates fractional occupation of the KS system, which can no longer be described by a single Slater determinant but instead must be treated as an ensemble.
To address this limitation, especially evident in the challenging strongly-correlated regime, we adopt the algorithm of Chai.~\cite{frac-occ}
Other approaches can be employed with an increased computational cost.~\cite{cances2003quadratically}
Our choice was dictated by the simplicity of its integration in existing codes, with detailed proofs given in ref.~\onlinecite{frac-occ} and our implementation in Appendix~\ref{app:frac-occ}.

In summary, we extended two chemistry codes to incorporate QNN-based functionals within KS-DFT:
\begin{enumerate}
    \item For 1D cases, we utilized the KS-DFT implementation by Li~\etal~\cite{ksr}
    \item For 3D cases, we employed the differentiable KS-DFT implementation from PySCFAD~\cite{zhang2022differentiable}, to which we added the FO method developed by Chai~\cite{frac-occ}.
\end{enumerate}
The 1D framework simplifies the training of neural functionals by limiting spatial degrees of freedom, although it restricts the study to linear systems. 
This setup provides an ideal environment to explore quantum models in a simplified context. 
Conversely, the 3D framework permits the evaluation of similar systems within a comprehensive representation, allowing for a realistic analysis.
To provide a quick overview, Figs.~\ref{fig:master} and~\ref{fig:framework-detailed}(a) summarize our framework in the following steps:
\begin{enumerate}
    \item[(1)]  \textit{Generate dataset.} Electronic structure calculations are performed via the best available level of theory (e.g., Full Configuration Interaction (FCI)~\cite{helgaker2013molecular}, DMRG) within a basis set (e.g., 6-31G* or larger) to create a dataset consisting of a set of energies and densities $\{ (E^{i}_{\mathrm{ref}}, \mathbf{n}^{i}_{\mathrm{ref}}) \}_{i=1}^{N_{\mathrm{mol}}}$, with $\mathbf{n}^{i}_{\mathrm{ref}}$ evaluated on a selected grid of size $N_{\mathrm{grid}}$, for $N_{\mathrm{mol}}$ molecular geometries. 
    
    \item[(2)] \textit{Train models.} At every epoch, $N_{\mathrm{KS}}$ iterations are performed,  each calling a neural functional $f_{\theta}$ that outputs the XC energy $E_{\mathrm{XC}}$ via local (Eq.~\eqref{eq:ksr-xc}) or global embeddings (Eq.~\eqref{eq:gl-xc}).   
    The corresponding $v_{\mathrm{KS}}$ potential is obtained via (automatic) differentiation.
    Then, the Fock equations are solved to obtain new orbitals.
    This whole procedure continues until self-consistency. 
    Each KS iteration contributes to the loss $\mathcal{L}$, resulting in regularization.~\cite{ksr}
    At each epoch, trainable parameters $\theta$ are updated by a classical optimizer. 
    
    \item[(3)] \textit{Select the best model.} The converged parameters $\theta^{*}$ define the trained functional $f_{\theta^{*}}$ and are selected at one out of $N_{\mathrm{epochs}}$ epochs that provide the lowest loss on the validation set (i.e., selected geometries that are not present in the train and test sets).
    \item[(4)] \textit{Predict properties.} The selected XC functional $f_{\theta^{*}}$ can be used to evaluate molecular properties. The neural functional can be completely agnostic to the input size ($N_{\mathrm{grid}}$) by design, and hence used for any molecular system.
\end{enumerate}
Our implementation is available on GitHub as part of the QEX software package~\cite{qex2025pasqal}.
In the next section, we describe each step of the framework in detail.

\begin{figure}[ht!]
    \centering
    \includegraphics[width=0.8\columnwidth]{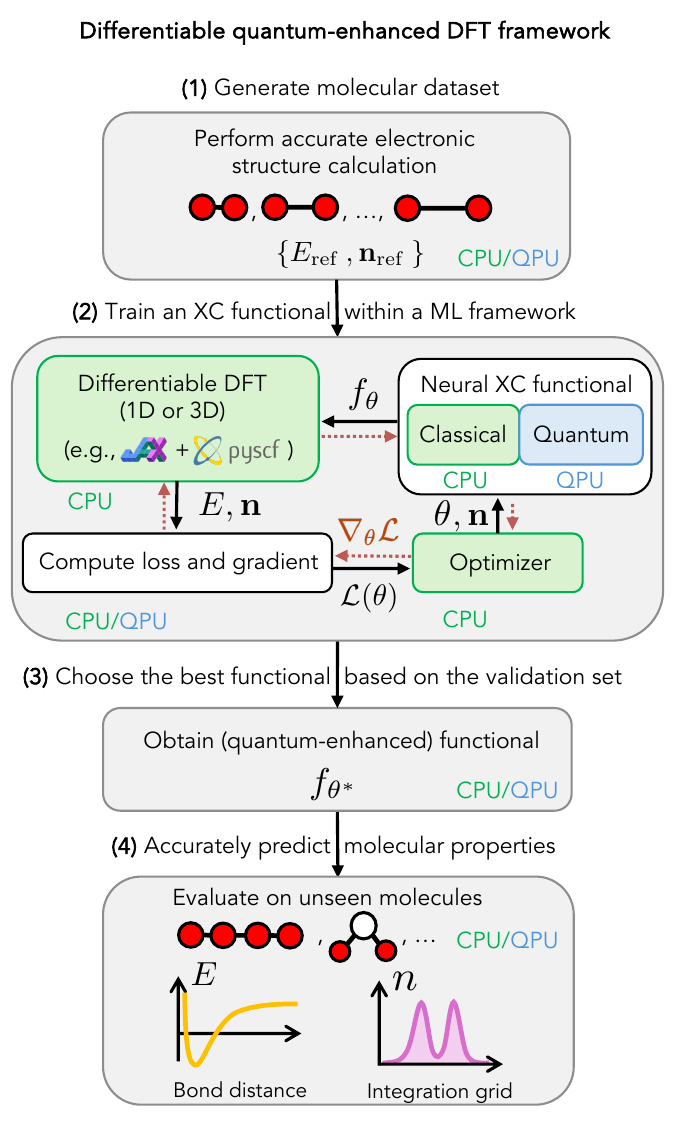}
    \caption{
    Differentiable chemistry framework for training quantum-enhanced XC functionals. 
    (1) The molecular dataset consists of reference energies $E_{\mathrm{ref}}$ and densities $\mathbf{n}_{\mathrm{ref}}$ generated by an accurate \textit{ab initio} method. 
    (2) A differentiable (Q)NN $f_{\theta}$ outputs the XC energy. 
    The corresponding potential is obtained via (automatic) differentiation.
    The Fock equations in KS-DFT are iteratively solved to obtain new orbitals until self-consistency. 
    Each KS iteration contributes to the loss $\mathcal{L}$, resulting in regularization.~\cite{ksr}
    Trainable parameters $\theta$ are updated by a classical optimizer. 
    (3) The converged parameters $\theta^{*}$ define the trained functional $f_{\theta^{*}}$, which is subsequently used to evaluate the properties of any molecule in stage (4).
    Brown arrows in (2) indicate the gradient flow through the procedure. 
    CPU and QPU denote classical (i.e., central and graphical) and quantum processing units.
    }
    \label{fig:master}
\end{figure}

\begin{figure*}
    \centering
    \includegraphics[width=2.0\columnwidth]{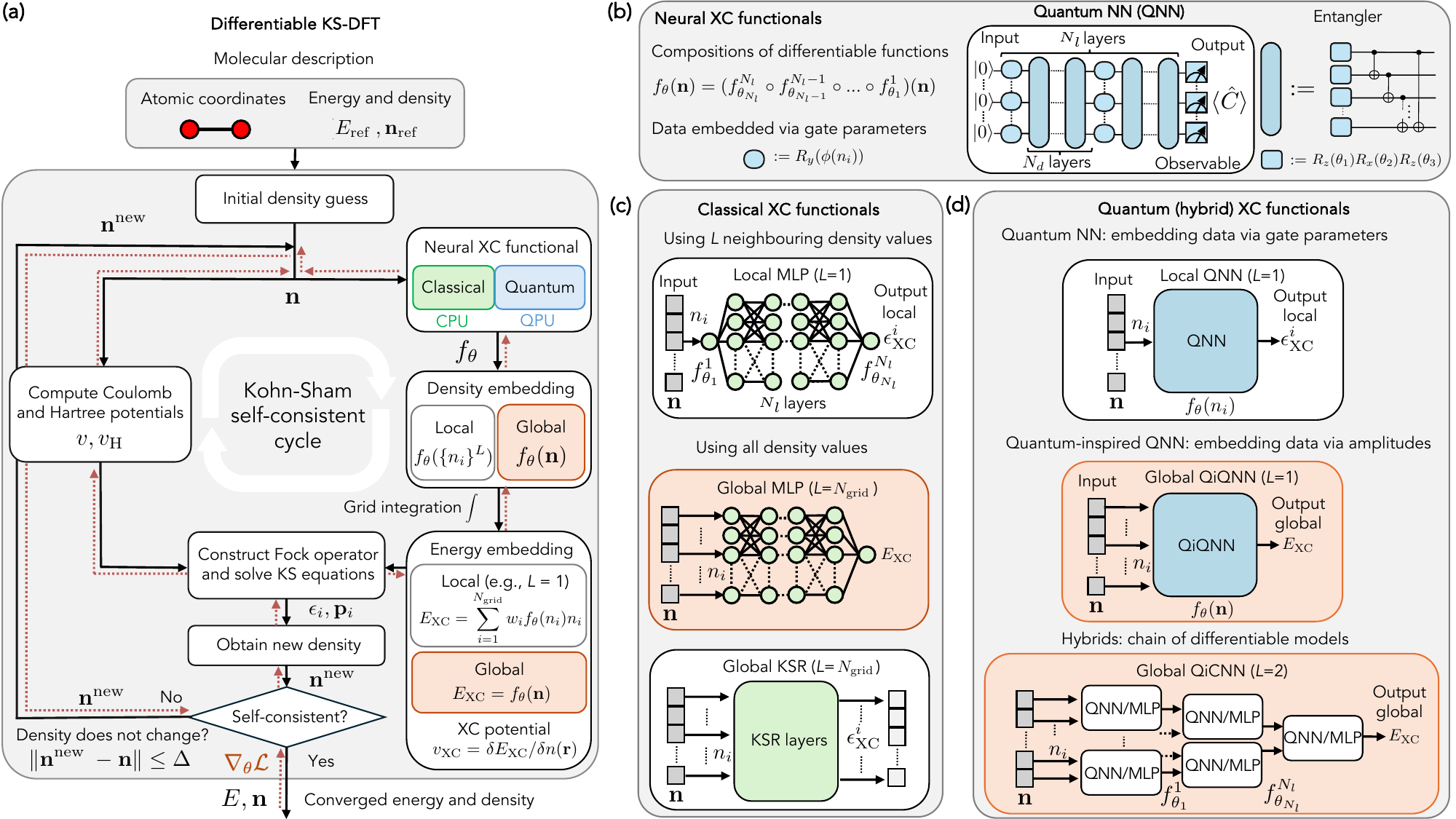}
    \caption{
    (a) Method for training quantum-enhanced neural XC functionals within KS-DFT. Data describing a single molecule is passed to KS-DFT. An initial electron density $\mathbf{n}$ estimate is made and potentials $v, v_{\mathrm{H}}$ and $ v_{\mathrm{XC}}$ are computed to construct the Fock operator. Neural XC functional $f_{\theta}$ is represented by a quantum and/or classical model, which takes $L$ density values as inputs (i.e., $L=1$ fully local and $L=N_{\mathrm{grid}}$ global). The XC functional can be embedded in DFT locally (outputs the values on every grid point) or globally (outputs the total XC energy). Then, the KS equation and the orbital coefficients $\mathbf{p}$ are updated. The new density is computed and if the change is small enough, the procedure is stopped (self-consistency) and the trained XC functional is obtained. Otherwise, the new orbitals are kept for the next iteration. The loss function is computed and differentiated with respect to the model parameters. The red dashed arrows show the gradient flow. (b) Building blocks of neural XC functionals are defined as compositions of differentiable functions, which can be represented by QNNs. (c) Architectures of classical and (d) quantum (hybrid) XC functionals. Models differ by how many inputs can be taken at a time. The QiCNN architecture allows one to chain classical and quantum models. The orange color indicates that the global energy embedding (see (a)) is used, and the blue color indicates quantum operations. Note that MLPs can be replaced by any classical architecture.
    }
    \label{fig:framework-detailed}
\end{figure*}

\subsection{Training neural XC functionals}

In this work, we adopt the standard supervised ML approach. 
We generate the training (reference) dataset $D = \{ (E^{i}_{\mathrm{ref}}, n^{i}_{\mathrm{ref}}) \}_{i=1}^{N_{\mathrm{mol}}}$ using an accurate \textit{ab initio} method for $N_{\mathrm{mol}}$ different molecules (also including different geometries of the same molecules).
For the 1D implementation, we use the original training and test datasets by Li~\etal~\cite{ksr} ~composed of hydrogen atom-based systems, e.g., H$_2$, H$_2$H$_2$ and H$_4$. 
The data is obtained with DMRG simulations, where $N_{\mathrm{grid}}=513$ is fixed for all the systems.
For the 3D implementation, we use the PySCF software package and the coupled cluster singles and doubles (CCSD)~\cite{bartlett2007coupled} method to generate our data for H$_2$, which is equivalent to FCI (an exact solution within a basis set) in this case.
When more than two electrons are present, we suggest using the singles and doubles coupled cluster method with perturbative triples (CCSD[T]), which is viewed as the ``gold standard" in computational chemistry for weakly correlated systems.~\cite{bartlett2007coupled}
In the presence of strong correlation, the aforementioned CC methods often fail to converge to the correct solution.~\cite{bulik2015can} 
Consequently, we employ FCI in challenging cases, such as the square H$_4$ system, for which it remains feasible given the small system sizes considered in this work.
A Gaussian-type orbital basis set, 6-31G, is used in all calculations.
For an integration grid, we use SG-3 with a Becke scheme to combine multi-atomic grids, such that for a given molecule at all geometries we obtain the same number of grid points.~\cite{dasgupta2017standard}
The grid density is set to minimum to reduce the number of neural XC functional inputs, which simplifies the complexity of the problem.
This entails $N_{\mathrm{grid}}=1192$ for H$_2$ and $N_{\mathrm{grid}} = 2384$ for planar H$_4$ in the 3D setting.
Given that we train models that allow only for a fixed number of inputs ($N_{\mathrm{grid}}$), grids have to be adapted (e.g., by pruning) to have the same number of grid points for all molecules in the dataset.~\cite{dasgupta2017standard}
However, this is only a limitation imposed by simple prototypical (quantum) models that are investigated in this work.
See Appendix~\ref{app:grids} for additional clarifications about integration grids.

To train our models, we use the loss function given by
\begin{equation}
\label{eq:loss}
\begin{aligned}
\mathcal{L}(\theta)= & \underbrace{\frac{1}{N_{\mathrm{mol}}}\sum^{N_{\mathrm{mol}}}_{i=1}\left[\sum_{j=1}^{N_{\mathrm{grid}}}w_j\left(n^{i}_{\mathrm{KS, j}}(\theta)-n^{i}_{\mathrm{ref}, j}\right)^2 / N^i_e\right]}_{\text {density loss } \mathcal{L}_n} \\
& +\underbrace{\frac{1}{N_{\mathrm{mol}}}\sum^{N_{\mathrm{mol}}}_{i=1}\left[\sum_{k=1}^{N_{\mathrm{KS}}} \eta_k\left(E^{i}_k(\theta)-E^{i}_{\mathrm{ref}}\right)^2 / N^{i}_e\right]}_{\text {energy loss } \mathcal{L}_E},
\end{aligned}
\end{equation}
where the expectation is taken on $N_{\mathrm{mol}}$ molecules of the training set, $\eta_k$ weights the contribution of every energy at each KS iteration and $w_j$ represents the integration grid weight.
We use the weighting scheme introduced by Li~\etal~\cite{ksr}:
$\eta_k=0.9^{N_{\mathrm{KS}}-k} H(k-10)$, where $H$ is the Heaviside function to tune the contribution of every loss term.
This composite loss function includes two primary components: the density loss $\mathcal{L}_n$ and the energy loss $\mathcal{L}_E$. 
These components are essential for guiding the neural network to accurately produce electron densities and total energies, respectively. 
Specifically, the density loss assesses the accuracy of the Kohn-Sham electron density produced by the DFT pipeline compared to the reference electron density. 
Meanwhile, energy loss measures the precision of the Kohn-Sham energies generated by the neural network against the reference energies at each iteration $k$. 
The energy loss is weighted by $\eta_k$ to emphasize the importance of later iterations.

All derivatives of the loss function are computed via JAX automatic differentiation in this work.
On quantum hardware, exact differentiation schemes such as parameter-shift rule~\cite{schuld2019evaluating} or its generalized version~\cite{kyriienko2021generalized} can be used (see Appendix~\ref{app:diff}).
The loss function is then minimized to obtain the optimal model parameters
$$
\theta_{o p t}=\operatorname{argmin}_\theta \mathcal{L}(\theta)
$$
that define the optimal XC correlation functional. 
In the following scenario, we outline the use of a gradient-based method for training, although it should be noted that our approach is also compatible with gradient-free techniques, which would, however, prevent regularization as in ref.~\onlinecite{ksr}.

At each converged SCF calculation, the gradient of loss $\mathcal{L}$ is performed via the automatic differentiation (or potentially parameter-shift rule on a quantum device). 
The gradient is then provided to the classical optimizer (Adam~\cite{kingma2014adam} or L-BFGS-B~\cite{zhu1997algorithm}). 
The learning rate for the Adam optimizer can be decreased from $10^{-2}$ to $10^{-4}$ as the training progresses to fine-tune the results. 
Otherwise, L-BFGS-B and Adam are used with the standard hyperparameters.

\section{\label{sec:results} Results and discussion}

\begin{figure*}[ht]
    \centering
    \includegraphics[width=1.95\columnwidth]{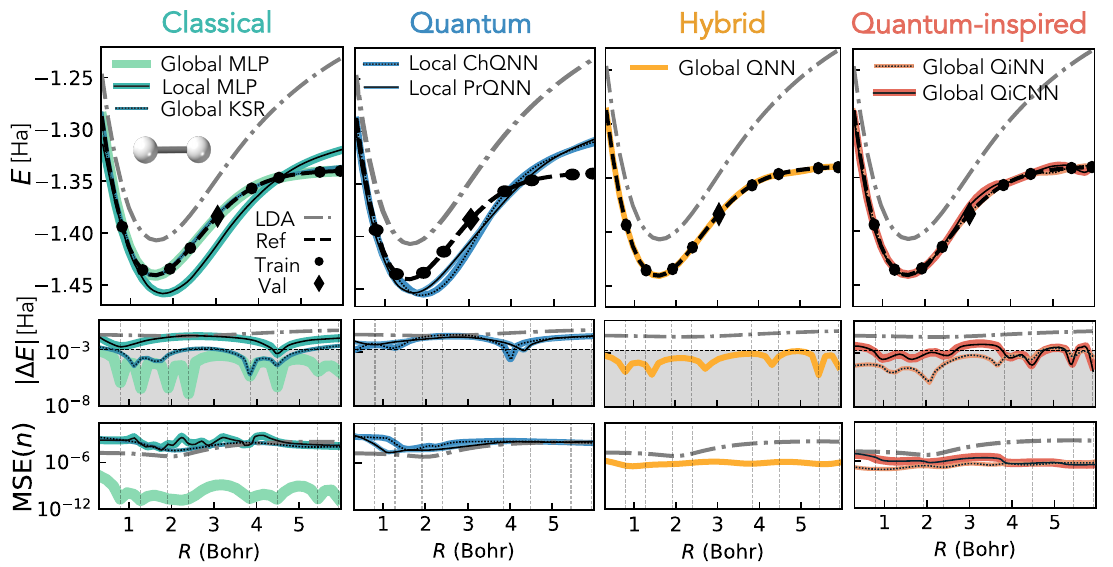}
    \caption{
    Results of training neural XC functionals on different geometries of hydrogen molecule in 1D KS-DFT framework~\cite{ksr}. 
    The grey area highlights results within chemical precision ($\leq 1.6$ mHa) and vertical dashed lines mark the training data.
    Local models improve upon the LDA shape, but still produce a poor qualitative agreement with the reference.
    Best results are obtained for global models when the whole density is considered as model input.
    The XC models are defined in Section~\ref{sec:models}.
    All neural functionals are trained on eight training data points (circles) and the final model parameters are selected according to the lowest loss on the validation results.
    The reference (Ref) results are generated using DMRG by Li \textit{et al.}~\cite{ksr}
    }
    \label{fig:h2-1D}
\end{figure*}

In this section, we present the outcomes of numerical experiments and comment on their performance in 1D and 3D settings.
First, we evaluate all our models on H$_2$ using the 1D KS-DFT framework based on the implementation of Li~\etal~\cite{ksr} 
Next, we extend our approach to the 3D framework based on PySCFAD~\cite{zhang2022differentiable} and test the extrapolation capacity of our better models on the H$_2$ molecule.
We assess the effect of the fractional orbital occupation on H$_2$ and evaluate our best model on planar H$_4$.
We evaluate the generalization capability on the H$_2$H$_2$ system, which was not part of the training dataset, in a 1D setting.
Finally, we estimate the number of measurements required in a quantum hardware implementation.
All calculations are performed on a classical computer using Qadence.~\cite{qadence2024pasqal}

For all our results, we define the density error in terms of the mean squared error (MSE) and include the local-density approximation (LDA) functional to contrast with our ML-based functionals.
In addition, we analyze the results in terms of the non-parallelity error (NPE)~\cite{tuna2015assessment}, defined as the absolute difference between the largest and smallest energy errors $\Delta E$ over the whole energy profile. 
The average $|\Delta E|$ measures the overall closeness to the true dissociation profile, whereas the NPE measures non-parallelism after removing any constant shift. 
In reaction rate calculations, the absolute energy scale is irrelevant; only the relative shape of the energy profile matters, e.g., see ref.~\cite{douroudgari2021atmospheric}.
For example, one could have $\text{average}~|\Delta E| = 1$ Ha (constant shift) but $\text{NPE} < 1$ mHa, indicating that the correct chemical behavior (e.g., reaction rates) is preserved, enabling accurate predictions. 
Regarding the MSE on the density, in ref.~\cite{kim2013understanding} it was shown that, in typical cases, density errors do not significantly affect the total energy, except in specific scenarios. 
Therefore, in practical applications, NPE is often the most important metric.

The hyperparameter search is performed for all proposed models and the best parameter combination (e.g., number of qubits $N_q$, layers $N_l$ and depth $N_{d}$) is selected.

\begin{table*}[ht]
\centering
\caption[]{Extrapolation results of neural XC functionals evaluated on H$_2$ within 1D KS-DFT framework (Fig.~\ref{fig:h2-1D}).
Average energy and density errors, as well as non-parallelity error (NPE), are evaluated along the whole dissociation profile.
The uncertainties were calculated using the standard deviation.
For MLP-based models, we indicate the number of neurons per layer and the number of layers (e.g., $513/2$).
Description of the abbreviations: No.: number, Av.: average.
} 
\label{tab:results-1D-h2}
\tiny
\setlength{\tabcolsep}{1.3pt}
\begin{tabular*}{\textwidth}{@{\extracolsep{\fill}} ll llll llll}
\hline\hline
& & Global MLP & Local MLP & Global KSR & Local ChQNN & Local PrQNN & Global QNN & Global QiNN & Global QiCNN \\
\hline
& $N_q$/$N_d$/$N_l$ & 513/2 & 513/1 & --- & 4/8/1 & 4/8/1 & 6/8/1 & 9/10/1 & 6/8/1\\
& No. parameters & $2.64 \cdot 10^5$ & 1540 & 1568 & 96 & 108 & 316 & 297 & 596 \\
H$_2$ & Av. $|\Delta E|$ [Ha]  & $(2 \pm 2)\cdot 10^{-4}$  & $(1.5 \pm 0.9) \cdot 10^{-2}$ & $(1 \pm 1) \cdot 10^{-3}$ & $(1.5 \pm 1) \cdot 10^{-2}$ & $(1.4 \pm 0.8) \cdot 10^{-2}$ & $(4 \pm 4) \cdot 10^{-4}$ & $(8 \pm 8) \cdot 10^{-4}$ & $(3 \pm 2) \cdot 10^{-4}$\\
& NPE [Ha] & $1.3 \cdot 10^{-3}$ & $2.9 \cdot 10^{-2}$ & $4 \cdot 10^{-3}$ & $3.1 \cdot 10^{-2}$ & $3.1 \cdot 10^{-2}$ & $1.4 \cdot 10^{-3}$ & $4 \cdot 10^{-3}$ & $7 \cdot 10^{-4}$ \\
& Av. MSE($n$) & $(1.0 \pm 3.0) \cdot 10^{-9}$ & $(5.6 \pm 6.0) \cdot 10^{-4}$ & $(2.4 \pm 2.4) \cdot 10^{-4}$ & $(6.0 \pm 5.3) \cdot 10^{-4}$ & $(4.0 \pm 6.0) \cdot 10^{-4}$ & $(7.2 \pm 2.4) \cdot 10^{-7}$ & $(2.5 \pm 1.8) \cdot 10^{-3}$ & $(2.1 \pm 1.0) \cdot 10^{-7}$ \\
\hline \hline
\end{tabular*}
\end{table*}

\subsection{Local versus global functionals}

In Figure~\ref{fig:h2-1D}, we evaluate all our models within the 1D framework.
This figure compares the prediction of 1D potential energy surfaces achieved by different models, characterizing the H-H bond dissociation in the H$_2$ molecule.
We start with MLP, which is the prototypical expressive model that we use to evaluate the approaches of local and global embeddings. 
To that end, in all our MLP models, we used 2 hidden layers with 513 neurons, which matches the number of grid points, to achieve enough expressivity. 
The Local MLP model has a single input and output neuron to infer the local XC energy based on corresponding density.
Local MLP improves on the LDA DFT results in terms of overall energy error, but it is not sufficient to improve on the density.
The restriction local information is limiting the capacity of the model to improve.
Hence, we observe the best trade-off of energy-density error that this model is capable of.

In order to evaluate if local quantum models can improve on the Local MLP, we propose the Local ChQNN and PrQNN. 
These models employ the Chebyshev and product feature maps (see Appendix~\ref{app:qnn}), respectively.
We use $N_q = 4$ qubits, $N_l=1$ (no data reuploading) and $N_d=8$ layers for our ansatz.
The results in terms of NPE of the energy are very similar to those of the Local MLP model. 
The energy and density profiles of local quantum and classical models are qualitatively similar, as expected due to locality of both models. 
It is therefore necessary to verify if including more information in the quantum models brings benefits in terms of lower energy error $|\Delta E|$ and MSE of the density profile. 

For that purpose, we evaluate a global version of MLP with $N_{\mathrm{grid}}=513$ input neurons. 
This allows the model to consider the whole density to infer the XC energy via the global embedding, hence there is only one output neuron.
This results in $\sim 2.6 \cdot 10^{5}$ parameters as reported in Table~\ref{tab:results-1D-h2}. 
The corresponding energy profile is significantly improved (lower error with respect to the reference) in comparison to all other models.
Similarly, the proposed KSR model~\cite{ksr} can be tailored to any locality.
The global version of KSR model was shown to perform best and generalize perfectly in a single molecule even from only 2 data points (here 8).
We used the same density initialization as for Global MLP and the model struggled to generalize correctly, leading to MSE on density comparable to LDA, despite rendering chemically accurate energies.
However, KSR is significantly more parameter-efficient, having only $\sim 10^3$ parameters versus $\sim 2.6 \cdot 10^{5}$ for Global MLP, which are generally known to overfit the training data.
In addition, given good initialization, global KSR is capable of learning the dissociation profile of H$_2$ perfectly only from 2 data points, as shown in ref.~\onlinecite{ksr}.
For all the remaining models with global embedding, the results perform as well or even better than local LDA-like functionals, at least for the total energies. 
Moreover, the global models have the lowest non-parallelity error (NPE) by an order of magnitude in comparison to local models (see Tab.~\ref{tab:results-1D-h2}). 
A significant reduction (3-5 orders of magnitude) in average MSE of the electron density is achieved only by fully global models such as Global QNN and Global MLP.

Our results indicate that both classical and quantum models must incorporate global electron density features to achieve chemical precision.

\subsection{Extrapolation within a single molecule}

All our models provide smooth energy profiles due to regularization capabilities of the employed differentiable DFT implementation.
In Fig.~\ref{fig:h2-1D}, it is clear that 8 training data points, sampling low to strongly correlated regimes of the hydrogen molecule, are sufficient to generalize correctly within a single molecule, provided the model has enough expressivity.
In general, all the models show characteristic dips in energy errors at the training data points.
However, the density MSE profiles are rather flat along the whole dissociation curve.
Similarly, in both Figs.~\ref{fig:h2-3D} and~\ref{fig:h4}, just 3 training points were sufficient to train models that provide qualitatively correct energy profiles, even in the presence of strong correlation.
Further studies are necessary to investigate the extrapolation capabilities of quantum models on large datasets, i.e., examining whether models trained only in strongly correlated regimes can extrapolate correctly to low-correlation regimes, and vice versa.

\subsection{Hybrid and quantum-inspired quantum-classical models}

Since local QNNs are not capable of correctly fitting even the training data, we explore architectures that allow us to embed $L$ data points and hence approach non-local functionals.
In particular, in the Global QNN model, we consider the idea of embedding $L=3$ data points in a single QNN with $N_q = 6$ qubits, $N_l=1$ (no data reuploading) and $N_d=8$ layers, and apply the same model onto $N_{\mathrm{grid}} / L $ different batches. 
The results are combined via a single-layer MLP with 171 input neurons, one bias parameter, and one output for the global embedding.
This allows us to embed chunks of data into a limited quantum model with a restricted number of qubits.
By combining such outputs, we obtain a fully global functional.
The model has only $\sim 300$ trainable parameters, which is orders of magnitude smaller than for the Global MLP,  since both are composed of most general quantum/classical layers.
Note that comparing it with the KSR model is not reasonable because the latter possesses numerous inductive biases such as the self-interaction gate and the softly imposed negativity of XC energy.

We also evaluated the amplitude embedding as a feature map, using a DQC with $N_q = 9$ qubits, $N_l=1$, $N_d=10$.  
We name this model Global QiNN since, in general, it is hard to embed the classical data (electron density) exactly in a polynomial number of one-/two-qubit gates.~\cite{park2019circuit}
The number of qubits was chosen to fit all the elements of $\mathbf{n}$ exactly in a state-vector.
In Tab.~\ref{tab:results-1D-h2}, the results for Global QiNN are in good agreement with the reference: within chemical precision and overall density error that significantly improved upon LDA. 
Notably, the outcome is similar to Global QNN, which is simple to implement on near-term devices.
The Global QiNN architecture, despite producing satisfactory results, does not produce a significant advantage over quantum-classical hybrid QNN.
It can, however, serve for theoretical studies since only quantum layers are employed, which dissociates it completely from classical models.

Another approach is to chain quantum layers with angle embeddings, as implemented in the Global QiCNN model.
In that case, the DQC output becomes the input of the next DQC layer.
We use three different DQCs that all employ $N_q = 6$ qubits, $N_l=1$ and $N_d=8$ ansatz layers.
The third DQC layer feeds into an MLP layer with 19 neurons (remaining outputs of DQC at the third layer) to combine the results into the total XC energy.
The results for the energy are within chemical precision. 
Concerning the density, the errors are significantly lower, at least by an order of magnitude, than LDA.

Overall, among the hybrid quantum models presented, the simplest, Global QNN with a single quantum and classical layer, yields the best performance.

\begin{figure}[ht]
    \centering
    \includegraphics[width=0.95\columnwidth]{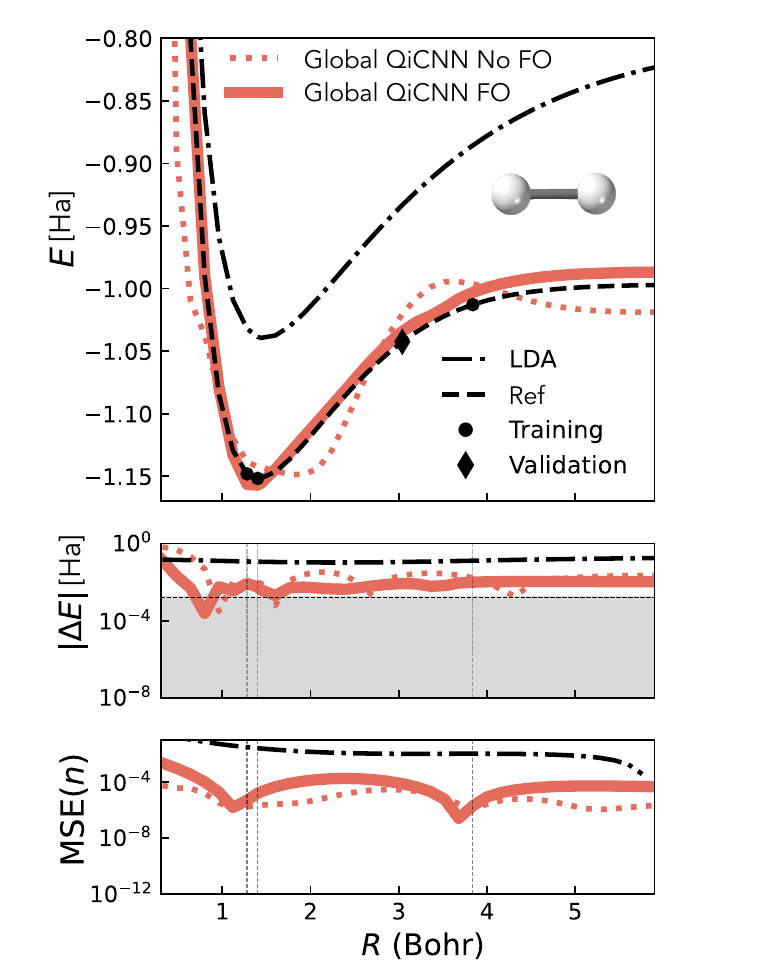}
    
    \caption{
     Results of training neural XC functionals on different geometries of H$_2$ in our 3D RKS-DFT framework based on PySCFAD. 
     The same Global QiCNN (Eq.~\eqref{eq:qicqnn}) is trained with and without fractional orbital occupation (FO).
    Despite the limited training data (3 points), the model accurately extrapolates to distant geometries, demonstrating the strong regularization capacity of our 3D framework.
    The reference (Ref) results are generated using the FCI/6-31G level of theory.
    }
    \label{fig:h2-3D}
\end{figure}

\begin{table}[ht]
\centering
\caption[]{Extrapolation results for Global QNN evaluated on H$_2$ within 3D KS-DFT framework, both with and without fractional orbital occupation (FO) (Fig.~\ref{fig:h2-3D}).
} 
\label{tab:results-3D-h2}
\smaller
\setlength{\tabcolsep}{2.5pt}
\begin{tabular*}{\columnwidth}{llll}
\hline\hline
& & Global QiCNN No FO & Global QiCNN FO \\
\hline
& $N_q$/$N_d$/$N_l$  & 4/14/1  & 4/14/1  \\
& No. parameters    & 653   & 653 \\
 H$_2$ & Av. $|\Delta E|$ [Ha]  &  $(5 \pm 13) \cdot 10^{-2}$  & $(1 \pm 2) \cdot 10^{-2}$ \\
& NPE [Ha] & $6.5 \cdot 10^{-1}$  & $1.3 \cdot 10^{-1}$ \\
& Av. MSE($n$) & $(1.1 \pm 1.3) \cdot 10^{-5}$  & $(1.5 \pm 3.7) \cdot 10^{-4}$ \\
\hline\hline
\end{tabular*}
\end{table}

\subsection{Extension from 1D to 3D KS-DFT framework}

To demonstrate the ability to efficiently train quantum XC models in our 3D framework, we evaluate our most complex neural functional, Global QiCNN, in such a setting.
Similarly to the 1D version, the model consists of 3 chained quantum layers, each with $N_q = 4$ qubits, $N_l=1$ and $N_d=14$ ansatz layers.
The final quantum layer connects to MLP with 149 neurons to output the total XC energy.
As depicted in Fig.~\ref{fig:h2-3D}, Global QiCNN suffers from energy oscillations and is unable to extrapolate correctly.  
This is the ideal situation to demonstrate the ability of fractional occupation (FO) methods to alleviate such problems, which are apparent in 3D due to increased complexity.
To that end, we show that even with 3 data points (of which 2 are almost at the same bond distance) the complex model can be regularized in order to fit better the dissociation curve, e.g., lower NPE (see Tab.~\ref{tab:results-3D-h2}).
Despite not being within chemical precision along the whole energy profile, the density error is improved by orders of magnitude in comparison to the LDA result.
This is achieved with a FO algorithm~\cite{frac-occ} that is discussed in Appendix~\ref{app:frac-occ}.
The FO approach relaxes the constraint of the integer orbital occupation in PySCF's restricted KS-DFT implementation and allows us to approximate the true fractional ones.
In addition, the neural XC functional is optimized through training, thereby compensating for errors arising from the approximation of orbital occupations. 
As reported in Tab.~\ref{tab:results-3D-h2}, the use of the FO approach results in an approximately 5-fold reduction in both the average energy error and its variance, without any extra model parameters.
Note that the FO method is not limited to use only with quantum models; it can also be applied more broadly for training classical neural XC functionals.

As the ultimate test of our 3D framework, demonstrating the capability to accurately learn from challenging molecular systems, we trained our best hybrid architecture, the Global QNN, on planar H$_4$. 
This is a typical small system exhibiting strong correlation, commonly used to benchmark \textit{ab initio} methods.~\cite{bulik2015can, sokolov2020quantum}
Here, Global QNN consists of one quantum layer with $N_q = 4$ qubits, $N_l=1$, $N_d=14$ and followed by MLP with 596 neurons that outputs $E_{\mathrm{XC}}$.
In Fig.~\ref{fig:h4}, we note that the LDA approximation greatly overestimates the energy barrier at $\beta = 90^{\mathrm{o}}$.
This quantum-classical hybrid model, given its limited expressivity, provides the best achievable fit of the training data and extrapolates smoothly to various geometries.
In Tab.~\ref{tab:results-3D-h4}, we note that the model achieves chemical precision over the whole energy profile (NPE $\approx 10^{-4}$). 
Concerning density errors, they are of the order of $10^{-5}$, similar to the H$_2$ density errors (Tab.~\ref{tab:results-3D-h2}) that are obtained within the same 3D framework.
Hence, we demonstrated that provided accurate reference data, it is possible to train quantum-enhanced functionals to learn from strongly correlated systems. 
In the future, it would be interesting to evaluate if such models can generalize correctly from small to other larger strongly correlated systems such as stretched N$_2$.

\begin{figure}[ht]
    \centering
    \includegraphics[width=0.9\columnwidth]{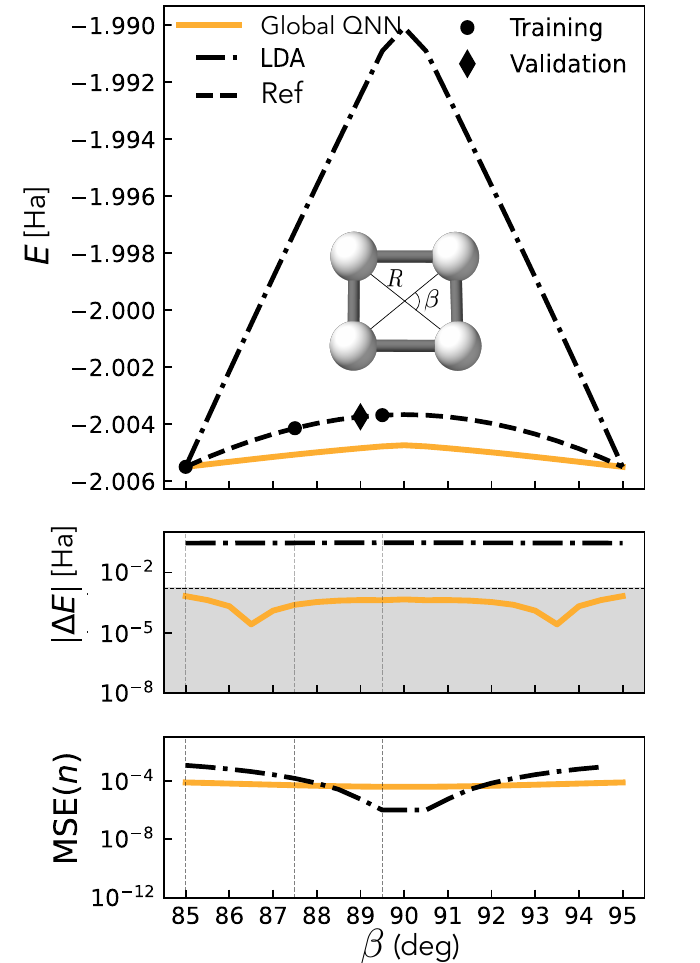}
    \caption{
    Extrapolation of Global QNN (Eq.~\eqref{eq:globalqnn}) on planar H$_4$ within the 3D framework with a fractional occupation approach.~\cite{frac-occ} 
    Global QNN achieves chemical precision in energy, surpassing the LDA results by orders of magnitude.   
    In the top panel, LDA and Global QNN results are shifted to match the reference data at extremities to compare the shapes of the dissociation profiles. 
    Shifts can be inferred from the $|\Delta E|$ panel.
    The reference (Ref) results are generated with the FCI/6-31G level of theory.
    }
    \label{fig:h4}
\end{figure}

\begin{table}[ht]
\centering
\caption[]{Extrapolation results for Global QNN evaluated on H$_4$ within the 3D KS-DFT framework (Fig.~\ref{fig:h4}).
} 
\label{tab:results-3D-h4}
\smaller
\setlength{\tabcolsep}{7.5pt}
\begin{tabular*}{0.8\columnwidth}{lll}
\hline\hline
& & Global QNN \\
\hline
& $N_q$/$N_d$/$N_l$  & 4/14/1  \\
& No. parameters    & 764 \\
 H$_4$ & Av. $|\Delta E|$ [Ha]  &  $(3 \pm 2) \cdot 10^{-4}$ \\
& NPE [Ha] & $6 \cdot 10^{-4}$ \\
& Av. MSE($n$) & $(5.3 \pm 1.2) \cdot 10^{-5}$ \\
\hline

\hline\hline
\end{tabular*}
\end{table}

\subsection{Zero-shot generalization of XC functionals}

In Fig.~\ref{fig:h2-1D}, we demonstrated that quantum-enhanced XC models can learn to extrapolate correctly to different geometries of a single molecule by consistently achieving chemical precision within the 1D framework.
To evaluate their capacity to generalize to different molecular systems, we select our best-performing quantum-enhanced functional and test it on a molecular system that was not included in the training dataset.

\begin{figure}[ht]
    \centering
    \includegraphics[width=0.85\columnwidth]{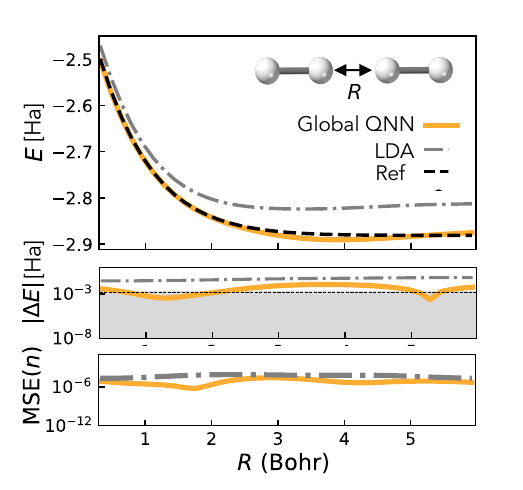}
    \caption{
    Generalization of Global QNN (Eq.~\eqref{eq:globalqnn}) to unseen H$_2$H$_2$ system trained on H$_2$ and H$_4$ molecules within the 1D framework. 
    The model achieves chemical precision over some regions of the dissociation profile without being explicitly trained on it.
    The reference (Ref) results are generated using DMRG by Li \etal~\cite{ksr}
    }
    \label{fig:h2h2-infer}
\end{figure}

Our results highlight Global QNN as the simplest hybrid model, comprising a single quantum and classical layer, capable of achieving chemical precision for H$_2$.
For its training, the dataset only contained the hydrogen molecule at different geometries. 
Such a dataset does not allow the model to implicitly learn the XC energy dependency on the variable number of electrons $N_e$ in the input density (i.e., $N_e = \sum_{i=1}^{N_{\mathrm{grid}}}w_i {n}_i$) when evaluating on a different system such as H$_2$H$_2$ (separate hydrogen molecules).
Hence, we enrich our train dataset with molecules such as linear H$_4$ at different bond distances, as done in ref.~\cite{ksr}
We train Global QNN using the following bond distances: For H$_2$, $R \in (0.48, 1.28, 3.04, 3.84)$ Bohr and for H$_4$, $R \in (1.28, 2.08, 3.36, 4.48)$ Bohr.

\begin{table}
\centering
\caption[]{Zero-shot generalization results for Global QNN evaluated on H$_2$H$_2$, trained on H$_2$ and H$_4$ within the 1D KS-DFT framework (Fig.~\ref{fig:h2h2-infer}).
}
\label{tab:results-1D-gen}
\smaller
\setlength{\tabcolsep}{7.5pt}
\begin{tabular*}{0.8\columnwidth}{lll}\hline\hline
& & Global QNN \\
\hline
& $N_q$/$N_d$/$N_l$ & 6/8/1 \\
 & No. parameters  & 316  \\
 H$_2$H$_2$ & Av. $|\Delta E|$ [Ha] & $(4.9 \pm 1.4) \cdot 10^{-2}$  \\
& NPE [Ha] &  $3.9 \cdot 10^{-2}$ \\
& Av. MSE($n$)  &  $(5 \pm 2) \cdot 10^{-5}$ \\
\hline\hline
\end{tabular*}
\end{table}

In Fig.~\ref{fig:h2h2-infer}, we note that chemical precision is achieved only for small values of $R < 2$ Bohr and momentarily at $R = 5.3$ Bohr, with the overall energy closely aligned with the reference, resulting in lower energy error compared to LDA. 
MSE on the density consistently remains lower than the LDA result.
The XC model yields suboptimal qualitative predictions, evidenced by the energy minimum at approximately $3.5$ Bohr (stable configuration). 
Considering the utilization of only basic classical and quantum layers and a few data points, these results demonstrate the capacity of our models to generalize. 
In fact, as stated in Tab.~\ref{tab:results-1D-gen}, our model has only 316 trainable parameters compared to the state-of-the-art Global KSR model, which has $\approx 6$ times more.
Hence, addressing the shortcomings would require improving the architecture of our model (e.g., using molecular symmetries) and expanding the training dataset, which is beyond the scope of this study.

To summarize, we visualize all our results for the energy error of quantum(-enhanced) models in Fig.~\ref{fig:results-summary}.
Overall, the use of quantum layers can reduce the number of trainable parameters necessary to achieve a given precision in energy.

\begin{figure}[ht]
    \centering
    \includegraphics[width=0.8\columnwidth]{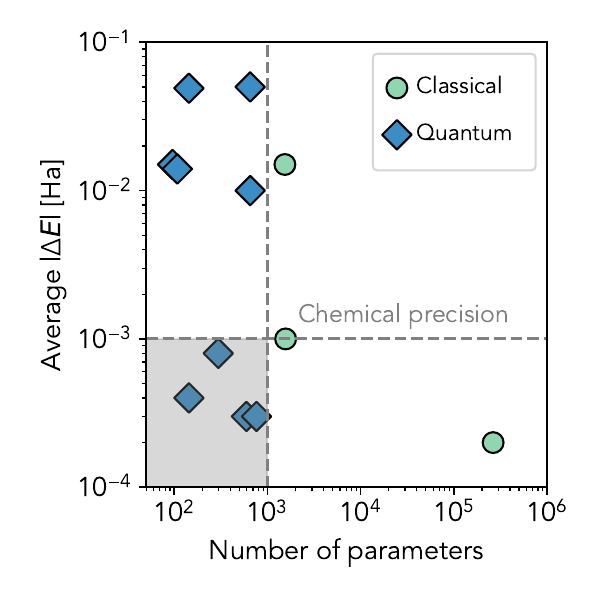}
    \vspace{-0.5cm}
    \caption{
    Dependence of the average energy error on the size of classical and quantum (hybrid) models. 
    Quantum models are consistently more compact ($\leq 10^{3}$ parameters) than classical models (e.g., GKSR with $1568$ parameters) given a fixed energy error, while being still capable of achieving chemical precision.
    }
    \label{fig:results-summary}
\end{figure}

\subsection{Performance evaluation of QNNs and MLPs in an identical backbone architecture}\label{sec:qnn-vs-mlp}

In the previous sections, we examined several QNN architectures. Among them, the hybrid Global QNN (GQNN) demonstrated the best performance for the single-molecule case (Fig.~\ref{fig:h2-1D}) and exhibited promising zero-shot generalization capabilities on the H$_2$H$_2$ system (Fig.~\ref{fig:h2h2-infer}). 
However, the extent to which QNNs contribute performance gains over analogous classical architectures within the same backbone remains unclear. 

To address this, we substitute the quantum layers in the GQNN architecture with standard MLPs, naming that model Convolutional MLP (CMLP), shown in Fig.~\ref{fig:mlp-vs-qnn}. 
For both classical and quantum models, we preserve the overall model structure and maintain a similar parameter count to make the best possible comparison.
In particular, the GQNN (CMLP) model is composed of a QNN (MLP) module that accepts $N_q$ ($N_i$)  inputs, applied on $N_{\mathrm{grid}} / N_{q}$ ($N_{\mathrm{grid}} / N_{i}$) times on the ${N_{\mathrm{grid}}}$ input density. 
Those outputs are combined via a linear layer with a bias, producing the total XC energy.
The quantum model has $N_q$ qubits in both the input layer (Product feature map, see Sec.~\ref{sec:feature_maps}) and each of the $N_l$ hidden layers.
MLP uses $N_i$ neurons in the input layer with $N_q$ neurons in the hidden layers.

\begin{figure}[ht]
    \centering
    \includegraphics[width=0.8\columnwidth]{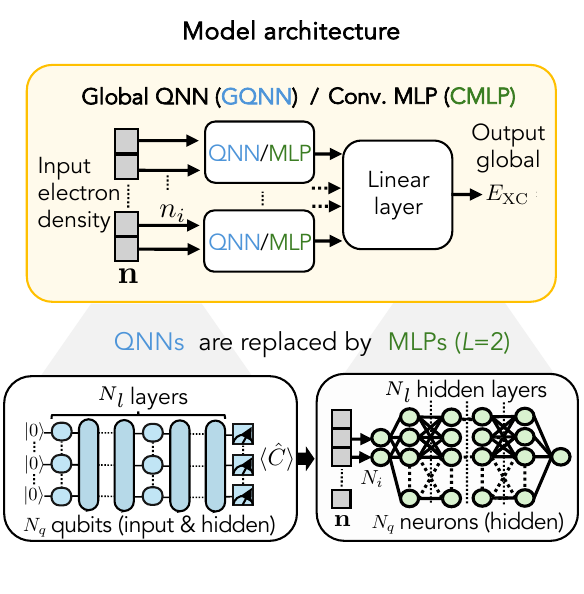}
    \vspace{-0.5cm}
    \caption{
    Architecture of the Global QNN (GQNN) and Convolutional MLP (CMLP) models. 
    To assess the impact of quantum layers relative to classical ones, we replace the QNNs in the GQNN backbone with MLPs (CMLP).
    }
    \label{fig:mlp-vs-qnn}
\end{figure}

We compare GQNN and CMLP for the interpolation and the generalization tasks.
The results are assembled in Tab.~\ref{tab:mlp_vs_qnn} for the interpolation of energies across various bond distances of the hydrogen molecule, and in Tab.~\ref{tab:mlp_vs_qnn_gen} for generalization to a different molecular system, H$_2$H$_2$, after training on H$_2$ and H$_4$.
We use the same datasets and hyperparameters as in previous experiments shown in Figs.~\ref{fig:h2-1D} and~\ref{fig:h2h2-infer}, respectively.

The performance of CMLP-based XC functional strongly depends on the choice of activation functions. 
We employ the hyperbolic tangent (tanh) and softplus (soft) activation functions to ensure infinite differentiability -- necessary for computing molecular properties -- and to provide smooth output suitable for correct exchange–correlation (XC) potential representation~\cite{kasim2021learning}.
To control the expressivity of the XC functional, we tune the number of inputs per block -- denoted $N_q$ for GQNN and $N_i$ for CMLP to -- as well as the number of layers $N_l$ within each QNN or MLP module (see Fig.~\ref{fig:mlp-vs-qnn}).

\begin{table}
    \centering
    \caption[]{
    Interpolation: Comparison of GQNN and CMLP (Fig.~\ref{fig:mlp-vs-qnn}) in fitting the dissociation profile of the hydrogen molecule.  
    The results are obtained within the 1D framework using the same set of molecular geometries as in Fig.~\ref{fig:h2-1D}. 
    The best result for each metric is highlighted in bold. 
    Description of abbreviations: Act.: activation function; Par.: number of parameters.
    }
    \label{tab:mlp_vs_qnn}
    \tiny
    \setlength{\tabcolsep}{1.5pt}
    \begin{tabular*}{\columnwidth}{lll lll}
    \hline\hline
    {Model} & {$N_i|N_q|N_l|$Act.} & Par. & {$\Delta E$ [Ha]} & Av. MSE($n$) & NPE [Ha] \\
    \hline
GQNN & 3/6/8 & 316 & $\mathbf{(4.0 \pm 4.0)\cdot10^{-4}}$ & $(7.2 \pm 2.4)\cdot10^{-7}$ & $\mathbf{1.40\cdot10^{-3}}$ \\
CMLP & 2/2/1/tanh & 521 & $(1.7 \pm 1.1)\cdot10^{-3}$ & $(1.3 \pm 1.0)\cdot10^{-5}$ & $3.38\cdot10^{-3}$ \\
CMLP & 2/2/1/soft & 521 & $(5.9 \pm 6.0)\cdot10^{-3}$ & $(9.1 \pm 4.6)\cdot10^{-6}$ & $2.48\cdot10^{-2}$ \\
CMLP & 19/19/1/tanh & 428 & $(1.6 \pm 1.9)\cdot10^{-3}$ & $(5.4 \pm 1.0)\cdot10^{-6}$ & $5.57\cdot10^{-3}$ \\
CMLP & 19/19/1/soft & 428 & $(7.4 \pm 34.8)\cdot10^{-2}$ & $(2.0 \pm 1.4)\cdot10^{-4}$ & $2.01\cdot10^{0}$ \\
CMLP & 3/6/4/tanh & 329 & $(1.6 \pm 1.2)\cdot10^{-3}$ & $(5.5 \pm 2.4)\cdot10^{-7}$ & $3.49\cdot10^{-3}$ \\
CMLP & 3/6/4/soft & 329 & $(9.5 \pm 11.5)\cdot10^{-4}$ & $\mathbf{(1.4 \pm 0.8)\cdot10^{-7}}$ & $4.79\cdot10^{-3}$ \\
CMLP & 4/4/1/tanh & 193 & $(7.5 \pm 6.5)\cdot10^{-3}$ & $(8.2 \pm 6.4)\cdot10^{-5}$ & $2.14\cdot10^{-2}$ \\
CMLP & 4/4/1/soft & 193 & $(1.6 \pm 1.7)\cdot10^{-3}$ & $(7.8 \pm 1.8)\cdot10^{-6}$ & $7.96\cdot10^{-3}$ \\
\hline\hline
    \end{tabular*}
\end{table}

\begin{table}
    \centering
    \caption[]{
    Generalisation: Comparison of GQNN and CMLP (Fig.~\ref{fig:mlp-vs-qnn}) for zero-shot generalisation. 
    Results are evaluated within the 1D framework on H$_2$H$_2$, trained on H$_2$ and H$_4$, exactly as in Fig.~\ref{fig:h2h2-infer}. 
    The best result for each metric is highlighted in bold. 
    Description of abbreviations: Act.: activation function; Par.: number of parameters.
    }
    \label{tab:mlp_vs_qnn_gen}
    \tiny
    \setlength{\tabcolsep}{1.5pt}
    \begin{tabular*}{\columnwidth}{lll lll}
    \hline\hline
    {Model} & {$N_i|N_q|N_l|$Act.} & Par. & {$\Delta E$ [Ha]} & Av. MSE($n$) & NPE [Ha] \\
    \hline
GQNN & 3/6/8 & 316 & $(4.9 \pm 1.4)\cdot10^{-2}$ & $\mathbf{(5 \pm 2)\cdot10^{-5}}$ & $\mathbf{3.90\cdot10^{-2}}$ \\
CMLP & 2/2/1/tanh & 521 & $(3.8 \pm 2.8)\cdot10^{-2}$ & $(4.7 \pm 3.8)\cdot10^{-4}$ & $8.00\cdot10^{-2}$ \\
CMLP & 2/2/1/soft & 521 & $(4.5 \pm 4.2)\cdot10^{-2}$ & $(8.0 \pm 4.3)\cdot10^{-4}$ & $1.32\cdot10^{-1}$ \\
CMLP & 19/19/1/tanh & 428 & $(5.1 \pm 3.0)\cdot10^{-2}$ & $(1.7 \pm 1.1)\cdot10^{-4}$ & $8.69\cdot10^{-2}$ \\
CMLP & 19/19/1/soft & 428 & $(5.7 \pm 3.5)\cdot10^{-2}$ & $(6.8 \pm 5.4)\cdot10^{-4}$ & $1.05\cdot10^{-1}$ \\
CMLP & 3/6/4/tanh & 329 & $(3.3 \pm 2.5)\cdot10^{-2}$ & $(4.9 \pm 4.8)\cdot10^{-4}$ & $7.19\cdot10^{-2}$ \\
CMLP & 3/6/4/soft & 329 & $\mathbf{(2.8 \pm 1.4)\cdot10^{-2}}$ & $(1.0 \pm 0.5)\cdot10^{-4}$ & $4.62\cdot10^{-2}$ \\
CMLP & 4/4/1/tanh & 193 & $(3.5 \pm 2.7)\cdot10^{-2}$ & $(5.2 \pm 5.0)\cdot10^{-4}$ & $7.75\cdot10^{-2}$ \\
CMLP & 4/4/1/soft & 193 & $(3.5 \pm 1.9)\cdot10^{-2}$ & $(1.6 \pm 1.0)\cdot10^{-4}$ & $5.78\cdot10^{-2}$ \\
\hline\hline
    \end{tabular*}
\end{table}

In Tabs.~\ref{tab:mlp_vs_qnn} and~\ref{tab:mlp_vs_qnn_gen}, we observe that the number of trainable parameters does not directly correlate with improved model performance in terms of energy prediction $\Delta E$, average density error MSE($n$), or non-parallelity error (NPE). 
For example, the CMLP configuration (2/2/1/tanh), which has 521 parameters -- nearly twice as many as the GQNN (316 parameters) -- does not outperform the GQNN in either interpolation or generalization tasks.
One possible explanation is the limited locality $L$ of the MLP blocks in the CMLP model (see Fig.~\ref{fig:mlp-vs-qnn}), analogous to small filter sizes in convolutional neural networks. 
The overall CMLP model remains global, as it ultimately processes the entire input density through repeated application of local blocks. 
In the configuration CMLP 2/2/1 (see Tab.~\ref{tab:mlp_vs_qnn}), the model processes only $L=2$ inputs at a time, which may restrict its ability to capture nonlocal density correlations. 
Increasing the locality to $L=19$ in a shallow configuration (CMLP 19/19/1) does not yield notable improvements in MSE on density or NPE. 
Interestingly, even with a modest locality of $L=4$ (CMLP 4/4/1), the results remain of the same order of magnitude as for the $L=19$ model. 
Increasing the number of layers to 4 (CMLP 3/6/4/soft) produces the best density accuracy among the CMLP variants.
Nonetheless, as shown in Tab.~\ref{tab:mlp_vs_qnn}, the GQNN still achieves the lowest energy error $\Delta E$ and NPE overall. 
Most importantly, in the generalization setting (Tab.~\ref{tab:mlp_vs_qnn_gen}), GQNN outperforms all CMLP configurations across all metrics except for average energy error.

These results suggest that quantum-enhanced XC models can perform comparably to classical neural networks in the ideal noiseless setting, with marginal benefits on small molecular systems, such as reduced NPE. 
To assess their performance under realistic conditions, we next examine the impact of gate noise and measurement (shot) noise on our results.

\subsection{\label{sec:proofs} Error bounds on energy and density}

Ensuring the stability of SCF cycles is essential for both successful training and reliable inference with neural XC functionals. 
For QNN-based functionals on quantum hardware, measurement noise and other hardware errors perturb the energy functional and thus the SCF convergence. 
Next, we develop a theoretical framework to characterize and analyze the propagation of such errors throughout the SCF procedure.

We denote by $n(\mathbf{r})\in \mathcal{X}$ an electronic density in a suitable Banach space $\mathcal{X}$. 
We recall from Sec.~\ref{sec:theory}, the total energy is defined as
$
E[n]=T_{\mathrm{S}}[n]+V[n]+E_{\mathrm{H}}[n]+E_{\mathrm{XC}}[n]
$.
Here, $E_{\mathrm{XC}}[n]$ is the (unknown) universal exchange-correlation functional. 
Let $n^*$ be the ground-state density minimizing the total energy, 
and let $E^* = E[n^*]$ be its exact energy.
In practice, the KS equations are solved by a fixed-point iteration. 
Following the convergence analysis in refs.~\cite{liu2014convergence, liu2015analysis, woods2019computing}, we consider an SCF iteration as a map $\mathcal{T} : \mathcal{X} \to \mathcal{X}$ such that
\begin{equation}
  n^{(k+1)} \;=\; \mathcal{T}\bigl[n^{(k)}\bigr].
\end{equation}
The exact solution $n^*$ satisfies $n^* = \mathcal{T}[n^*]$.
Under typical conditions (e.g., sufficient spectral gap or use of level-shifting approach~\cite{cances2000convergence}), local convergence can be reached. 
Now suppose we replace $E_{\mathrm{XC}}$ by $\widetilde{E}_{\mathrm{XC}}$, obtaining an \emph{approximate} total energy
$\widetilde{E}[n]=T_{\mathrm{S}}[n]+V[n]+E_{\mathrm{H}}[n]+\widetilde{E}_{\mathrm{XC}}[n]$.
The iteration map becomes $\widetilde{\mathcal{T}}$. 
We denote by $\widetilde{n}$ its convergent fixed point: $\widetilde{n}=\widetilde{\mathcal{T}}[\widetilde{n}]$, and $\widetilde{E}=\widetilde{E}[\widetilde{n}]$ the final approximate ground-state energy.
To guarantee convergence, we impose the following assumptions on the components of our framework and define $||\cdot|| \equiv ||\cdot||_2$.

\begin{assumption}[Local Contractivity]\label{assump:contract}
There is a neighborhood $U$ of $n^*$ in $\mathcal{X}$, and a constant $0<\kappa<1$, such that $\| \mathcal{T}[n_1] - \mathcal{T}[n_2]\| \le \kappa\,\|n_1 - n_2\|$ for all $n_1,n_2 \in U$. Moreover, $\mathcal{T}[U]\subset U$. 
\end{assumption}

This ensures that $n^*$ is the unique fixed point in $U$ and that $\{n^{(k)}\}$ remains in $U$ under iteration. 
(See also~\cite{liu2014convergence,bai2020optimal} for the role of contractivity in SCF.)

\begin{assumption}[Uniform XC Error]\label{assump:uniform}
Fix $\varepsilon>0$ and $K>0$ such that, for all $n \in U$,
\begin{align}
\bigl|\widetilde{E}_{\mathrm{XC}}[n] - E_{\mathrm{XC}}[n]\bigr| & \;\le\; \varepsilon,\\
\left\|\frac{\delta \widetilde{E}_{\mathrm{XC}}}{\delta n}[n] - \frac{\delta E_{\mathrm{XC}}}{\delta n}[n]\right\| & \;\le\; K\,\varepsilon.
\end{align}
That is, the approximate functional and its functional derivative (XC potential) differ from the exact ones by at most $\varepsilon$ and $K\,\varepsilon$, respectively, \emph{uniformly} over $U$.
\end{assumption}

\begin{assumption}[Approximate Iteration in $U$]\label{assump:approxIter}
The modified map $\widetilde{\mathcal{T}}$ also keeps iterates in $U$, i.e.\ $\widetilde{\mathcal{T}}[U] \subset U$, and has a unique fixed point $\widetilde{n} \in U$. 
\end{assumption}

Under these assumptions, we want to show the final SCF solution $\widetilde{n}$ remains $O(\varepsilon)$ close to $n^*$, and the energies differ by $O(\varepsilon)$.
We propose the following theorem:

\begin{theorem}[Iterative SCF Stability]\label{thm:IterativeSCF}
Let $\{n^{(k)}\}$ be the SCF sequence defined by $n^{(k+1)} = \widetilde{\mathcal{T}}[n^{(k)}]$, for $k=0,1,2,\dots$, with initial guess $n^{(0)}\in U$. Assume \ref{assump:contract}--\ref{assump:approxIter}. 
Then for each integer $k\ge 0$, 
\begin{equation}\label{eq:e_kBound}
  \|\,n^{(k)} - n^*\|\;\le\;\kappa^k \|\,n^{(0)} - n^*\|\;+\;\frac{\alpha\,\varepsilon}{\,1-\kappa\,}\;\Bigl(1 - \kappa^k\Bigr),
\end{equation}
for some constant $\alpha>0$. 
Consequently, as $k\to\infty$, the limit $\widetilde{n} = \lim_{k\to\infty} n^{(k)}$ satisfies
\begin{equation}\label{eq:FinalLimitBound}
  \|\,\widetilde{n} - n^*\|\;\le C \varepsilon.
\end{equation}
with a constant $C = \frac{\alpha}{1-\kappa}$.
Furthermore, the total energy difference $|\widetilde{E} - E^*|$ is also
$O(\varepsilon)$:

\[
  |\widetilde{E}-E^*|
  \le C'\,\varepsilon,
\]
where $C' = 1 + \frac{M\,\alpha}{1-\kappa}$ for some constant $M > 0$.
\end{theorem}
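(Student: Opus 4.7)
The plan is to view $\widetilde{\mathcal{T}}$ as a small perturbation of the exact SCF map $\mathcal{T}$, propagate the perturbation through the contraction to obtain a geometric-series bound on the iterates, then take $k\to\infty$ and finally convert the density bound into the energy bound.

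The first step is to establish a one-step perturbation estimate: there exists $\alpha>0$ such that $\|\widetilde{\mathcal{T}}[n]-\mathcal{T}[n]\|\le \alpha\varepsilon$ for every $n\in U$. Both $\mathcal{T}[n]$ and $\widetilde{\mathcal{T}}[n]$ are produced by diagonalising KS-Fock operators that differ only in their XC potentials; by Assumption~\ref{assump:uniform} these potentials differ in norm by at most $K\varepsilon$. First-order perturbation theory for the occupied-orbital subspace (which is well-defined under the local-contractivity/spectral-gap hypothesis implicit in Assumption~\ref{assump:contract}, cf.\ \cite{cances2000convergence, liu2014convergence}) then bounds the resulting density change by a constant multiple of $K\varepsilon$, and this constant is absorbed into $\alpha$.

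With this in hand, I would combine the triangle inequality with Assumption~\ref{assump:contract} to derive the recursion
\begin{equation}
\|n^{(k+1)}-n^*\|\;\le\;\|\widetilde{\mathcal{T}}[n^{(k)}]-\mathcal{T}[n^{(k)}]\|+\|\mathcal{T}[n^{(k)}]-\mathcal{T}[n^*]\|\;\le\;\alpha\varepsilon+\kappa\|n^{(k)}-n^*\|,
\end{equation}
using that $\mathcal{T}[n^*]=n^*$ and that $n^{(k)}\in U$ by Assumption~\ref{assump:approxIter}. Iterating this inequality and summing the resulting geometric series in $\kappa$ yields \eqref{eq:e_kBound} directly. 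Passing to $k\to\infty$ with $0<\kappa<1$ kills the transient $\kappa^k\|n^{(0)}-n^*\|$ term and leaves $\|\widetilde{n}-n^*\|\le \alpha\varepsilon/(1-\kappa)$, which is exactly \eqref{eq:FinalLimitBound} with $C=\alpha/(1-\kappa)$.

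For the energy bound, I would split
\begin{equation}
|\widetilde{E}-E^*|\;\le\;\bigl|\widetilde{E}[\widetilde{n}]-E[\widetilde{n}]\bigr|+\bigl|E[\widetilde{n}]-E[n^*]\bigr|.
\end{equation}
The first term is at most $\varepsilon$ by Assumption~\ref{assump:uniform}, since $\widetilde{E}$ and $E$ differ only in their XC component. The second is bounded using local Lipschitz continuity of $E$ on $U$ with some constant $M>0$ (this follows from smoothness of the kinetic, Hartree, external, and exact XC pieces on the bounded set $U$), giving $M\|\widetilde{n}-n^*\|\le M\alpha\varepsilon/(1-\kappa)$. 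Adding the two contributions produces $C'=1+M\alpha/(1-\kappa)$ as claimed.

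The main obstacle is the first step: rigorously justifying the constant $\alpha$ requires controlling how a $K\varepsilon$-perturbation of the XC potential propagates through the full SCF map, including eigenvector perturbations and (in the fractional-occupation setting) the smooth occupation assignment. The contractivity of Assumption~\ref{assump:contract} is what makes the otherwise-diverging error accumulation collapse into a bounded geometric sum, so any failure of the spectral gap or of the level-shifting strategy would invalidate the whole chain; everything else reduces to triangle inequalities, geometric-series summation, and Lipschitz estimates.
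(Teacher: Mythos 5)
Your proposal is correct and follows essentially the same route as the paper's proof: the same triangle-inequality decomposition into a perturbation term bounded by $\alpha\varepsilon$ and a contraction term bounded by $\kappa e_k$, the same geometric-series resolution of the recursion, and the same two-term split for the energy bound. You are in fact slightly more explicit than the paper about why the one-step bound $\|\widetilde{\mathcal{T}}[n]-\mathcal{T}[n]\|\le\alpha\varepsilon$ holds (via perturbation of the occupied subspace), a point the paper simply asserts as part of Assumption-level reasoning.
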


The proof, provided in App.~\ref{sec:app-proofs}, applies the contraction mapping argument in combination with uniform bounds on the functional derivatives. 
The constant $\kappa$ depends on the specific SCF implementation (e.g., QEX~\cite{qex2025pasqal}), $\alpha$ reflects the quality of the XC potential modeled by the neural XC functional, and $M$ quantifies the proximity of the final converged density in our implementation to the true electron density within the chosen basis set.
Under the stated assumptions, we ensure that the SCF error is bounded at every iteration, which in turn guarantees stability of the entire SCF loop under noisy conditions.

In practice, these theoretical assumptions can not always be met, for instance in cases of vanishing gap situation, often arise and need use of damping schemes~\cite{woods2019computing}.
If $\kappa$ is close to 1, the factor $(1-\kappa)^{-1}$ in $C$ becomes large, so a small uniform error in neural XC functional might still yield a nontrivial final difference. 
This is consistent with the well-known difficulty of SCF in small-gap or gapless systems~\cite{liu2014convergence}. 
Often, advanced mixing or level-shifting schemes can be used to artificially increase $\kappa$-contractivity~\cite{liu2015analysis}.
Hence, our theoretical assumptions can be met through the following mechanisms:

(i)~\textit{Algorithmic stabilization.}  
To help guarantee Assumption~\ref{assump:contract}, if the iteration is not strictly contractive, convergence can be promoted by standard acceleration schemes, such as density mixing (DIIS~\cite{Pulay1980}, EDIIS~\cite{Kudin2002}), preconditioning~\cite{Kerker1981}, and level shifting~\cite{Saunders1973}, among others.

(ii)~\textit{Regularity and bounded error of the functional.}  
To help guarantee Assumption~\ref{assump:uniform}, when the XC functional is represented by a classical MLP or by a QNN, uniform bounds can be established for the total energy. In the QNN case, the Fourier representation of variational circuits~\cite{schuld2021effect} ensures smoothness and local Lipschitz continuity of the learned functional, thereby controlling error propagation in the SCF cycle; see App.~\ref{sec:app-lipschitz}, Lemma~\ref{lem:qnn-lipschitz-L1} for the proof.

(iii)~\textit{Physically motivated initialization.}  
To help guarantee Assumption~\ref{assump:approxIter},
the SCF cycle can be initialized with a non-interacting electron density at $n^{(0)}$ that yields a physically reasonable starting point, as implemented for example in \textsc{PySCF}~\cite{sun2018pyscf}.  
Such initialization increases the likelihood that the Kohn-Sham map $\mathcal{T}$ begins in a locally contractive regime, although this is not universally guaranteed~\cite{woods2019computing}.
In addition, during the training of neural XC functional, the learning rate can be tuned to prevent the model from driving the SCF procedure outside the contractive region.

These arguments are supported by concrete realizations in refs.~\cite{ksr, kasim2021learning}, where classical neural XC functionals were successfully trained, and, more recently, by the neural XC functional Skala~\cite{luise2025accurate}, which competes with standard functionals.
In addition, refs.~\cite{ko2023stochastic, ko2024ground} provide rigorous studies of stochastic SCF and its convergence, which, although not directly focused on our machine-learning task, are complementary to this work and motivate further investigation of the convergence behavior with respect to $\varepsilon$ when scaling to larger systems.

\begin{figure}[ht]
    \centering
    \includegraphics[width=1.0\columnwidth]{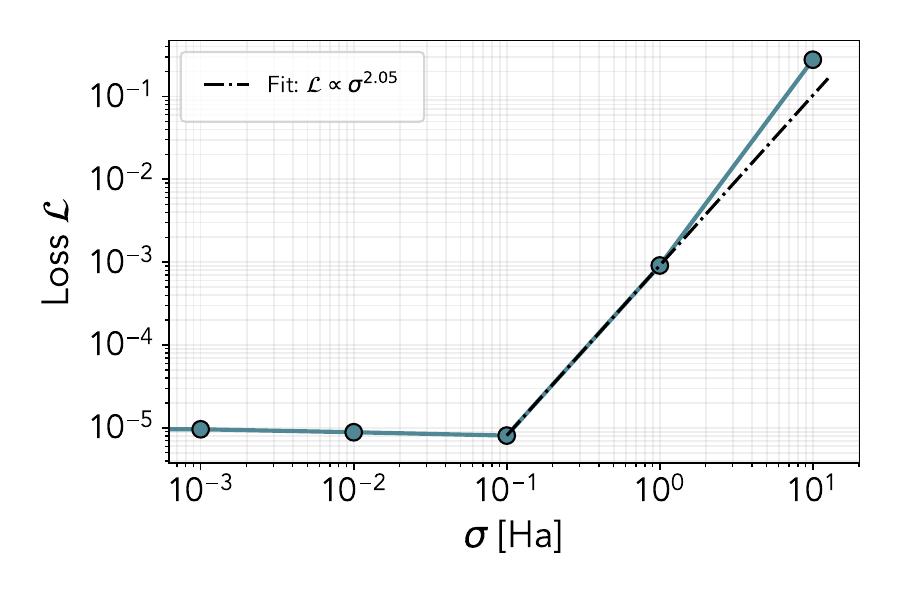}
    \vspace{-0.5cm}
    \caption{
    Dependence of the converged loss (see Eq.~\eqref{eq:loss}) on the noise level $\sigma$ when training the Global KSR model.
    The noise is sampled from a normal distribution $\mathcal{N}(0, \sigma^2)$ with standard deviation $\sigma$ denoting the noise level. 
    The results are obtained within the 1D framework using the same set of molecular geometries and hyperparameters as in Fig.~\ref{fig:h2-1D}. 
    The fit is performed using the last three points ($\sigma \in \{0.1, 1, 10\}$ Ha), and all data points correspond to the mean converged loss over four different random initializations of the model parameters.
    }
    \label{fig:results-noise-ksr}
\end{figure}

The validation of the theory lies in its verification within a concrete implementation. 
To this end, we quantify the effect of typical noise sources on the training capacity of our models. 
For this purpose, we select the most sophisticated classical model used in this work -- Global KSR -- without modifying its hyperparameters from ref.~\cite{ksr}. 
Noise is modeled by sampling from a Gaussian distribution $\mathcal{N}(0, \sigma^2)$, where the standard deviation $\sigma$ specifies the noise level and mimics typical shot noise in QNNs. 
The sampled noise is added to the output of the Global KSR model. 
In this setting, Fig.~\ref{fig:results-noise-ksr} reports the actual errors (loss) produced by our KS-DFT implementation using Global KSR. This experiment isolates the effect of noise and directly tests whether the expected linear scaling is recovered in a full SCF implementation.
We observe that our KS-DFT framework is robust up to significant levels of Gaussian noise ($\sigma = 0.1$ Ha) in the KSR model and, for $\sigma > 0.1$ Ha, the (converged) energy and density squared error (see Eq.~\eqref{eq:loss}) scale with $\mathcal{O}(\sigma^{2.05})$ as expected. 
Hence, the total error $\varepsilon$ scales approximately linearly with $\sigma$, consistent with Theorem~\ref{thm:IterativeSCF}. 
The slight deviation from exact quadratic scaling in the fit can be attributed to the instability in training at very large noise levels ($\sigma = 10$ Ha).
This shows that the Global KSR model, is expressive enough to compensate for the added noise and still converge to a similar loss in magnitude up to $\sigma = 0.1$ Ha.

For an XC functional trained in a noisy setting, an important question is whether it can generalize to unseen data and maintain a similar error. 
We test this hypothesis in Tab.~\ref{tab:noisy-ksr}, where the model interpolates the dissociation curve of hydrogen and is evaluated on unseen data points. 
All reported metrics remain stable for $\sigma < 10^{-1}$, e.g., NPE is of order of $10^{-3}$ as in the noiseless $\sigma = 0$ case, indicating that the noise does not impair the model’s generalization capacity. 
Instead, it appears to act as a regularizer, mitigating overfitting, as also observed for QNNs in ref.~\cite{somogyi2024method}.

To isolate the effect of quantum noise on the quality of the XC functional, we consider a purely quantum model, Product QNN, without any classical post-processing. 
We choose a simple configuration with $N_q = 3$ qubits and $N_l = 3$ layers, using the local XC embedding from Eq.~\eqref{eq:ksr-xc}, identical to that employed in the Global KSR model above and commonly used in KS-DFT implementations. 
Thus, the only change from the previous experiment in Tab.~\ref{tab:noisy-ksr} is the model architecture.
We focus on a standard noise mechanism -- the combination of amplitude damping and phase damping -- described in detail in App.~\ref{sec:noise-channels}. 
Shot noise and gate noise are studied separately:
(i) For gate noise, we fix $N_{\mathrm{shots}} = \infty$ and vary the noise probability $p \in \{10^{-3}, 10^{-2}, 10^{-1}\}$, error rates achievable on current quantum computers based on superconducting qubits~\cite{bravyi2024high}.
(ii) For shot noise, we set $p = 0$ and vary $N_{\mathrm{shots}} \in \{ 10, 10^2, 10^4, 10^6 \}$.
Table~\ref{tab:noisy-lqnn} summarizes the results. 
For all gate-noise experiments, energy and density errors remain of the same order of magnitude regardless of $p$, as the baseline model already exhibits relatively large but constant errors of order $10^{-2}$ Ha. 
In contrast, reducing $N_{\mathrm{shots}}$ below $10^4$ significantly degrades generalization performance, with errors increasing by roughly an order of magnitude.
As shown in App.~\ref{sec:shots}, the typical statistical error scales as
$
\varepsilon = \mathcal{O}\left(N_{\mathrm{shots}}^{-1/2}\right),
$
which is clearly observed for $N_{\mathrm{shots}} = 10^2$, yielding NPE $\approx 10^{-1}$. This demonstrates that within the differentiable KS-DFT framework, at least $N_{\mathrm{shots}} > 10^4$ are required to match the noiseless performance in case of Local PrQNN, and achieving chemical precision of 1 mHa would likely require $N_{\mathrm{shots}} \approx 10^6$ with other, more expressive, models.
Next, we discuss in detail the resource estimates for hardware implementation based on these estimates.

\begin{table}
    \centering
    \caption[]{
    Performance of Global KSR within our differentiable 1D KS-DFT framework for fitting the dissociation profile of the hydrogen molecule, with additive noise drawn from a normal distribution $\mathcal{N}(0, \sigma^2)$.
    The results are obtained using the same hyperparameters and molecular geometries as in Tab.~\ref{tab:results-1D-h2}. 
    }
    \label{tab:noisy-ksr}
    \smaller
    \setlength{\tabcolsep}{6.5pt}
    \begin{tabular*}{\columnwidth}{llll}    \hline\hline
    {$\sigma$ [Ha]} & {$\Delta E$ [Ha]} & Av. MSE($n$) & NPE [Ha] \\
    \hline
0 & $(1.0 \pm 1.0)\cdot10^{-3}$ & $(2.4 \pm 2.4)\cdot10^{-6}$ & $4.0\cdot10^{-3}$ \\
$10^{-3}$ & $(1.4 \pm 1.8)\cdot10^{-3}$ & $(7.5 \pm 5.5)\cdot10^{-6}$ & $2.60\cdot10^{-3}$ \\
$10^{-2}$  & $(2.8 \pm 0.7)\cdot10^{-3}$ & $(7.5 \pm 9.7)\cdot10^{-6}$ & $3.85\cdot10^{-3}$ \\
$10^{-1}$   & $(1.5 \pm 1.0)\cdot10^{-2}$ & $(2.7 \pm 2.1)\cdot10^{-5}$ & $3.17\cdot10^{-2}$ \\
$10^{0}$   & $(5.2 \pm 3.1)\cdot10^{-2}$ & $(8.6 \pm 6.0)\cdot10^{-4}$ & $1.15\cdot10^{-1}$ \\
$10^{1}$  & $(4.9 \pm 0.2)\cdot10^{0}$  & $(6.3 \pm 4.0)\cdot10^{-3}$ & $5.32\cdot10^{0}$ \\
\hline\hline
    \end{tabular*}
\end{table}

\begin{table}
    \centering
    \caption[]{
    Performance of Local PrQNN in our differentiable 1D KS-DFT framework for fitting the dissociation profile of the hydrogen molecule, with gate (amplitude and phase damping channels) and measurement noise.
    The results are obtained using the same hyperparameters and molecular geometries as in Tab.~\ref{tab:results-1D-h2}. 
    }
    \label{tab:noisy-lqnn}
    \smaller
    \setlength{\tabcolsep}{3.5pt}
    \begin{tabular*}{\columnwidth}{lllll} 
    \hline\hline
    {$p$} & {$N_{\mathrm{shots}}$} & {$\Delta E$ [Ha]} & Av. MSE($n$) & NPE [Ha] \\
    \hline
$0$ & $\infty$ & $(1.6 \pm 1.6)\cdot10^{-2}$ & $(1.5 \pm 1.5)\cdot10^{-4}$ & $5.35\cdot10^{-2}$ \\
$10^{-3}$ & $\infty$  & $(1.7 \pm 1.5)\cdot10^{-2}$ & $(2.3 \pm 2.4)\cdot10^{-4}$ & $5.27\cdot10^{-2}$ \\
$10^{-2}$ & $\infty$  & $(1.6 \pm 1.6)\cdot10^{-2}$ & $(1.5 \pm 1.5)\cdot10^{-4}$ & $5.35\cdot10^{-2}$ \\
$10^{-1}$ & $\infty$  & $(1.7 \pm 1.5)\cdot10^{-2}$ & $(1.5 \pm 1.4)\cdot10^{-4}$ & $5.27\cdot10^{-2}$ \\
$0$ & $10^{6}$ & $(1.6 \pm 1.6)\cdot10^{-2}$ & $(1.5 \pm 1.5)\cdot10^{-4}$ & $5.36\cdot10^{-2}$ \\
$0$ & $10^{4}$ & $(1.6 \pm 1.6)\cdot10^{-2}$ & $(1.5 \pm 1.5)\cdot10^{-4}$ & $5.42\cdot10^{-2}$ \\
$0$ & $10^{2}$ & $(5.5 \pm 11.9)\cdot10^{-2}$ & $(1.6 \pm 4.1)\cdot10^{-3}$ & $3.92\cdot10^{-1}$ \\
$0$ & $10^{1}$ & $(1.3 \pm 1.7)\cdot10^{-1}$ & $(4.0 \pm 6.3)\cdot10^{-3}$ & $5.02\cdot10^{-1}$ \\
    \hline\hline
    \end{tabular*}
\end{table}

\subsection{Resource estimations for quantum hardware implementation}

A robust evaluation of a competitive quantum model ideally involves assessing its time-to-solution, generalization capacity, parameter efficiency, and expressive power. 
In this context, the resources required for evaluating quantum XC models align with those conventionally utilized for DQC-based QNNs.~\cite{kyriienko2021solving}
We focus on the time-to-solution, which is contingent on the total number of measurements (shots) required for an accurate output evaluation of quantum models.
To that end, we examine the number of shots $N^{\text{tot}}_{\text{shots}}$ needed for the gradients $\nabla_{\theta}E_{\mathrm{XC}}$, $v_{\mathrm{XC}}[n](\mathbf{r})=$ $\delta E_{\mathrm{XC}}[n] / \delta n(\mathbf{r})$ and energy $E_\mathrm{XC}$ evaluations during the training phase.

\textit{Assumptions.} To reasonably assess the resource requirements, it is important to note that, to achieve a desired precision $\varepsilon$ on the total energy, the required number of epochs $N_{\text{epochs}}$ and the number of model parameters $N_{\text{param}}$ are likely to depend on the problem size, i.e., the number of atoms $N_{\mathrm{atoms}}$. 
Larger and more diverse datasets may require a more expressive XC model -- thus a greater number of parameters -- as well as adjustments to the batch size and, consequently, an increase in the number of epochs.
In such situations, with large $N_{\text{param}}$ (i.e., large depth), the barren plateau problem may arise, necessitating the use of barren-plateau-resistant architectures~\cite{schatzki2022theoretical} and appropriate mitigation techniques~\cite{sannia2024engineered}.
For the sake of providing an approximate estimate, we assume that $N_{\text{epochs}}$ and $N_{\text{param}}$ do not depend on the problem size.
The exact scaling would depend on the chosen model and other hyperparameters of the framework.
Since neural XC functionals in our approach can be designed to be independent of the problem size, they do not need to change as $N_{\mathrm{atoms}}$ increases, although they may no longer achieve accuracy $\varepsilon$ without architectural adjustment.
Thus, our estimates do not guarantee achieving accuracy $\varepsilon$ but instead provide an approximate minimal shot budget under the aforementioned assumptions.

\textit{Energy evaluation.} For a single evaluation of the XC functional, the cost operator $\hat{C} = \sum_{i=1}^{N_\mathrm{q}} \hat{Z_i},$ in the loss function is diagonal by design. 
It allows for the simultaneous measurement of all its terms (due to mutual commutation), thereby mitigating associated errors on noisy quantum processors.
This drastically reduces the number of shots in comparison to common \textit{ab intio} variational approaches~\cite{peruzzo2014variational, mcardle2019variational} that require the cost operator to be the electronic Hamiltonian of given system, which typically has many non-commuting terms to be measured (i.e., $\mathcal{O}(N^{4})$ terms, where $N$ is the system size).
In App.~\ref{sec:shots}, we establish that $N_{\mathrm{shots}} = \mathcal{O}(\varepsilon^{-2})$ measurements per expectation value are required to achieve an accuracy $\varepsilon$ on the XC energy $E_{\mathrm{XC}}$ when using total magnetization cost operator. 

\textit{Gradient evaluation.} To evaluate $\nabla_{\theta}E_{\mathrm{XC}}$ or $v_{\mathrm{XC}}$ by using the parameter-shift rule~\cite{schuld2019evaluating}, two expectation values need to be computed at each epoch during the training (see Appendix~\ref{app:diff}).
In practical terms, considering that we have $N_{\text{epochs}}$ during training, we require $N_{\text{shots}}$ of order $\mathcal{O}(\varepsilon^{-2})$ per expectation value to achieve the desired accuracy $\varepsilon$. 
Consequently, the total number of shots during training performed for  $\nabla_{\theta}E_{\mathrm{XC}}$, taking into account \(N_{\text{param}}\) training parameters, amounts to $2N_{\text{shots}}N_{\text{param}}N_{\text{epochs}}$.
Similarly for $v_{\mathrm{XC}}$, the gradient $\delta E_{\mathrm{XC}}[n] / \delta n(\mathbf{r})$ is evaluated at $N_{\mathrm{grid}}$ points at $N_{\text{KS}}$ iterations, resulting in $2N_{\text{shots}}N_{\text{KS}}N_{\text{grid}}N_{\text{epochs}}$ shots.

\textit{Global embedding.} In addition, $E_\mathrm{XC}$ needs to be inferred at every KS iteration.
In our experience, the number of KS iterations can be set at \(N_{\text{KS}} \in [15 - 100]\) depending on the size and complexity (strongly correlated regime) of the system.
See ref.~\cite{das2023accelerating}, where large-scale KS-DFT calculations required up to 30 KS iterations.
Therefore, the total cost in terms of measurements in our framework is given by $$
N^{\text{tot,g}}_{\text{shots}} = (2N_{\text{param}}+2N_{\text{grid}}N_{\text{KS}}+N_{\text{KS}})N_{\text{shots}}N_{\text{epochs}}.
$$
For the scaling approximation, we consider the number of trainable parameters $N_{\text{param}}$ and the number of grid points $N_{\text{grid}}$ to be significantly larger than $N_{\text{KS}}$.
In the following, we consider $N_{\text{KS}}$ constant. 
Hence, the overall cost in terms of measurements is generally expected to scale as 
$$
N^{\text{tot,g}}_{\text{shots}} \sim \mathcal{O}((N_{\text{param}} + N_{\text{grid}})N_{\text{shots}}N_{\text{epochs}}).
$$
This scaling is valid for the global energy embedding (Eq.~\eqref{eq:gl-xc}), where a singular energy evaluation suffices~(\(E_{\text{XC}}[n] \approx F_{\theta}[n]\)).

\textit{Local embedding.} In a local embedding scenario, the XC energy \(E_{\text{XC}}\) is approximated through \(N_{\text{grid}}\) model evaluations, where the total error is the weighted sum of individual errors, leading to number of shots much greater than with the global energy embedding, $N_{\text{grid}}N_{\text{KS}}N_{\text{shots}}N_{\text{epochs}} \gg N_{\text{KS}}N_{\text{shots}}N_{\text{epochs}}$.
The total number of shots required is 
$$
N^{\text{tot,l}}_{\text{shots}} = (2N_{\text{param}}+ 2N_{\text{grid}}N_{\text{KS}} + N_{\text{grid}}N_{\text{KS}})N_{\text{shots}}N_{\text{epochs}}.
$$
Hence, the overall total measurement scaling for local embedding can be estimated as 
$$N^{\text{tot,l}}_{\text{shots}} \sim \mathcal{O}((N_{\text{param}}+ N_{\text{grid}})N_{\text{shots}}N_{\text{epochs}}).$$

\begin{table}[ht]
\centering
\caption[]{Scaling of the number of measurements for the training with the proposed local (Eq.~\eqref{eq:ksr-xc}) and global (Eq.~\eqref{eq:gl-xc}) energy embeddings in quantum-enhanced KS-DFT.
We assume that $N_{\text{KS}}$ remains constant~\cite{das2023accelerating}, and define $N^{\text{grid}}_{\text{param}} = N_{\text{grid}} + N_{\text{param}}$.
} 
\label{tab:scaling}
\smaller
\setlength{\tabcolsep}{4.5pt}
\begin{tabular*}{\columnwidth}{lll} 
\hline\hline
& Local & Global \\
\hline
$E_{\mathrm{XC}}$  &  \(\mathcal{O}(N_{\text{grid}}N_{\text{shots}}N_{\text{epochs}})\)  &  \(\mathcal{O}(N_{\text{shots}} N_{\text{epochs}})\)  \\
$\nabla_{\theta}E_{\mathrm{XC}}$  &  \(\mathcal{O}(N_{\text{param}}N_{\text{shots}}N_{\text{epochs}})\)  &  \(\mathcal{O}(N_{\text{param}}N_{\text{shots}}N_{\text{epochs}})\) \\
$v_{\mathrm{XC}}$   & \(\mathcal{O}(N_{\text{grid}}N_{\text{shots}}N_{\text{epochs}})\)   & \(\mathcal{O}(N_{\text{grid}}N_{\text{shots}}N_{\text{epochs}})\) \\
\hline
Total & $\mathcal{O}(N^{\text{grid}}_{\text{param}}N_{\text{shots}}N_{\text{epochs}})$ & $\mathcal{O}(N^{\text{grid}}_{\text{param}}N_{\text{shots}}N_{\text{epochs}})$  \\
\hline\hline
\end{tabular*}
\end{table}

In addition, the grid size $N_{\text{grid}}$ scales linearly with the number of atoms $N_{\text{grid}} \sim \mathcal{O}(N_{\text{atoms}})$~\cite{becke1988multicenter}.
Thus, under our assumptions, the total cost for training with local or global energy embedding scales linearly with the system size in terms of shots due to the need to evaluate the local XC potential on the grid.
The aforementioned costs are summarized in Table~\ref{tab:scaling}.

\textit{Concrete estimations.}
To estimate the shot budget required for an actual implementation on quantum hardware, we assume the typical values used in this work: $N_{\text{epochs}} = 10^{3}$, $N_{\text{KS}} = 10$, $N_{\text{grid}} = 10^{3}$, and $N_{\text{param}} = 10^{2}$ (e.g., Global QNN).
For the $N_{\mathrm{shots}}$ shots per energy evaluation, Sec.~\ref{sec:proofs} shows, for a QNN-based functional, that deterioration from the noiseless result occurs as soon as $N_{\mathrm{shots}} < 10^{4}$, consistent with the derivations in App.~\ref{sec:shots}. Based on these estimates, achieving $\varepsilon \leq 1$ mHa (chemical precision) is expected to require at least $N_{\mathrm{shots}} \approx 10^{4} - 10^{6}$, since the error scales as $\varepsilon = \mathcal{O}(N_{\text{shots}}^{-1/2})$.
With the global embedding, the total number of shots is
$
N^{\text{tot,g}}_{\text{shots}}.
$
Hence, per epoch, we require $\approx 10^{8} - 10^{10}$ shots, and for the entire training, $N^{\text{tot,g}}_{\text{shots}} \approx 10^{11} - 10^{13}$ shots.

The above assumes sequential measurements; however, only the epochs and KS iterations are inherently sequential. The gradients $\nabla_{\theta} E_{\mathrm{XC}}$ and $v_{\mathrm{XC}}$ can be fully parallelized on quantum hardware (i.e., we set $N_{\text{param}} = 1$ and $N_{\text{grid}} = 1$), reducing the shot requirement to
$$
N^{\text{tot,g}}_{\text{shots}} \approx N_{\text{KS}} N_{\text{shots}} N_{\text{epochs}},
$$
which yields $10^{8} - 10^{10}$ shots for the entire training.
On quantum hardware, the overall time required for training quantum models depends on the single-shot rate, highly dependent on the hardware type.
This consideration ultimately determines the computational resources needed to attain a practical advantage (e.g., lower inference time) over classical XC models.
Assuming an optimistic measurement time on a superconducting device of 1 ms per shot (e.g., $\approx 1$ kHz shot rate reported in ref.~\cite{kim2023evidence}) and $10^{8}$ total shots, complete training could, in principle, be done in about one day. 
Additional measurement schemes, such as neural network estimators~\cite{torlai2020precise}, could be employed to reduce variance, which would otherwise prevent achieving chemical precision.

Given the imperative to consistently achieve chemical precision for meaningful outcomes (e.g., accurate energy barriers), global embedding of the quantum XC functional (see Eq.~\eqref{eq:gl-xc}) emerges as the most promising near-term approach (i.e., especially in situations with small grid size, $N_{\text{grid}} \ll N_{\text{shots}}$ and $N_{\text{grid}} \ll N_{\text{epochs}}$).
In our experiments, as our loss function is defined against energies, the fact that the global energy embedding does not scale with grid size significantly reduces the energy error of our models in comparison to the local embedding, leading to the best outcomes.

\section{\label{sec:conclusion} Conclusions and outlook}

In this work, we developed various (Q)NN-based architectures and evaluated their capabilities to represent XC functionals within KS-DFT. 
To that end, we extended existing differentiable KS-DFT codes based on JAX and PySCF frameworks, in both 1D and 3D settings, to accommodate quantum models.
Next, we defined four classes of models capable of incorporating an arbitrary number of quantum and/or classical layers.
As in other works,~\cite{ksr, kasim2021learning} we classified the models by their locality (i.e., the fraction of density points considered as the input). 
To integrate these architectures into KS-DFT, we introduced a global embedding strategy, well suited for quantum architectures due to its minimal measurement requirements on quantum hardware.
In terms of scalability, both local and global approaches are influenced by the system size due to the need to compute the local XC potential.
However, a significant advantage of our model is that the global energy embedding does not scale with the grid size. 
This feature substantially improved the accuracy of our model, yielding the best results in our experiments.
Finally, we performed numerical simulations of the entire training process on a classical computer.

The results of our experiments demonstrate the promising capabilities of QNNs in efficiently modelling the XC energy functionals.
In our proof-of-concept evaluation, we showed that quantum models can achieve performance comparable to classical models with a similar number of parameters (Sec.~\ref{sec:qnn-vs-mlp}), offering small improvements in generalization.
Moreover, the potentially greater expressivity of QNNs compared to classical models~\cite{abbas2021power} (e.g., exponential number of Fourier frequencies~\cite{schuld2021effect} as discussed in Sec.~\ref{sec:feature_maps}) may not be strictly necessary for approximating the exact XC functional.
Hence, the metrics typically adopted in analyzing quantum ML models (such as expressivity and entangling capability) may not be directly related to the ability of those models to perform well on a specific classical dataset.
This observation motivates further investigations on larger datasets and more complex instances of quantum models, a non-trivial task given the inherent challenges of simulating quantum mechanics on classical computers.

Only fully global (quantum-enhanced) models, where the whole density is considered, produced the best results.
In particular, Global QNN provided the simplest, yet sufficiently expressive architecture to model with chemical precision the small molecular systems considered in this work, both in 1D (H$_2$) and 3D (planar H$_4$), even in the presence of strong correlation.
With a mere 316 trainable parameters, Global QNN produced similar results to the state-of-the-art Global KSR model, which requires ten times more parameters.
The same modest model demonstrated satisfactory generalization capability on the H$_2$-H$_2$ system (two separate dihydrogen molecules), despite being trained solely on H$_2$ and H$_4$.
This offers a promising perspective on the potential of QNNs for generalization within this framework.

Moreover, we incorporated a fractional orbital occupation approach into our framework, which can be also employed to train purely classical models.
This FO approach was instrumental in enhancing the accuracy of our models in 3D. 
By employing the FO method, we reduced both the average energy error and its variance by approximately six-fold. 
In that way, we effectively mitigated energy oscillations and substantially improved the extrapolation accuracy of Global QiCNN.

To ensure the stability of our framework in realistic noisy conditions present in current quantum hardware -- specifically, the SCF loop, which is the core subroutine of KS-DFT -- we developed a theoretical framework in Sec.~\ref{sec:proofs} that establishes error bounds and specifies the conditions the model and implementation must satisfy to guarantee well-controlled errors.
We verified this theory in a concrete implementation under various noise settings, including typical hardware noise, and recovered the predicted linear scaling.
Furthermore, we provided resource estimates indicating the number of measurements required for a hardware implementation, demonstrating the feasibility of our approach when observables are measured in parallel.

Additionally, in our QML approach, a quantum-enhanced neural functional can be initially trained and subsequently fine-tuned with additional data as needed. 
Once trained, this functional can be directly applied to any system to infer its properties. 
This stands in contrast to variational quantum eigensolver~\cite{peruzzo2014variational} and imaginary time~\cite{mcardle2019variational} approaches, which necessitate full optimization for each system.
The latter methods evolve the quantum state using the exact electronic Hamiltonian. 
In contrast, QML methods depend on the quality of the training data to ensure qualitatively correct predictions.

As an outlook, in this study, we provided an initial evaluation of proposed techniques on quantum and quantum-inspired neural architectures and molecules.
In future works, it would be necessary to expand our training dataset substantially to understand the practical efficiency of quantum-enhanced models. 
Efforts to construct such datasets are currently ongoing.~\cite{casares2023graddft}
The capability of our models to generalize to significantly larger systems is a critical aspect to investigate. 
Specifically, can these models effectively learn from small, strongly correlated systems and subsequently generalize accurately to larger systems?

Additionally, the FO approach~\cite{frac-occ} employed in this study has been evaluated solely on dimers. 
It remains to be verified whether a single value of the temperature parameter $\gamma$, as detailed in Appendix~\ref{app:frac-occ}, is adequate for training on larger datasets.
Alternatively, parameter-free FO approaches, such as the quadratically convergent method can be employed instead~\cite{cances2003quadratically}.
In this work, we opted for the simplest-to-implement method.

An essential direction for future work is to establish rigorous criteria under which QNNs provide genuine advantages over classical models, moving beyond general expressivity arguments.
While recent advances in ML-based XC design~\cite{ksr,kasim2021learning,luise2025accurate} have yielded promising results, the quest for a universal functional remains unresolved. 
To systematically explore this challenge, we have released QEX~\cite{qex2025pasqal}, a flexible framework that supports arbitrary classical-quantum hybrids.
Our approach is inherently compatible with barren plateau mitigation techniques (e.g., engineered dissipation~\cite{sannia2024engineered}) and with architectures naturally resilient to trainability issues (e.g., QCNNs~\cite{cong2019quantum} and permutation-equivariant models~\cite{schatzki2022theoretical}). 
Looking ahead, inspired by quantum-enhanced DFT approaches that leverage QPU-prepared ground states~\cite{sheridan2024enhancing,koridon2025learning}, we propose using QNNs as representation learners that extract task-relevant features from approximate ground-state wavefunctions. 
Analogous to QCNNs on quantum data~\cite{cong2019quantum}, such models could provide inductive biases tailored to electronic-structure tasks. 
This perspective positions QNNs not merely as end-to-end regressors for the XC functional, but as modules within a hybrid pipeline for learning more expressive and accurate approximations. 

The advent of early fault-tolerant quantum computing (FTQC) architectures is expected to significantly broaden the range of feasible quantum algorithms for chemistry. 
While fully fledged protocols such as Quantum Phase Estimation (QPE) will ultimately become the method of choice as devices mature, recent studies already show that logical-qubit architectures can outperform variational methods on near-term problems~\cite{van2024end}. 
Such approaches are fully compatible with our framework and may serve as a natural stepping stone prior to large-scale QPE. 
In the broader landscape, we anticipate that FTQC-based methods and DFT will coexist as complementary paradigms, with hybrid strategies likely to play a central role in the foreseeable future.

Overall, our findings emphasize the promising potential of QNN-based XC functionals within DFT. 
With this contribution, we provided an initial blueprint that would serve as a stepping stone towards the development of more optimized frameworks and quantum-enhanced neural architectures that can tackle even more challenging chemical problems with greater accuracy and efficiency.

\section{\label{sec:acknowledgements}Acknowledgements}

The authors thank the Pasqal team for fruitful discussions, especially Atiyo Ghosh, Andrea A. Gentile, John West and Louis-Paul Henry for their valuable insights.

\appendix

\section{Incorporating fractional occupation in RKS-DFT~\label{app:frac-occ}}

Chai's algorithm for fractional occupation is based on the argument that the distribution of occupation numbers is independent of the specific many-body system and adheres to a universal form described by the Fermi-Dirac distribution.~\cite{frac-occ}
Hence, the density can be expressed as 
$n({\bf r}) =  \sum_{i=1}^{\infty} g_{i} |\psi_{i}({\bf r}) |^{2},$
where the occupation number $g_{i}$ is the Fermi-Dirac function 
$ g_{i} = \{1+\text{exp}[( \epsilon_{i} - \mu)/ \gamma] \}^{-1},$
which obeys the following two conditions, 
$\sum_{i=1}^{\infty} g_{i} = N,$ and $ 0\le g_{i} \le 1,$ where $\epsilon_{i}$ is the orbital energy of the $i^{\text{th}}$ orbital $\psi_{i}({\bf r})$, and $\mu$ is the chemical potential chosen to maintain a constant number of electrons $N$. 
To account for this modification, the expression for the energy is then given by
\begin{equation}
\begin{aligned}
E[n] = & T_s^\gamma\left[\left\{g_i, \psi_i\right\}\right]-\frac{\gamma}{k_B}S\left[\left\{g_i\right\}\right]+\int n(\mathbf{r}) v(\mathbf{r}) d \mathbf{r} \\
& +E_{\mathrm{H}}[n]+E^{\gamma}_{\text{XC}}[n].
\end{aligned}
\end{equation}
with $\gamma$-dependent XC energy given by $E^{\gamma}_{\text{XC}}[n] = E_{\text{XC}}[n]+E_\gamma[n]$, where $E_\gamma[n]$ represents the difference between the non-interacting kinetic free energy at zero temperature and that at temperature $\gamma$.
The entropy contribution is given by $-\frac{\gamma}{k_B} S_s^\gamma\left[\left\{g_i\right\}\right]=\gamma \sum_{i=1}^{\infty}\left[g_i \ln \left(g_i\right)+\left(1-g_i\right) \ln \left(1-g_i\right)\right]$.
Given a suitable electronic temperature $\gamma$, the chemical potential $\mu$ is fixed by $\sum_{i=1}^{\infty}\left\{1+\exp \left[\left(\epsilon_i-\mu\right) / \gamma\right]\right\}^{-1}=N$ to conserve the number of particles.
In essence, only the entropy contribution needs to be added to an existing RKS-DFT implementation and $E^{\gamma}_{\text{XC}}[n]$ is learned by the neural XC.
This approach does not impact the overall computational scaling. 
The trade-off for improved density representability is the inclusion of a single free parameter, the temperature $\gamma$, which may require initial tuning for consistency across systems, a subject that warrants further investigation.

\section{Integration grids and neural XC functionals \label{app:grids}}

\begin{figure*}
    \centering
    \includegraphics[width=2.\columnwidth]{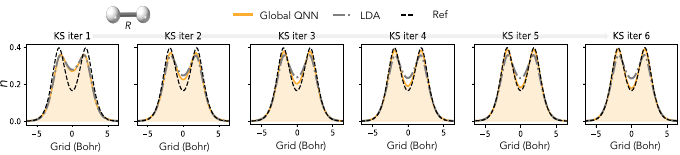}
    \caption{
    Electron density of a hydrogen molecule ($R = 3.84$ Bohr) at first six KS iterations using Global QNN within the 1D KS-DFT framework. 
    At every KS iteration, the electron density is improved, eventually reaching the reference one.
    The reference (Ref) result is generated using DMRG by Li~\etal~\cite{ksr}.
    }
    \label{fig:den-ks-iter}
\end{figure*}

A neural XC model needs to implicitly learn a function $f_{\theta}$ that is adapted to the $\{w_i\}^{N_{\mathrm{grid}}}_{i=1}$ grid points, distributed according to a fixed grid type.
There are multiple choices of grid types and densities (number of grid points $N_{\mathrm{grid}}$), with a finer grid mesh yielding more accurate results at a higher computational cost.
Ideally, a neural XC model needs to be trained/constrained to be invariant to finer mesh grids of the same type.
See Fig.~\ref{fig:den-ks-iter} where a trained Global QNN model is used to compute the electron density throughout KS iterations on a fixed grid of 513 points.
This adds to the challenge of making generalizable neural XC functionals where grid (hence the number of model inputs) needs to be adapted and neural XC functional made consistent with the changes.
For instance, a potential solution is to design a grid invariant embedding or add constraints in the loss to make the NN $f_{\theta}$ invariant to grid density changes.
Another approach would be to use also the coordinates $\{ \mathbf{r}_i \}_{i=1}^{N_{\mathrm{grid}}}$ as inputs of the model $f_{\theta}$ to explicitly include the information about the distribution of grid points in the Cartesian space.
However, this approach would quadruple the number of inputs, so it is not currently considered for simplicity.
In our work, we minimize grid variations by using a single grid type (e.g., of Lebedev type\cite{Lebedev-USSR-1975}) and the same number of grid points for all molecules whenever possible (our models have a fixed number of $N_{\mathrm{grid}}$ inputs) in order to facilitate the training.

\section{Building blocks of quantum neural networks \label{app:qnn}}

In this section, we present typical differentiable quantum circuits (DQCs) that are used in this work.

\subsection{Quantum circuit ansatz}

As an ansatz $U_{a}$ layer, we employ the popular Hardware-Efficient Ansatz (HEA) introduced in the context of VQE for chemistry.~\cite{peruzzo2014variational}
The heuristic ansatz structure is not specifically tailored to the problem; thus, it represents the most general quantum model, closely resembling a classical MLP.
It is composed of selected single qubit rotations (i.e., $R_{z}(\theta)= e^{-\frac{i}{2}\theta \hat{Z}}$) that are placed on all available qubits with distinct angles $\theta_a$,
\begin{equation}
    U_{\mathrm{sqr}}(\theta_a)= \prod^{N_q}_{i=1} e^{-\frac{i}{2}\theta^{i}_{a,x}\hat{Z}}e^{-\frac{i}{2}\theta^{i}_{a,y}\hat{X}}e^{-\frac{i}{2}\theta^{i}_{a,z}\hat{Z}},
\end{equation}
which enables the representation of the most general rotation on a Bloch sphere.
The entanglement block consists of an arbitrary pattern of (parameterized) multi-qubit gates that introduce interactions between qubits. 
For example, considering the two-qubit case, a unitary is given by
\begin{equation}
     U_{\mathrm{ent}}=\prod_{(i, j) \in S} \operatorname{CNOT}(i, j)
\end{equation}
where $\operatorname{CNOT}(i, j) = e^{-\frac{i}{4}{(I_i - \hat{Z}_i)(I_j-\hat{X}_j)}}$ and $S$ represents an ordered set of pairs of qubit indices. 
We will employ an ``alternate ladder'' set $S_{\mathrm{al}} = S_1 \cup S_2 $ with $S_{1} = \left((0,1), (2,3),..., (N_q-1, N_q)\right)$ and $S_{2} = \left((1,2), (3, 4), ..., (N_q-2, N_q-1)\right)$ unless specified otherwise.
Note that a parametrized entanglement layer can also be defined as
$U_{\mathrm{ent}}^{\mathrm{RZZ}}(\theta_{\mathrm{ent}})=\prod_{(i, j) \in S} R_{\mathrm{ZZ}}^{i, j}\left(\theta_{\mathrm{ent}}^{i, j}\right)=\prod_{(i, j) \in S} \operatorname{CNOT}(i, j) e^{-\frac{i}{2}\theta_{\mathrm{ent}}^{i, j}\hat{Z}}\operatorname{CNOT}(i, j),$
where $R_{\mathrm{ZZ}}^{i, j}$ represents the parametrized $\mathrm{ZZ}$ gate between qubits.
This allows the model to cancel the interaction $\left(\theta_{ent}^{i, j}=0\right)$ between qubits if necessary.  
The final quantum state $|\psi\rangle$ prepared by repetitive applications of the feature map and ansatz is given by
\begin{equation}
|\psi (n, \theta)\rangle=\prod_{l=0}^{N_l}\left(\prod_{d=0}^{N_d} {U}^{l,d}_{\mathrm{ent}}{U}^{l,d}_{\mathrm{sqr}}({\theta^{l,d}_{a}}){{U}}^{l}_{f}(n)\right)|0\rangle,
\label{eq:state}
\end{equation}
and subsequently used in Eq.~\eqref{eq:qmodel} as $f_{{\theta}}(n)=\langle \psi(n, \theta) | \hat{C} | \psi(n, \theta) \rangle$.
Throughout this work, all the quantum layers will be evaluated in this manner.
Next, we discuss how classical data can be embedded in quantum circuits ${U}_{f}(n)$.

\section{Differentiation of quantum models \label{app:diff}}

Gradient evaluation is an essential component for training NNs in general.
Since (quantum) NNs can be thought of as a function $f_{\theta}$, there are various approaches for differentiation.
The standard approach in classical ML is the \textit{Automatic Differentiation} (AD) that exploits the chain rule and stores the intermediate values of subfunctions (layers) of ML models to evaluate the gradients of $f_{\theta}$.
This approach has an advantageous scaling that enables training large classical models and, hence, is always used for the differentiation of classical layers in this work.
In the context of quantum computing, there has been a need to develop an efficient approach, beyond finite differences, for gradient evaluation.
Schuld~\etal~showed that due to the trigonometric relations within the unitaries associated with single qubit rotations, $U_{\mathrm{sqr}} = e^{-\frac{i}{2}\theta \hat{O}}$ (that include single-qubit Pauli generators $\hat{O}$), it is possible to derive a simple and exact derivative formula which highly resembles finite differences, the so-called \textit{Parameter-Shift Rule} (PSR):~\cite{schuld2019evaluating}
\begin{equation}
    \nabla_\theta f_{\theta} (n) = \frac{f_{\theta+\pi / 2}(n)-f_{\theta-\pi / 2}(n)}{2}.
\end{equation}
The generalized PSR (GPSR) rule~\cite{kyriienko2021generalized} is needed for general multi-qubit generators.

\section{Number of measurements}
\label{sec:shots}

To determine the measurement budget to reach an accuracy $\varepsilon$ with QNNs presented in this work, we follow the standard frequentist approach~\cite{wecker2015progress, romero2018strategies}. 
The output of QNNs is obtained by measuring the total magnetization cost operator, which can be expressed as the sum of the expectation values of Pauli-\( Z \) operators over all $N_{{q}}$ qubits,
$
M = \sum_{i=1}^{N_{{q}}} \langle \hat{Z}_i \rangle.
$
Assume each term \( \hat Z_i \) is measured \( m_i \) times. 
The statistical error on each term, \( \varepsilon_i \), is given by
$
\varepsilon_i^2 = \frac{\text{Var}[\langle \hat Z_i \rangle]}{m_i}.
$
Since \( \langle \hat Z_i \rangle \in [-1, 1] \), its variance is upper bounded by 1,
$
\text{Var}[\langle \hat Z_i \rangle] \leq 1.
$
To achieve a total error \( \varepsilon \), we can distribute the per-term errors such that
$
\varepsilon_i^2 = \frac{1}{N_q} \varepsilon^2.
$
Then, the total number of measurements $N_{\mathrm{shots}}$ across all qubits becomes

\begin{equation}
N_{\mathrm{shots}} = \sum_{i=1}^{N_{{q}}} m_i = \sum_{i=1}^{N_{{q}}} \frac{\text{Var}[\langle \hat Z_i \rangle]}{\varepsilon_i^2}
\leq \sum_{i=1}^{N_{{q}}} \frac{1}{\varepsilon^2 / N_q}
= \frac{N_{{q}}^2}{\varepsilon^2}.
\end{equation}
Since all the operators are diagonal, they can be measured simultaneously, resulting in a total shot cost of $N_{\mathrm{shots}} = \mathcal{O}(\varepsilon^{-2})$.

\section{Noise channels} \label{sec:noise-channels}

In Sec.~\ref{sec:shots}, we discussed the error arising from finite sampling. 
In modeling realistic quantum devices, additional noise processes are commonly represented by completely positive trace-preserving (CPTP) maps acting on a qubit’s density matrix $\rho = |\Psi\rangle\langle\Psi|$.
Any CPTP map can be expressed in terms of a set of Kraus operators $\{ \hat K_i\}$ satisfying $\sum_i \hat K_i^\dagger \hat K_i = \mathbb{I}$. Two common single-qubit noise channels are \emph{dephasing} and \emph{amplitude damping}, which we focused on in our experiment in Tab.~\ref{tab:noisy-lqnn} implemented in QEX~\cite{qex2025pasqal}.

\textit{Dephasing (phase-damping) channel.}
The dephasing channel describes the decay of quantum coherences without altering computational-basis populations. A Kraus representation is
\begin{equation}
    \hat K_0 = \sqrt{1 - p}\,\mathbb{I}, 
    \quad 
    \hat K_1 = \sqrt{p}\,\hat Z ,
\end{equation}
where $0 \le p \le 1$ is the dephasing probability and $\hat Z$ is the Pauli-$Z$ operator. 
The channel acts as
\begin{equation}
    \mathcal{E}_{\mathrm{deph}}(\rho) 
    = \hat K_0 \rho \hat K_0^\dagger + \hat K_1 \rho \hat K_1^\dagger.
\end{equation}

\textit{Amplitude-damping channel.}
The amplitude-damping channel models energy relaxation from $\lvert 1 \rangle$ to $\lvert 0 \rangle$, as in spontaneous emission. 
A standard Kraus representation is
\begin{equation}
   \hat  K_2 =
    \begin{pmatrix}
    1 & 0 \\
    0 & \sqrt{1-p}
    \end{pmatrix},
    \quad
   \hat  K_3 =
    \begin{pmatrix}
    0 & \sqrt{p} \\
    0 & 0
    \end{pmatrix} ,
\end{equation}
where $p$ is the probability of decay during the noise process. 
The map is defined as
\begin{equation}
    \mathcal{E}_{\mathrm{amp}}(\rho) 
    = \hat K_2 \rho \hat K_2^\dagger + \hat K_3 \rho \hat K_3^\dagger.
\end{equation}
In Sec.~\ref{sec:proofs}, we analyze the performance of a typical QNN in our framework by applying these noise channels and tuning their strength via the parameter $p$.

\section{Proofs of error bounds} 
\label{sec:app-proofs}

\begin{proof}
We prove Theorem~\ref{thm:IterativeSCF}  under assumptions \ref{assump:contract}--\ref{assump:approxIter}, as stated in Sec.~\ref{sec:proofs}. 
We aim to show that, under an error $\varepsilon$, the exact self-consistent density $n^{\ast}$ obtained from the exact SCF map $\mathcal{T}$ (with the exact XC functional) differs from the approximate self-consistent density $\tilde{n}$ obtained from the approximate SCF map $\widetilde{\mathcal{T}}$ by at most $\mathcal{O}(\varepsilon)$.

\textit{Step 1:} We define iteration errors. 
Let $e_k = \|\,n^{(k)} - n^*\|$, with $||\cdot|| \equiv ||\cdot||_2$. 
Then by definition of $\mathcal{T}$,
\[
  n^{(k+1)} - n^*
  \;=\;
  \widetilde{\mathcal{T}}[n^{(k)}] \;-\; \mathcal{T}[n^*].
\]
We split and add $\mathcal{T}[n^{(k)}]$:
\[
  n^{(k+1)} - n^*
  = \bigl(\widetilde{\mathcal{T}}-\mathcal{T}\bigr)\!\bigl[n^{(k)}\bigr] + \Bigl(\mathcal{T}[n^{(k)}] - \mathcal{T}[n^*]\Bigr).
\]
Hence,
\begin{align}
  e_{k+1}
  &= \bigl\|\,n^{(k+1)} - n^*\bigr\|
   \le \bigl\|\bigl(\widetilde{\mathcal{T}}-\mathcal{T}\bigr)[n^{(k)}]\bigr\|\nonumber \\
   &\quad + \bigl\|\mathcal{T}[n^{(k)}] - \mathcal{T}[n^*]\bigr\|\nonumber\\
  &\le \underbrace{\|\widetilde{\mathcal{T}}[n^{(k)}] - \mathcal{T}[n^{(k)}]\|}_{\text{Term (a)}} 
      +\underbrace{\|\mathcal{T}[n^{(k)}] - \mathcal{T}[n^*]\|}_{\text{Term (b)}}. 
  \label{eq:ekExpand}
\end{align}

\textit{Step 2:} We bound Term (b) by contractivity.
By Assumption~\ref{assump:contract}, for $n_1,n_2\in U$,
\[
  \|\mathcal{T}[n_1]-\mathcal{T}[n_2]\|\;\le\;\kappa\,\|n_1 - n_2\|.
\]
Since $n^{(k)}$ and $n^*$ lie in $U$, we get
\[
  \|\mathcal{T}[n^{(k)}]-\mathcal{T}[n^*]\| \;\le\;\kappa\,\|\,n^{(k)}-n^*\|\;=\;\kappa\,e_k.
\]

\textit{Step 3:} We bound Term (a) by uniform XC error.
From Sec.~\ref{sec:proofs}, we recall that $\widetilde{\mathcal{T}}$ and $\mathcal{T}$ differ only in the XC potential part, and by Assumption~\ref{assump:uniform}, that potential difference is at most $K\,\varepsilon$. 
To convince yourself, see the KS equations in Sec.~\ref{sec:theory}, where Hamiltonian operator $\hat H[n]=-\nabla^2 / 2+v_\mathrm{KS}[n]
$ ultimately defines the SCF map $\mathcal{T}$ through XC potential $v_{\mathrm{XC}}$.
In the local region $U$, changing the potential by at most $K\,\varepsilon$ changes the resulting density by at most some constant factor $\alpha\,\varepsilon$. 
Hence we have
\begin{equation}\label{eq:TTtildeDiff}
  \|\widetilde{\mathcal{T}}[n] - \mathcal{T}[n]\|\;\le\;\alpha\,\varepsilon
  \quad\text{for all}\;n\in U.
\end{equation}

Thus, for $n^{(k)}\in U$,
\[
  \|\widetilde{\mathcal{T}}[n^{(k)}] - \mathcal{T}[n^{(k)}]\|
  \;\le\;\alpha\,\varepsilon.
\]

\textit{Step 4:} We combine the previous statements into Eq.~\eqref{eq:ekExpand}:
\begin{align}
  e_{k+1}
  &\;=\;\|n^{(k+1)}-n^*\| \nonumber\\
  &\;\le\;\|\widetilde{\mathcal{T}}[n^{(k)}]-\mathcal{T}[n^{(k)}]\|
       + \|\mathcal{T}[n^{(k)}]-\mathcal{T}[n^*]\|\nonumber\\
  &\;\le\;\alpha\,\varepsilon \;+\;\kappa\,e_k.\label{eq:eIter}
\end{align}

\textit{Step 5:} The density $n^{(k)}$ remains in $U$ by Assumption~\ref{assump:approxIter}, we solve the recurrence by rewriting
\[
  e_{k+1} \;\le\;\kappa\,e_k \;+\;\alpha\,\varepsilon,
\]
as 
\[
  e_k \;\le\;\kappa^k\,e_0 + \alpha\,\varepsilon\sum_{j=0}^{k-1}\kappa^j
  \;=\;\kappa^k\,e_0 + \alpha\,\varepsilon\,\frac{\,1-\kappa^k\,}{\,1-\kappa\,}.
\]
Set $n^{(0)}$ as the initial guess, so $e_0 = \|n^{(0)}-n^*\|$. This proves the finite-$k$ bound \eqref{eq:e_kBound}. Taking $k\to\infty$, we get
\begin{equation}\label{eq:finalLimit}
  e_\infty = \|\,\widetilde{n} - n^*\| \;\le\;\frac{\alpha\,\varepsilon}{\,1-\kappa\,}.
\end{equation}
Hence the approximate self-consistent density $\widetilde{n}$ is within $\mathcal{O}(\varepsilon)$ of the true electron density $n^*$.

\textit{Step 6:} Energy difference is also $O(\varepsilon)$.
Finally, define $\widetilde{E}\equiv \widetilde{E}[\widetilde{n}]$ and $E^*\equiv E[n^*]$. Write 
\begin{align}
 \widetilde{E} - E^*
  \;&=\;\bigl(\widetilde{E}[\widetilde{n}]-E[\widetilde{n}]\bigr)\;+\;\bigl(E[\widetilde{n}]-E[n^*]\bigr).
\end{align}
Since $\widetilde{E}_{\mathrm{XC}}$ differs from $E_{\mathrm{XC}}$ by at most $\varepsilon$ in value, $\widetilde{E}$ differs from $E$ by at most $\varepsilon$ at the \emph{same} density $\widetilde{n}$. 
Meanwhile, we assume that $|E[\widetilde{n}]-E[n^*]|\le M\,\|\widetilde{n}-n^*\|$ for some Lipschitz constant $M > 0$ near $n^*$ since according to Eq.~\eqref{eq:finalLimit}, $\tilde{n}$ and $n^{\ast}$ are order $\varepsilon$ close in the region $U$ by Assumption~\ref{assump:approxIter}. 
Hence,
\[
  |\widetilde{E}-E^*|
  \le \varepsilon + M\,\|\widetilde{n}-n^*\|
  \le \varepsilon + M\,\frac{\alpha}{1-\kappa}\,\varepsilon
  = C'\,\varepsilon,
\]
where $C' = 1 + \frac{M\,\alpha}{1-\kappa}$. This completes the proof.
\end{proof}

Thus, the final approximate energy is $\mathcal{O}(\varepsilon)$-close to the true energy, demonstrating that, under our assumptions, catastrophic error accumulation in the SCF loop is avoided.

\section{Lipschitz continuity of QNNs}\label{sec:app-lipschitz}

In this section, we establish the Lipschitz continuity of QNNs employed in this work, following the derivations and notation from ref.~\cite{schuld2019evaluating} for clarity.
Recall the definition of quantum models, Eq.~\eqref{eq:qmodel}, here denoted as
$$
f_{\theta}(n) = \langle 0 | U^{\dagger}(n, \theta) \, \hat{C} \, U(n, \theta) | 0 \rangle,
$$
which forms the basis of all our $N_q$-qubit models.
We propose the following Lemma to support our arguments from Sec.~\ref{sec:proofs}.

\begin{lemma}[QNN Lipschitz continuity]\label{lem:qnn-lipschitz-L1}
Let QNN be represented by $f_{\theta}(n)=\langle 0 \mid U^{\dagger}(n,\theta)\,\hat{C}\,U(n,\theta)\mid 0\rangle$ with
\begin{align}
U(n)&=W^{(L+1)} S(n) W^{(L)} \ldots W^{(2)} S(n) W^{(1)}, \nonumber \\
S(n)&=\;e^{-i n H}, \  \ \ \hat{C}=\sum_{i=1}^{N_q}\hat{Z}_i,  \nonumber
\end{align}

where $W^k \equiv W^{k}(\theta)$ are $x$-independent unitaries (representing ansatz), $S(n)$ is the feature map with Hermitian operator $H$ with eigenvalues $\{\lambda_m\}_{m=1}^{d}$, $d = 2^{N_q}$, and $L=N_{l}$ is the number of layers. 
Then $f_{\theta}$ is globally Lipschitz in $x$, defined according to ref.~\cite{schuld2019evaluating} as
\[
f(n)=\sum_{\boldsymbol{k}, \boldsymbol{j} \in[d]^L} e^{i\left(\Lambda_{\boldsymbol{k}}-\Lambda_{\boldsymbol{j}}\right) n} a_{\boldsymbol{k}, \boldsymbol{j}},\qquad
\Lambda_j=\lambda_{j_1}+\cdots+\lambda_{j_L},
\]

Consequently,
\[
\sup_{x\in\mathbb{R}}|f'_{\theta}(n)|
\le N_q \sum_{\boldsymbol{k}, \boldsymbol{j} \in[d]^L}|\Lambda_{\boldsymbol{k}}-\Lambda_{\boldsymbol{j}}|<\infty,
\]
where we use multi-index $\boldsymbol{j}=$ $\left\{j_1, \ldots, j_L\right\} \in[d]^L$, $[d]^L$ denotes any set of $L$ integers between $1$ and $d$, and $N_q$ is the number of qubits.
Hence, $f_{\theta}$ is globally Lipschitz on $\mathbb{R}$ (for any density $n$) with a $\theta$-independent constant. 
\end{lemma}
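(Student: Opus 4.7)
The plan is to follow the Fourier-series viewpoint of ref.~\cite{schuld2021effect} and reduce the claim to term-wise differentiation of an explicit trigonometric expansion of $f_\theta$. First I would diagonalize the Hermitian generator as $H = V^\dagger \Sigma V$ with $\Sigma=\mathrm{diag}(\lambda_1,\ldots,\lambda_d)$, so that $S(n)=V^\dagger e^{-i n \Sigma}V$, and absorb each $V$ into the adjacent data-independent unitaries by setting $\tilde W^{(k)}$ accordingly. This brings the circuit into a canonical form in which every $n$-dependent block is diagonal in the computational basis, without changing $f_\theta$.

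Second, inserting the spectral resolution $e^{-in\Sigma}=\sum_{j=1}^{d}e^{-in\lambda_j}P_j$ (with $P_j=|j\rangle\langle j|$) at each occurrence of $S(n)$ and distributing the sums yields
$$
U(n)|0\rangle \;=\; \sum_{\boldsymbol{j}\in[d]^L} e^{-i n \Lambda_{\boldsymbol{j}}}\,|\phi_{\boldsymbol{j}}(\theta)\rangle,\qquad |\phi_{\boldsymbol{j}}(\theta)\rangle \;=\; \tilde W^{(L+1)}P_{j_L}\tilde W^{(L)}\cdots P_{j_1}\tilde W^{(1)}|0\rangle,
$$
with $\Lambda_{\boldsymbol{j}}=\lambda_{j_1}+\cdots+\lambda_{j_L}$. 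Substituting into $f_\theta(n)=\langle 0|U^\dagger(n)\hat C U(n)|0\rangle$ immediately produces the stated finite Fourier expansion
$$
f_\theta(n)=\sum_{\boldsymbol{k},\boldsymbol{j}\in[d]^L} e^{i(\Lambda_{\boldsymbol{k}}-\Lambda_{\boldsymbol{j}})n}\,a_{\boldsymbol{k},\boldsymbol{j}},\qquad a_{\boldsymbol{k},\boldsymbol{j}}=\langle \phi_{\boldsymbol{k}}(\theta)|\hat C|\phi_{\boldsymbol{j}}(\theta)\rangle.
$$

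Third, because the index set $[d]^L\times[d]^L$ is finite (cardinality $d^{2L}$), I can differentiate term by term to get
$$
f'_\theta(n)=\sum_{\boldsymbol{k},\boldsymbol{j}} i(\Lambda_{\boldsymbol{k}}-\Lambda_{\boldsymbol{j}})\,a_{\boldsymbol{k},\boldsymbol{j}}\,e^{i(\Lambda_{\boldsymbol{k}}-\Lambda_{\boldsymbol{j}})n}.
$$
The triangle inequality, together with $|e^{i(\cdot)n}|=1$ and the bound $|a_{\boldsymbol{k},\boldsymbol{j}}|\le \|\hat C\|_\infty\,\|\phi_{\boldsymbol{k}}\|\,\|\phi_{\boldsymbol{j}}\|\le N_q$ (since $\|\tilde W^{(k)}\|=\|P_j\|=1$ forces $\|\phi_{\boldsymbol{j}}\|\le 1$, and the spectrum of the total magnetization lies in $[-N_q,N_q]$), then yields
$$
\sup_{n\in\mathbb{R}}|f'_\theta(n)|\;\le\;N_q\sum_{\boldsymbol{k},\boldsymbol{j}\in[d]^L}|\Lambda_{\boldsymbol{k}}-\Lambda_{\boldsymbol{j}}|<\infty,
$$
which is finite and $\theta$-independent; the mean-value theorem then promotes this to global Lipschitz continuity of $f_\theta$ on $\mathbb{R}$.

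The main obstacle is bookkeeping rather than conceptual: one must verify that the diagonalization-and-absorption step is carried out consistently across all $L$ copies of $S(n)$ and that the resulting vectors $|\phi_{\boldsymbol{j}}(\theta)\rangle$ admit a uniform-in-$\theta$ operator-norm bound, so that the coefficient estimate $|a_{\boldsymbol{k},\boldsymbol{j}}|\le N_q$ holds uniformly. Once this is established the Lipschitz constant is explicit and, if desired, can be sharpened by exploiting additional structure of $\hat C$ (e.g.\ simultaneous diagonalizability in the computational basis, which could replace $N_q$ by a smaller effective norm on the relevant invariant subspace).
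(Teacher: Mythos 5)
Your proposal is correct and follows essentially the same route as the paper's proof: both express $f_\theta$ as the finite Fourier series of ref.~\cite{schuld2021effect}, differentiate term by term, bound the coefficients by $\|\hat C\| = N_q$, and invoke the mean-value theorem. Your derivation is slightly more self-contained in that you obtain the expansion explicitly via the spectral decomposition of $H$ and justify $|a_{\boldsymbol{k},\boldsymbol{j}}|\le N_q$ cleanly as a matrix element $\langle\phi_{\boldsymbol{k}}|\hat C|\phi_{\boldsymbol{j}}\rangle$ between sub-normalized vectors, whereas the paper cites the coefficient formula and asserts the bound more tersely.
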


\begin{proof}

We follow the ref.~\cite{schuld2019evaluating} and recall that:
\[
f(n)=\sum_{\boldsymbol{k}, \boldsymbol{j} \in[d]^L} e^{i\left(\Lambda_{\boldsymbol{k}}-\Lambda_{\boldsymbol{j}}\right) n} a_{\boldsymbol{k}, \boldsymbol{j}},\qquad
\]

where 

\begin{align}
a_{\boldsymbol{k}, \boldsymbol{j}}=\sum_{i, i^{\prime}}\left(W^*\right)_{1 k_1}^{(1)}\left(W^*\right)_{j_1 j_2}^{(2)} \ldots\left(W^*\right)_{j_L i}^{(L+1)} C_{i, i^{\prime}} \nonumber \\
\times W_{i^{\prime} j_L}^{(L+1)} \ldots W_{j_2 j_1}^{(2)} W_{j_1 1}^{(1)}. \nonumber
\end{align}

Since $\|\hat{C}\|=N_q$, $W^{k}$ are unitary, and $\sup_{n}|f_\theta (n)| \leq N_q$, consequently the elements $|a_{\boldsymbol{k}, \boldsymbol{j}}|\le \|\hat{C}\|= N_q$.
We can then write the derivative of our model, 

\[
f_{\theta}'(n)=\sum_{\boldsymbol{k}, \boldsymbol{j} \in[d]^L} i \left(\Lambda_{\boldsymbol{k}}-\Lambda_{\boldsymbol{j}} \right)e^{i\left(\Lambda_{\boldsymbol{k}}-\Lambda_{\boldsymbol{j}}\right) n} a_{\boldsymbol{k}, \boldsymbol{j}}.
\]

So $\sup_n|f'_{\theta}(n)|\le \sum_{\boldsymbol{k}, \boldsymbol{j} \in[d]^L}|\Lambda_{\boldsymbol{k}}-\Lambda_{\boldsymbol{j}}|\,|a_{\boldsymbol{k}, \boldsymbol{j}}|\le N_q\sum_{\boldsymbol{k}, \boldsymbol{j} \in[d]^L}|\Lambda_{\boldsymbol{k}}-\Lambda_{\boldsymbol{j}}|<\infty$ since the spectrum is finite. 
The mean‑value theorem implies the stated Lipschitz continuity of $f_{\theta}(n)$. 
\end{proof}

\bibliographystyle{apsrev4-1}
\bibliography{bibliography}

@article{Kerker1981,
  title={Efficient iteration scheme for self-consistent pseudopotential calculations},
  author={Kerker, G. P.},
  journal={Phys. Rev. B},
  volume={23},
  pages={3082--3084},
  year={1981}
}

@article{ko2023stochastic,
  title={Stochastic algorithms for self-consistent calculations of electronic structures},
  author={Ko, Taehee and Li, Xiantao},
  journal={Math. Comput.},
  volume={92},
  number={342},
  pages={1693--1728},
  year={2023}
}

@inproceedings{ko2024ground,
  title={Ground-State Energy and Related Properties Estimation in Quantum Chemistry with Linear Dependence on the Number of Atoms},
  author={Ko, Taehee and Li, Xiantao and Wang, Chunhao},
  booktitle={2024 IEEE International Conference on Quantum Computing and Engineering (QCE)},
  volume={1},
  pages={669--680},
  year={2024},
  organization={IEEE}
}

@article{baker2020density,
  title={Density functionals and Kohn-Sham potentials with minimal wavefunction preparations on a quantum computer},
  author={Baker, Thomas E and Poulin, David},
  journal={Phys. Rev. Res.},
  volume={2},
  number={4},
  pages={043238},
  year={2020},
  publisher={APS}
}

@article{gaitan2009density,
  title={Density functional theory and quantum computation},
  author={Gaitan, Frank and Nori, Franco},
  journal={Phys. Rev. B - Cond. Mat. and Mat. Phys.},
  volume={79},
  number={20},
  pages={205117},
  year={2009},
  publisher={APS}
}

@article{van2024end,
  title={End-to-end quantum simulation of a chemical system},
  author={van Dam, Wim and Liu, Hongbin and Low, Guang Hao and Paetznick, Adam and Paz, Andres and Silva, Marcus and Sundaram, Aarthi and Svore, Krysta and Troyer, Matthias},
  journal={arXiv preprint arXiv:2409.05835},
  year={2024}
}

@article{schuld2022quantum,
  title={Is quantum advantage the right goal for quantum machine learning?},
  author={Schuld, Maria and Killoran, Nathan},
  journal={PRX Q.},
  volume={3},
  number={3},
  pages={030101},
  year={2022},
  publisher={APS}
}

@article{gil2024relation,
  title={On the relation between trainability and dequantization of variational quantum learning models},
  author={Gil-Fuster, Elies and Gyurik, Casper and P{\'e}rez-Salinas, Adri{\'a}n and Dunjko, Vedran},
  journal={arXiv preprint arXiv:2406.07072},
  year={2024}
}

@article{koridon2025learning,
  title={Learning density functionals from noisy quantum data},
  author={Koridon, Emiel and Frohnert, Felix and Prehn, Eric and van Nieuwenburg, Evert and Tura, Jordi and Polla, Stefano},
  journal={Mach. Learn.: Sci. Technol.},
  volume={6},
  number={2},
  pages={025020},
  year={2025},
  publisher={IOP Publishing}
}

@article{sheridan2024enhancing,
  title={Enhancing density functional theory using the variational quantum eigensolver},
  author={Sheridan, Evan and Mineh, Lana and Santos, Raul A and Cubitt, Toby},
  journal={arXiv preprint arXiv:2402.18534},
  year={2024}
}

@article{kim2023evidence,
  title={Evidence for the utility of quantum computing before fault tolerance},
  author={Kim, Youngseok and Eddins, Andrew and Anand, Sajant and Wei, Ken Xuan and Van Den Berg, Ewout and Rosenblatt, Sami and Nayfeh, Hasan and Wu, Yantao and Zaletel, Michael and Temme, Kristan and others},
  journal={Nature},
  volume={618},
  number={7965},
  pages={500--505},
  year={2023},
  publisher={Nature Publishing Group UK London}
}

@article{torlai2020precise,
  title={Precise measurement of quantum observables with neural-network estimators},
  author={Torlai, Giacomo and Mazzola, Guglielmo and Carleo, Giuseppe and Mezzacapo, Antonio},
  journal={Physical Review Research},
  volume={2},
  number={2},
  pages={022060},
  year={2020},
  publisher={APS}
}

@article{sannia2024engineered,
  title={Engineered dissipation to mitigate barren plateaus},
  author={Sannia, Antonio and Tacchino, Francesco and Tavernelli, Ivano and Giorgi, Gian Luca and Zambrini, Roberta},
  journal={npj Quant. inf.},
  volume={10},
  number={1},
  pages={81},
  year={2024},
  publisher={Nature Publishing Group UK London}
}

@article{becke1988multicenter,
  title={A multicenter numerical integration scheme for polyatomic molecules},
  author={Becke, Axel D},
  journal={J. of Chem. Phys.},
  volume={88},
  number={4},
  pages={2547--2553},
  year={1988},
  publisher={American Institute of Physics}
}

@article{das2023accelerating,
  title={Accelerating self-consistent field iterations in Kohn-Sham density functional theory using a low-rank approximation of the dielectric matrix},
  author={Das, Sambit and Gavini, Vikram},
  journal={Phys. Rev. B},
  volume={107},
  number={12},
  pages={125133},
  year={2023},
  publisher={APS}
}

@article{kim2013understanding,
  title={Understanding and reducing errors in density functional calculations},
  author={Kim, Min-Cheol and Sim, Eunji and Burke, Kieron},
  journal={Phys. Rev. Lett.},
  volume={111},
  number={7},
  pages={073003},
  year={2013},
  publisher={APS}
}

@article{douroudgari2021atmospheric,
  title={Atmospheric reaction of hydrazine plus hydroxyl radical},
  author={Douroudgari, Hamed and Vahedpour, Morteza and Khouini, Fahime},
  journal={Sci. Rep.},
  volume={11},
  number={1},
  pages={13220},
  year={2021},
  publisher={Nature Publishing Group UK London}
}

@article{tuna2015assessment,
  title={Assessment of approximate coupled-cluster and algebraic-diagrammatic-construction methods for ground-and excited-state reaction paths and the conical-intersection seam of a retinal-chromophore model},
  author={Tuna, Deniz and Lefrancois, Daniel and Wolanski, {\L}ukasz and Gozem, Samer and Schapiro, Igor and Andruni{\'o}w, Tadeusz and Dreuw, Andreas and Olivucci, Massimo},
  journal={J. Chem. Theory Comput.},
  volume={11},
  number={12},
  pages={5758--5781},
  year={2015},
  publisher={ACS Publications}
}

@article{zhang2021low,
  title={Low-depth quantum state preparation},
  author={Zhang, Xiao-Ming and Yung, Man-Hong and Yuan, Xiao},
  journal={Phys. Rev. Res.},
  volume={3},
  number={4},
  pages={043200},
  year={2021},
  publisher={APS}
}

@article{nakaji2022approximate,
  title={Approximate amplitude encoding in shallow parameterized quantum circuits and its application to financial market indicators},
  author={Nakaji, Kouhei and Uno, Shumpei and Suzuki, Yohichi and Raymond, Rudy and Onodera, Tamiya and Tanaka, Tomoki and Tezuka, Hiroyuki and Mitsuda, Naoki and Yamamoto, Naoki},
  journal={Phys. Rev. Res.},
  volume={4},
  number={2},
  pages={023136},
  year={2022},
  publisher={APS}
}

@article{bravyi2024high,
  title={High-threshold and low-overhead fault-tolerant quantum memory},
  author={Bravyi, Sergey and Cross, Andrew W and Gambetta, Jay M and Maslov, Dmitri and Rall, Patrick and Yoder, Theodore J},
  journal={Nature},
  volume={627},
  number={8005},
  pages={778--782},
  year={2024},
  publisher={Nature Publishing Group UK London}
}

@article{romero2018strategies,
  title={Strategies for quantum computing molecular energies using the unitary coupled cluster ansatz},
  author={Romero, Jonathan and Babbush, Ryan and McClean, Jarrod R and Hempel, Cornelius and Love, Peter J and Aspuru-Guzik, Al{\'a}n},
  journal={Quant. Sci. and Technol.},
  volume={4},
  number={1},
  pages={014008},
  year={2018},
  publisher={IOP Publishing}
}

@article{cances2000convergence,
  title={On the convergence of SCF algorithmsfor the Hartree-Fock equations},
  author={Cances, Eric and Le Bris, Claude},
  journal={ESAIM: M2AN},
  volume={34},
  number={4},
  pages={749--774},
  year={2000},
  publisher={EDP Sciences}
}

@article{bai2020optimal,
  title={Optimal convergence rate of self-consistent field iteration for solving eigenvector-dependent nonlinear eigenvalue problems},
  author={Bai, Zhaojun and Li, Ren-Cang and Lu, Ding},
  journal={arXiv preprint arXiv:2009.09022},
  year={2020}
}

@article{somogyi2024method,
  title={Method for noise-induced regularization in quantum neural networks},
  author={Somogyi, Wilfrid and Pankovets, Ekaterina and Kuzmin, Viacheslav and Melnikov, Alexey},
  journal={arXiv preprint arXiv:2410.19921},
  year={2024}
}

@article{Saunders1973,
  title={A “Level--Shifting” method for converging closed shell Hartree--Fock wave functions},
  author={Saunders, Victor R and Hillier, Ian H},
  journal={Int. J. Quantum Chem.},
  volume={7},
  number={4},
  pages={699--705},
  year={1973},
  publisher={Wiley Online Library}
}

@article{Pulay1980,
  title={Convergence acceleration of iterative sequences. The case of SCF iteration},
  author={Pulay, P.},
  journal={Chem. Phys. Lett.},
  volume={73},
  pages={393--398},
  year={1980}
}

@article{Kudin2002,
  title={A black-box self-consistent field convergence algorithm: One step closer},
  author={Kudin, Konstantin N. and Scuseria, Gustavo E. and Cances, Eric},
  journal={J. Chem. Phys.},
  volume={116},
  pages={8255--8261},
  year={2002}
}

@software{qex2025pasqal,
  title = {QEX: Quantum-Enhanced Neural Exchange-Correlation Functionals.},
  author={Igor O. Sokolov and Shaheen Acheche},
  version={0.1.0},
  year = {2025}
}

@incollection{tozer1997calculation,
  title={On the Calculation of Energies and Optimised Geometries from Exchange-Correlation Potentials},
  author={Tozer, David J and Handy, Nicholas C},
  booktitle={Recent Advances In Density Functional Methods: (Part II)},
  pages={1--14},
  year={1997},
  publisher={World Scientific}
}

@article{tozer1996exchange,
  title={Exchange-correlation potentials},
  author={Tozer, David J and Ingamells, Victoria E and Handy, Nicholas C},
  journal={The Journal of chemical physics},
  volume={105},
  number={20},
  pages={9200--9213},
  year={1996},
  publisher={American Institute of Physics}
}

@article{luise2025accurate,
  title={Accurate and scalable exchange-correlation with deep learning},
  author={Luise, Giulia and Huang, Chin-Wei and Vogels, Thijs and Kooi, Derk P and Ehlert, Sebastian and Lanius, Stephanie and Giesbertz, Klaas JH and Karton, Amir and Gunceler, Deniz and Stanley, Megan and others},
  journal={arXiv preprint arXiv:2506.14665},
  year={2025}
}

@article{liu2014convergence,
  title={On the convergence of the self-consistent field iteration in Kohn--Sham density functional theory},
  author={Liu, Xin and Wang, Xiao and Wen, Zaiwen and Yuan, Yaxiang},
  journal={SIAM J. Matrix Anal. Appl.},
  volume={35},
  number={2},
  pages={546--558},
  year={2014},
  publisher={SIAM}
}

@article{liu2015analysis,
  title={On the analysis of the discretized Kohn--Sham density functional theory},
  author={Liu, Xin and Wen, Zaiwen and Wang, Xiao and Ulbrich, Michael and Yuan, Yaxiang},
  journal={SIAM J. Numer. Anal},
  volume={53},
  number={4},
  pages={1758--1785},
  year={2015},
  publisher={SIAM}
}

@article{woods2019computing,
  title={Computing the self-consistent field in Kohn--Sham density functional theory},
  author={Woods, Nick D and Payne, MC and Hasnip, PJ},
  journal={J. Condens. Matter Phys.},
  volume={31},
  number={45},
  pages={453001},
  year={2019},
  publisher={IOP Publishing}
}

@article{qadence2024pasqal,
  title = {Qadence: a differentiable interface for digital-analog programs.},
  author={Dominik Seitz and Niklas Heim and João P. Moutinho and Roland Guichard and Vytautas Abramavicius and Aleksander Wennersteen and Gert-Jan Both and Anton Quelle and Caroline de Groot and Gergana V. Velikova and Vincent E. Elfving and Mario Dagrada},
  journal={arXiv:2401.09915},
  year = {2024}
}

@article{cances2003quadratically,
  title={Quadratically convergent algorithm for fractional occupation numbers in density functional theory},
  author={Cances, Eric and Kudin, Konstantin N and Scuseria, Gustavo E and Turinici, Gabriel},
  journal={J. Chem. Phys.},
  volume={118},
  number={12},
  pages={5364--5368},
  year={2003},
  publisher={American Institute of Physics}
}

@article{casares2023graddft,
  title={GradDFT. A software library for machine learning enhanced density functional theory},
  author={M Casares, Pablo A and Baker, Jack S and Medvidovi{\'c}, Matija and Reis, Roberto dos and Arrazola, Juan Miguel},
  journal={J. Chem. Phys.},
  volume={160},
  number={6},
  year={2024},
  publisher={AIP Publishing}
}

@article{mcardle2019variational,
  title={Variational ansatz-based quantum simulation of imaginary time evolution},
  author={McArdle, Sam and Jones, Tyson and Endo, Suguru and Li, Ying and Benjamin, Simon C and Yuan, Xiao},
  journal={npj Quantum Inf.},
  volume={5},
  number={1},
  pages={75},
  year={2019},
  publisher={Nature Publishing Group UK London}
}

@article{park2019circuit,
  title={Circuit-based quantum random access memory for classical data},
  author={Park, Daniel K and Petruccione, Francesco and Rhee, June-Koo Kevin},
  journal={Sci. Rep.},
  volume={9},
  number={1},
  pages={3949},
  year={2019},
  publisher={Nature Publishing Group UK London}
}

@article{bulik2015can,
  title={Can single-reference coupled cluster theory describe static correlation?},
  author={Bulik, Ireneusz W and Henderson, Thomas M and Scuseria, Gustavo E},
  journal={J. Chem. Theory Comput.},
  volume={11},
  number={7},
  pages={3171--3179},
  year={2015},
  publisher={ACS Publications}
}

@article{sokolov2020quantum,
  title={Quantum orbital-optimized unitary coupled cluster methods in the strongly correlated regime: Can quantum algorithms outperform their classical equivalents?},
  author={Sokolov, Igor O and Barkoutsos, Panagiotis Kl and Ollitrault, Pauline J and Greenberg, Donny and Rice, Julia and Pistoia, Marco and Tavernelli, Ivano},
  journal={J. Chem. Phys.},
  volume={152},
  number={12},
  year={2020},
  publisher={AIP Publishing}
}

@article{wecker2015progress,
  title={Progress towards practical quantum variational algorithms},
  author={Wecker, Dave and Hastings, Matthew B and Troyer, Matthias},
  journal={Phys. Rev. A},
  volume={92},
  number={4},
  pages={042303},
  year={2015},
  publisher={APS}
}

@article{Lebedev-USSR-1975,
title = {Values of the nodes and weights of ninth to seventeenth order gauss-markov quadrature formulae invariant under the octahedron group with inversion},
journal = {USSR Computational Mathematics and Mathematical Physics},
volume = {15},
number = {1},
pages = {44-51},
year = {1975},
issn = {0041-5553},
author = {V.I. Lebedev}
}

@article{peruzzo2014variational,
  title={A variational eigenvalue solver on a photonic quantum processor},
  author={Peruzzo, Alberto and McClean, Jarrod and Shadbolt, Peter and Yung, Man-Hong and Zhou, Xiao-Qi and Love, Peter J and Aspuru-Guzik, Al{\'a}n and O’brien, Jeremy L},
  journal={Nat. Commun.},
  volume={5},
  number={1},
  pages={4213},
  year={2014},
  publisher={Nature Publishing Group UK London}
}

@article{zhu1997algorithm,
  title={Algorithm 778: L-BFGS-B: Fortran subroutines for large-scale bound-constrained optimization},
  author={Zhu, Ciyou and Byrd, Richard H and Lu, Peihuang and Nocedal, Jorge},
  journal={ACM Transactions on mathematical software (TOMS)},
  volume={23},
  number={4},
  pages={550--560},
  year={1997},
  publisher={ACM New York, NY, USA}
}

@article{kingma2014adam,
  title={Adam: A method for stochastic optimization},
  author={Kingma, Diederik P and Ba, Jimmy},
  journal={arXiv preprint arXiv:1412.6980},
  year={2014}
}

@article{paszke2017automatic,
  title={Automatic differentiation in PyTorch},
  author={Paszke, Adam and Gross, Sam and Chintala, Soumith and Chanan, Gregory and Yang, Edward and DeVito, Zachary and Lin, Zeming and Desmaison, Alban and Antiga, Luca and Lerer, Adam},
  year={2017}
}

@article{hornik1989multilayer,
  title={Multilayer feedforward networks are universal approximators},
  author={Hornik, Kurt and Stinchcombe, Maxwell and White, Halbert},
  journal={Neural Netw.},
  volume={2},
  number={5},
  pages={359--366},
  year={1989},
  publisher={Elsevier}
}

@article{cohen2008fractional,
  title={Fractional spins and static correlation error in density functional theory},
  author={Cohen, Aron J and Mori-S{\'a}nchez, Paula and Yang, Weitao},
   journal={J. Chem. Phys.},
  volume={129},
  number={12},
  year={2008},
  publisher={AIP Publishing}
}

@article{cohen2008insights,
  title={Insights into current limitations of density functional theory},
  author={Cohen, Aron J and Mori-S{\'a}nchez, Paula and Yang, Weitao},
  journal={Science},
  volume={321},
  number={5890},
  pages={792--794},
  year={2008},
  publisher={American Association for the Advancement of Science}
}

@article{becke2014perspective,
  title={Perspective: Fifty years of density-functional theory in chemical physics},
  author={Becke, Axel D},
 journal={J. Chem. Phys.},
  volume={140},
  number={18},
  year={2014},
  publisher={AIP Publishing}
}

@book{helgaker2013molecular,
  title={Molecular electronic-structure theory},
  author={Helgaker, Trygve and Jorgensen, Poul and Olsen, Jeppe},
  year={2013},
  publisher={John Wiley \& Sons}
}

@article{kalita2022well,
  title={How well does Kohn--Sham regularizer work for weakly correlated systems?},
  author={Kalita, Bhupalee and Pederson, Ryan and Chen, Jielun and Li, Li and Burke, Kieron},
  journal={J. Phys. Chem. Lett.},
  volume={13},
  number={11},
  pages={2540--2547},
  year={2022},
  publisher={ACS Publications}
}

@article{senjean2023toward,
  title={Toward density functional theory on quantum computers?},
  author={Senjean, Bruno and Yalouz, Saad and Sauban{\`e}re, Matthieu},
  journal={SciPost Physics},
  volume={14},
  number={3},
  pages={055},
  year={2023}
}

@article{schade2022parallel,
  title={Parallel quantum chemistry on noisy intermediate-scale quantum computers},
  author={Schade, Robert and Bauer, Carsten and Tamoev, Konstantin and Mazur, Lukas and Plessl, Christian and K{\"u}hne, Thomas D},
  journal={Phys. Rev. Res.},
  volume={4},
  number={3},
  pages={033160},
  year={2022},
  publisher={APS}
}

@article{rossmannek2021quantum,
  title={Quantum HF/DFT-embedding algorithms for electronic structure calculations: Scaling up to complex molecular systems},
  author={Rossmannek, Max and Barkoutsos, Panagiotis Kl and Ollitrault, Pauline J and Tavernelli, Ivano},
   journal={J. Chem. Phys.},
  volume={154},
  number={11},
  year={2021},
  publisher={AIP Publishing}
}

@article{beer2020training,
  title={Training deep quantum neural networks},
  author={Beer, Kerstin and Bondarenko, Dmytro and Farrelly, Terry and Osborne, Tobias J and Salzmann, Robert and Scheiermann, Daniel and Wolf, Ramona},
  journal={Nat. Commun.},
  volume={11},
  number={1},
  pages={808},
  year={2020},
  publisher={Nature Publishing Group UK London}
}

@article{sack2022avoiding,
  title={Avoiding barren plateaus using classical shadows},
  author={Sack, Stefan H and Medina, Raimel A and Michailidis, Alexios A and Kueng, Richard and Serbyn, Maksym},
  journal={PRX Quantum},
  volume={3},
  number={2},
  pages={020365},
  year={2022},
  publisher={APS}
}

@article{abbas2023quantum,
  title={On quantum backpropagation, information reuse, and cheating measurement collapse},
  author={Abbas, Amira and King, Robbie and Huang, Hsin-Yuan and Huggins, William J and Movassagh, Ramis and Gilboa, Dar and McClean, Jarrod R},
  journal={arXiv preprint arXiv:2305.13362},
  year={2023}
}

@article{schatzki2022theoretical,
  title={Theoretical guarantees for permutation-equivariant quantum neural networks},
  author={Schatzki, Louis and Larocca, Martin and Sauvage, Frederic and Cerezo, Marco},
  journal={arXiv preprint arXiv:2210.09974},
  year={2022}
}

@article{wu2023construct,
  title={Construct exchange-correlation functional via machine learning},
  author={Wu, Jiang and Pun, Sai-Mang and Zheng, Xiao and Chen, GuanHua},
  journal={J. Chem. Phys.},
  volume={159},
  number={9},
  year={2023},
  publisher={AIP Publishing}
}

@article{schuld2019evaluating,
  title={Evaluating analytic gradients on quantum hardware},
  author={Schuld, Maria and Bergholm, Ville and Gogolin, Christian and Izaac, Josh and Killoran, Nathan},
  journal={Phys. Rev. A},
  volume={99},
  number={3},
  pages={032331},
  year={2019},
  publisher={APS}
}

@article{kyriienko2021generalized,
  title={Generalized quantum circuit differentiation rules},
  author={Kyriienko, Oleksandr and Elfving, Vincent E},
  journal={Phys. Rev. A},
  volume={104},
  number={5},
  pages={052417},
  year={2021},
  publisher={APS}
}

@article{miller2022hardware,
  title={Hardware-Tailored Diagonalization Circuits},
  author={Miller, Daniel and Fischer, Laurin E and Sokolov, Igor O and Barkoutsos, Panagiotis Kl and Tavernelli, Ivano},
  journal={arXiv preprint arXiv:2203.03646},
  year={2022}
}

@article{cerezo2021variational,
  title={Variational quantum algorithms},
  author={Cerezo, Marco and Arrasmith, Andrew and Babbush, Ryan and Benjamin, Simon C and Endo, Suguru and Fujii, Keisuke and McClean, Jarrod R and Mitarai, Kosuke and Yuan, Xiao and Cincio, Lukasz and others},
  journal={Nat. Rev. Phys.},
  volume={3},
  number={9},
  pages={625--644},
  year={2021},
  publisher={Nature Publishing Group UK London}
}

@article{huang2020predicting,
  title={Predicting many properties of a quantum system from very few measurements},
  author={Huang, Hsin-Yuan and Kueng, Richard and Preskill, John},
  journal={Nat. Phys.},
  volume={16},
  number={10},
  pages={1050--1057},
  year={2020},
  publisher={Nature Publishing Group UK London}
}

@article{larocca2022diagnosing,
  title={Diagnosing barren plateaus with tools from quantum optimal control},
  author={Larocca, Martin and Czarnik, Piotr and Sharma, Kunal and Muraleedharan, Gopikrishnan and Coles, Patrick J and Cerezo, Marco},
  journal={Quantum},
  volume={6},
  pages={824},
  year={2022},
  publisher={Verein zur F{\"o}rderung des Open Access Publizierens in den Quantenwissenschaften}
}

@article{huang2022quantum,
  title={Quantum advantage in learning from experiments},
  author={Huang, Hsin-Yuan and Broughton, Michael and Cotler, Jordan and Chen, Sitan and Li, Jerry and Mohseni, Masoud and Neven, Hartmut and Babbush, Ryan and Kueng, Richard and Preskill, John and others},
  journal={Science},
  volume={376},
  number={6598},
  pages={1182--1186},
  year={2022},
  publisher={American Association for the Advancement of Science}
}

@article{cong2019quantum,
  title={Quantum convolutional neural networks},
  author={Cong, Iris and Choi, Soonwon and Lukin, Mikhail D},
  journal={Nat. Phys.},
  volume={15},
  number={12},
  pages={1273--1278},
  year={2019},
  publisher={Nature Publishing Group UK London}
}

@article{abbas2021power,
  title={The power of quantum neural networks},
  author={Abbas, Amira and Sutter, David and Zoufal, Christa and Lucchi, Aur{\'e}lien and Figalli, Alessio and Woerner, Stefan},
  journal={Nat. Comput. Sci.},
  volume={1},
  number={6},
  pages={403--409},
  year={2021},
  publisher={Nature Publishing Group US New York}
}

@article{schmidt2019machine,
  title={Machine learning the physical nonlocal exchange--correlation functional of density-functional theory},
  author={Schmidt, Jonathan and Benavides-Riveros, Carlos L and Marques, Miguel AL},
  journal={J. Phys. Chem. Lett.},
  volume={10},
  number={20},
  pages={6425--6431},
  year={2019},
  publisher={ACS Publications}
}

@article{mitarai2018quantum,
  title={Quantum circuit learning},
  author={Mitarai, Kosuke and Negoro, Makoto and Kitagawa, Masahiro and Fujii, Keisuke},
  journal={Phys. Rev. A},
  volume={98},
  number={3},
  pages={032309},
  year={2018},
  publisher={APS}
}

@article{dasgupta2017standard,
  title={Standard grids for high-precision integration of modern density functionals: SG-2 and SG-3},
  author={Dasgupta, Saswata and Herbert, John M},
  journal={J. Comput. Chem.},
  volume={38},
  number={12},
  pages={869--882},
  year={2017},
  publisher={Wiley Online Library}
}

@article{perez2021one,
  title={One qubit as a universal approximant},
  author={P{\'e}rez-Salinas, Adri{\'a}n and L{\'o}pez-N{\'u}{\~n}ez, David and Garc{\'\i}a-S{\'a}ez, Artur and Forn-D{\'\i}az, Pol and Latorre, Jos{\'e} I},
  journal={Phys. Rev. A},
  volume={104},
  number={1},
  pages={012405},
  year={2021},
  publisher={APS}
}

@article{perez2020data,
  title={Data re-uploading for a universal quantum classifier},
  author={P{\'e}rez-Salinas, Adri{\'a}n and Cervera-Lierta, Alba and Gil-Fuster, Elies and Latorre, Jos{\'e} I},
  journal={Quantum},
  volume={4},
  pages={226},
  year={2020},
  publisher={Verein zur F{\"o}rderung des Open Access Publizierens in den Quantenwissenschaften}
}

@article{kyriienko2021solving,
  title={Solving nonlinear differential equations with differentiable quantum circuits},
  author={Kyriienko, Oleksandr and Paine, Annie E and Elfving, Vincent E},
  journal={Phys. Rev. A},
  volume={103},
  number={5},
  pages={052416},
  year={2021},
  publisher={APS}
}

@article{ma2022evolving,
  title={Evolving symbolic density functionals},
  author={Ma, He and Narayanaswamy, Arunachalam and Riley, Patrick and Li, Li},
  journal={Sci. Adv.},
  volume={8},
  number={36},
  pages={eabq0279},
  year={2022},
  publisher={American Association for the Advancement of Science}
}

@article{nagai2022machine,
  title={Machine-learning-based exchange correlation functional with physical asymptotic constraints},
  author={Nagai, Ryo and Akashi, Ryosuke and Sugino, Osamu},
  journal={Phys. Rev. Res.},
  volume={4},
  number={1},
  pages={013106},
  year={2022},
  publisher={APS}
}

@article{Medvedev-Science-2017,
author = {Medvedev, Michael G. and Bushmarinov, Ivan S. and Sun, Jianwei and Perdew, John P. and Lyssenko, Konstantin A.},
title = {Density functional theory is straying from the path toward the exact functional},
journal = {Science},
volume = {355},
number = {6320},
pages = {49-52},
year = {2017}}

@article{Cuierrier-JCP-2022,
	author = {Cuierrier, Etienne and Roy, Pierre-Olivier and Wang, Rodrigo and Ernzerhof, Matthias},
	title = {The fourth-order expansion of the exchange hole and neural networks to construct exchange-correlation functionals},
	year = {2022},
	journal = {J. Chem. Phys.},
	volume = {157},
	number = {17}}

@article{Riemelmoser-JCTC-2023,
author = {Riemelmoser, Stefan and Verdi, Carla and Kaltak, Merzuk and Kresse, Georg},
title = {Machine Learning Density Functionals from the Random-Phase Approximation},
journal = {J. Chem. Theory Comput.},
volume = {19},
number = {20},
pages = {7287-7299},
year = {2023}}

@article{ksr,
  title = {Kohn-Sham Equations as Regularizer: Building Prior Knowledge into Machine-Learned Physics},
  author = {Li, Li and Hoyer, Stephan and Pederson, Ryan and Sun, Ruoxi and Cubuk, Ekin D. and Riley, Patrick and Burke, Kieron},
  journal = {Phys. Rev. Lett.},
  volume = {126},
  issue = {3},
  pages = {036401},
  numpages = {7},
  year = {2021},
  month = {Jan},
  publisher = {American Physical Society},
}

@article{Kirkpatrick-Sci-2021,
author = {James Kirkpatrick  and Brendan McMorrow  and David H. P. Turban  and Alexander L. Gaunt  and James S. Spencer  and Alexander G. D. G. Matthews  and Annette Obika  and Louis Thiry  and Meire Fortunato  and David Pfau  and Lara Román Castellanos  and Stig Petersen  and Alexander W. R. Nelson  and Pushmeet Kohli  and Paula Mori-Sánchez  and Demis Hassabis  and Aron J. Cohen },
title = {Pushing the frontiers of density functionals by solving the fractional electron problem},
journal = {Science},
volume = {374},
number = {6573},
pages = {1385-1389},
year = {2021}}

@article{Dick-NatCom-2020,
	author = {Dick, Sebastian and Fernandez-Serra, Marivi},
	title = {Machine learning accurate exchange and correlation functionals of the electronic density},
	year = {2020},
	journal = {Nat. Commun.},
	volume = {11},
	number = {1}
}

@article{Kalita-ACR-2021,
author = {Kalita, Bhupalee and Li, Li and McCarty, Ryan J. and Burke, Kieron},
title = {Learning to Approximate Density Functionals},
journal = {Acc. Chem. Res.},
volume = {54},
number = {4},
pages = {818-826},
year = {2021}}

@article{Nagai-npj-2020,
	author = {Nagai, Ryo and Akashi, Ryosuke and Sugino, Osamu},
	title = {Completing density functional theory by machine learning hidden messages from molecules},
	year = {2020},
	journal = {npj Comput. Mater.},
	volume = {6},
	number = {1}
}

@article{Brockherde-NatCom-2017, 
author = {Brockherde, Felix and Vogt, Leslie and Li, Li and Tuckerman, Mark E. and Burke, Kieron and M\"{u}ller, Klaus-Robert},
title = {Density functional theory is straying from the path toward the exact functional},
journal = {Nat. Commun.},
volume = {8},
pages = {872},
year = {2017}}

@article{Rohrdanz-LR-JCP-2009,
    author = {Rohrdanz, Mary A. and Martins, Katie M. and Herbert, John M.},
    title = "{A long-range-corrected density functional that performs well for both ground-state properties and time-dependent density functional theory excitation energies, including charge-transfer excited states}",
    journal = {J. Chem. Phys.},
    volume = {130},
    number = {5},
    pages = {054112},
    year = {2009}}

@article{Goerigk-zoo-PCCP-2017,
author ={Goerigk, Lars and Hansen, Andreas and Bauer, Christoph and Ehrlich, Stephan and Najibi, Asim and Grimme, Stefan},
title  ={A look at the density functional theory zoo with the advanced GMTKN55 database for general main group thermochemistry{,} kinetics and noncovalent interactions},
journal  ={Phys. Chem. Chem. Phys.},
year  ={2017},
volume  ={19},
issue  ={48},
pages  ={32184-32215}}

@article{Caldeweyher-D4-JCP-2019,
    author = {Caldeweyher, Eike and Ehlert, Sebastian and Hansen, Andreas and Neugebauer, Hagen and Spicher, Sebastian and Bannwarth, Christoph and Grimme, Stefan},
    title = "{A generally applicable atomic-charge dependent London dispersion correction}",
    journal = {J. Chem. Phys.},
    volume = {150},
    number = {15},
    pages = {154122},
    year = {2019}
}

@article{Bursch-ACIE-2022,
author = {Bursch, Markus and Mewes, Jan-Michael and Hansen, Andreas and Grimme, Stefan},
title = {Best-Practice DFT Protocols for Basic Molecular Computational Chemistry**},
journal = {Angew. Chem. Int. Ed.},
volume = {61},
number = {42},
pages = {e202205735},
year = {2022}
}

@Article{Mohr-PCCP-2015,
author ={Mohr, Stephan and Ratcliff, Laura E. and Genovese, Luigi and Caliste, Damien and Boulanger, Paul and Goedecker, Stefan and Deutsch, Thierry},
title  ={Accurate and efficient linear scaling DFT calculations with universal applicability},
journal  ={Phys. Chem. Chem. Phys.},
year  ={2015},
volume  ={17},
issue  ={47},
pages  ={31360-31370}}

@article{Verma-TC-2020,
title = {Status and Challenges of Density Functional Theory},
journal = {Trends Chem.},
volume = {2},
number = {4},
pages = {302-318},
year = {2020},
note = {Special Issue - Laying Groundwork for the Future},
issn = {2589-5974},
author = {Pragya Verma and Donald G. Truhlar}}

@article{Kepp-Science-2017,
author = {Kasper P. Kepp },
title = {Comment on “Density functional theory is straying from the path toward the exact functional”},
journal = {Science},
volume = {356},
number = {6337},
pages = {496-496},
year = {2017}}

@article{Cohen-CR-2012,
author = {Cohen, Aron J. and Mori-Sánchez, Paula and Yang, Weitao},
title = {Challenges for Density Functional Theory},
journal = {Chem. Rev.},
volume = {112},
number = {1},
pages = {289-320},
year = {2012}}

@article{Jones-DFT-RMP-2015,
  title = {Density functional theory: Its origins, rise to prominence, and future},
  author = {Jones, R. O.},
  journal = {Rev. Mod. Phys.},
  volume = {87},
  issue = {3},
  year = {2015},
  pages = {897--923}
}

@article{Becke-DFT-JCP-2014,
    author = {Becke, Axel D.},
    title = "{Perspective: Fifty years of density-functional theory in chemical physics}",
    journal = {J. Chem. Phys.},
    volume = {140},
    number = {18},
    pages = {18A301},
    year = {2014},
    month = {04}
}

@article{Mardirossian-MP-2017,
author = {Narbe Mardirossian and Martin Head-Gordon},
title = {Thirty years of density functional theory in computational chemistry: an overview and extensive assessment of 200 density functionals},
journal = {Molecular Physics},
volume = {115},
number = {19},
pages = {2315-2372},
year = {2017}
}

@book{Engel-Dreizler-DFT,
  title={Density Functional Theory: An Advanced Course},
  author={Engel, E. and Dreizler, R. M.},
  isbn={9783642140907},
  series={Theoretical and Mathematical Physics},
  year={2011},
  publisher={Springer Berlin Heidelberg}
}

@book{Koch-Holthausen-DFT,
  title={A Chemist's Guide to Density Functional Theory},
  author={Koch, W. and Holthausen, M. C.},
  isbn={9783527802814},
  year={2015},
  publisher={Wiley}
}

@book{Parr-Yang-DFT,
  title={Density-Functional Theory of Atoms and Molecules},
  author={Parr, R. G. and Yang, Weitao},
  isbn={9780195357738},
  series={International Series of Monographs on Chemistry},
  year={1994},
  publisher={Oxford University Press}
}

@software{jax2018github,
  author = {James Bradbury and Roy Frostig and Peter Hawkins and Matthew James Johnson and Chris Leary and Dougal Maclaurin and George Necula and Adam Paszke and Jake Vander{P}las and Skye Wanderman-{M}ilne and Qiao Zhang},
  title = {{JAX}: composable transformations of {P}ython+{N}um{P}y programs},
  version = {0.3.13},
  year = {2018},
}

@article{ks-theorem,
  title = {Inhomogeneous Electron Gas},
  author = {Hohenberg, P. and Kohn, W.},
  journal = {Phys. Rev.},
  volume = {136},
  issue = {3B},
  pages = {B864--B871},
  numpages = {0},
  year = {1964},
  month = {Nov},
  publisher = {American Physical Society},
}

@article{frac-occ,
    author = {Chai, Jeng-Da},
    title = "{Density functional theory with fractional orbital occupations}",
    journal = {J. Chem. Phys.},
    volume = {136},
    number = {15},
    pages = {154104},
    year = {2012},
    month = {04},
    abstract = "{In contrast to the original Kohn-Sham (KS) formalism, we propose a density functional theory (DFT) with fractional orbital occupations for the study of ground states of many-electron systems, wherein strong static correlation is shown to be described. Even at the simplest level represented by the local density approximation (LDA), our resulting DFT-LDA is shown to improve upon KS-LDA for multi-reference systems, such as dissociation of H2 and N2, and twisted ethylene, while performing similar to KS-LDA for single-reference systems, such as reaction energies and equilibrium geometries. Because of its computational efficiency (similar to KS-LDA), this DFT-LDA is applied to the study of the singlet-triplet energy gaps (ST gaps) of acenes, which are “challenging problems” for conventional electronic structure methods due to the presence of strong static correlation effects. Our calculated ST gaps are in good agreement with the existing experimental and high-level ab initio data. The ST gaps are shown to decrease monotonically with the increase of chain length, and become vanishingly small (within 0.1 kcal/mol) in the limit of an infinitely large polyacene. In addition, based on our calculated active orbital occupation numbers, the ground states for large acenes are shown to be polyradical singlets.}",
    issn = {0021-9606},
}

@article{kasim2021learning,
  title={Learning the exchange-correlation functional from nature with fully differentiable density functional theory},
  author={Kasim, Muhammad F and Vinko, Sam M},
  journal={Phys. Rev. Lett.},
  volume={127},
  number={12},
  pages={126403},
  year={2021},
  publisher={APS}
}

@article{bartlett2007coupled,
  title={Coupled-cluster theory in quantum chemistry},
  author={Bartlett, Rodney J and Musia{\l}, Monika},
  journal={Rev. Mod. Phys.},
  volume={79},
  number={1},
  pages={291},
  year={2007},
  publisher={APS}
}

@article{sun2018pyscf,
  title={PySCF: the Python-based simulations of chemistry framework},
  author={Sun, Qiming and Berkelbach, Timothy C and Blunt, Nick S and Booth, George H and Guo, Sheng and Li, Zhendong and Liu, Junzi and McClain, James D and Sayfutyarova, Elvira R and Sharma, Sandeep and others},
  journal={Wiley Interdiscip. Rev. Comput. Mol. Sci.},
  volume={8},
  number={1},
  pages={e1340},
  year={2018},
  publisher={Wiley Online Library}
}

@article{schuld2021effect,
  title={Effect of data encoding on the expressive power of variational quantum-machine-learning models},
  author={Schuld, Maria and Sweke, Ryan and Meyer, Johannes Jakob},
  journal={Phys. Rev. A},
  volume={103},
  number={3},
  pages={032430},
  year={2021},
  publisher={APS}
}

@article{landman2022classically,
  title={Classically Approximating Variational Quantum Machine Learning with Random Fourier Features},
  author={Landman, Jonas and Thabet, Slimane and Dalyac, Constantin and Mhiri, Hela and Kashefi, Elham},
  journal={arXiv preprint arXiv:2210.13200},
  year={2022}
}

@article{cerezo2021cost,
  title={Cost function dependent barren plateaus in shallow parametrized quantum circuits},
  author={Cerezo, Marco and Sone, Akira and Volkoff, Tyler and Cincio, Lukasz and Coles, Patrick J},
  journal={Nat. Comm.},
  volume={12},
  number={1},
  pages={1791},
  year={2021},
  publisher={Nature Publishing Group UK London}
}

@article{zhang2022differentiable,
  title={Differentiable quantum chemistry with PySCF for molecules and materials at the mean-field level and beyond},
  author={Zhang, Xing and Chan, Garnet Kin-Lic},
  journal={J. Chem. Phys.},
  volume={157},
  number={20},
  pages={204801},
  year={2022},
  publisher={AIP Publishing LLC}
}

@article{xc-mlp-1d,
  title={Machine learning the physical nonlocal exchange--correlation functional of density-functional theory},
  author={Schmidt, Jonathan and Benavides-Riveros, Carlos L and Marques, Miguel AL},
  journal={J. Phys. Chem. Lett.},
  volume={10},
  number={20},
  pages={6425--6431},
  year={2019},
  publisher={ACS Publications},
}

@misc{patent,
  title={{A patent application for the method described in this manuscript has been submitted by PASQAL SAS.}},
  year = {2024}
}

\end{document}